\documentclass[11pt,letterpaper, english]{article}
\usepackage{amsmath,amsfonts,amssymb,amsthm}
\usepackage{microtype}
\usepackage[utf8]{inputenc}
\usepackage{float}
\usepackage{setspace}
\usepackage{rotating}
\usepackage{array}
\usepackage{multibib}
\usepackage[toc,page]{appendix}

\usepackage[flushleft]{threeparttable}
\usepackage{rotating}
\usepackage[subnum]{cases}
\usepackage{multirow, makecell}

\theoremstyle{plain}
\theoremstyle{definition}
\newtheorem{assumption}{Assumption}
\newtheorem{theorem}{Theorem}
\newtheorem{corollary}{Corollary}
\newtheorem{lemma}{Lemma}
\newtheorem{definition}{Definition}

\usepackage{xr}
\usepackage{color}
\usepackage[sectionbib,round]{natbib}
\setlength{\bibsep}{0.0pt}
\setlength{\footskip}{30pt}

\onehalfspacing
\usepackage{times}
\usepackage[small,compact]{titlesec} 
\usepackage{abstract}

\usepackage[hang,flushmargin]{footmisc}
\usepackage{fullpage} 
\usepackage{hyperref}
\usepackage[compact]{titlesec}
\usepackage[section]{placeins}
\usepackage{footnote}
\usepackage{epstopdf}
\usepackage{placeins}
\usepackage[toc,page]{appendix}
\usepackage{blkarray}
\usepackage{stackengine}
\usepackage{accents}
\usepackage{enumitem}
\usepackage{booktabs}  
\usepackage{nicefrac}
\usepackage{multirow}
\usepackage{subcaption}

\usepackage{longtable}
\usepackage{graphicx}
\usepackage{epstopdf}
\DeclareGraphicsRule{.tif}{png}{.png}{`convert #1 `dirname #1`/`basename #1 .tif`.png}
\usepackage{bbm}
\usepackage{dsfont}
\usepackage{comment}
\usepackage{pgfplots}
\pgfplotsset{compat=1.15}

\DeclareMathOperator{\E}{\mathbb{E}}

\DeclareMathOperator{\X}{\mathbf{X}}

\DeclareMathOperator{\PP}{\mathcal{P}}
\DeclareMathOperator{\Q}{\mathbf{Q}}
\DeclareMathOperator{\FF}{\mathcal{F}}
\DeclareMathOperator{\J}{\mathbf{J}}
\DeclareMathOperator{\T}{\mathbf{T}}
\DeclareMathOperator{\Pii}{\mathbf{\Pi }}
\DeclareMathOperator{\LL}{\mathcal{L}}

\newcommand{\Lagr}{\mathcal{L}}

\newcommand{\squishlist}{
   \begin{list}{$\bullet$}
    { \setlength{\itemsep}{0pt} \setlength{\parsep}{1pt}
      \setlength{\topsep}{1pt} \setlength{\partopsep}{1pt}
      \setlength{\leftmargin}{1.5em} \setlength{\labelwidth}{1em}
      \setlength{\labelsep}{0.5em} } }

\newcommand{\squishlisttwo}{
   \begin{list}{$\bullet$}
    { \setlength{\itemsep}{0pt} \setlength{\parsep}{0pt}
      \setlength{\topsep}{0pt} \setlength{\partopsep}{0pt}
      \setlength{\leftmargin}{1em} \setlength{\labelwidth}{1.5em}
      \setlength{\labelsep}{0.5em} } }

\newcommand{\squishend}{
    \end{list}  }
\newcommand{\argmax}{\arg\!\max}

\newcommand{\hy}[1]{\noindent{\textcolor{blue}{\{{\bf Hema:}  #1\}}}}

\begin{document}
\title{A Recursive Partitioning Approach for Dynamic Discrete Choice Models in High-Dimensional Settings}
\author{
      Ebrahim Barzegary \thanks{We would like to thank Simha Mummalaneni and the participants of the 2023 ISMS Marketing Science Conference for feedback that has improved the paper. Please address all correspondence to: barzegary@essec.edu, hemay@uw.edu. }  \\
        ESSEC Business School\\
        \and
        Hema Yoganarasimhan\footnotemark[1]\\
        University of Washington\\
 }
\maketitle
\begin{abstract} 
Dynamic discrete choice (DDC) models are widely employed in marketing and economics to answer substantive and policy questions in settings where individuals' current choices have future implications. However, estimating these models is computationally intensive and/or infeasible in high-dimensional settings. Indeed, even specifying the structure for how the utilities/state transitions enter the agent's decision is challenging in high-dimensional settings when we have no guiding theory. We present a data-driven recursive partitioning algorithm that reduces the dimensionality of the high-dimensional state space by taking into account the variation in both choices and state transitions. The discretized state space from the first stage can then be used with any estimator of choice in the second stage to estimate the  DDC model. We present Monte-Carlo simulations on two general settings to highlight the performance of our approach -- (1) regenerative stationary setting (the canonical bus engine replacement problem), and (2) non-stationary finite horizon setting (durable goods adoption problem). Across both settings, our approach recovers the lower-dimensional representation of the state space and reduces estimation bias. Our approach can be applied to a variety of marketing problems where consumers make inter-temporal trade-offs, e.g., pricing and promotion planning, healthcare decisions, and subscription/renewal decisions.

\vspace{0.3in}

\noindent \textbf{Keywords:} Structural Models, Dynamic Discrete Choice Models, Machine Learning, Recursive Partitioning
\end{abstract}

\newpage
\section{Introduction}
\label{sec:intro}

Dynamic discrete choice models are extensively used in marketing and economics to study problems where agents make intertemporal trade-offs; see \citet{Aguirregabiria_Mira_2010, Aguirregabiria_2021} for detailed surveys. In marketing contexts, they have been used to model both firm and consumer behavior in a variety of settings, including:
\squishlist
\item Demand estimation for durable goods \citep{Song_Chintagunta_2003, Nair_2007, Esteban_Shum_2007, Gowrisankaran_Rysman_2012, Li_2019}
\item Demand estimation for storable goods \citep{Erdem_etal_2003, Hendel_Nevo_2006} 
\item Decision-making under uncertainty and consumer learning \citep{Erdem_Keane_1996}
\item Search and optimal stopping problems \citep{Kim_etal_2010, Yoganarasimhan_2013}
\squishend
Models in this class serve two primary purposes. First, they allow researchers to estimate the underlying structural parameters associated with an agent's utility function within an institutional context. Second, they serve as tools to evaluate new (counterfactual) policies, e.g., new pricing policies for durable goods \citep{Nair_2007}, positioning and product strategies by manufacturers and retailers \citep{Hitsch_2006, Ellickson_etal_2012, Melnikov_2013}, and consumer welfare \citep{Wang_2015}. 

In these models, an agent's decisions are dynamic, in the sense that the agent's current decisions affect their future utility stream through future states. As such, agents' current decisions depend on their expectations about the evolution of future states and their own future decisions in those states. Thus, estimating the underlying primitives of an agent's behavior in these settings requires the researcher to incorporate both the current utility an agent receives from a decision as well as her expected future stream of utilities conditional on this decision. Specifically, agents are assumed to maximize expected intertemporal payoffs in each time period. Thus, a key aspect of the standard solution concept is to calculate the value function, which is defined as the expected value of being in a specific state. Formally, this is equal to the discounted sum of the expected utility stream that an agent gets from making optimal choices starting with the current state. Typically, researchers calculate the value function by solving a dynamic programming problem using the Bellman equation \citep{ml_encyclopedia}. This is usually done using recursion when the problem is regenerative, or using backward induction when it is finite-horizon.

Three main issues make the specification and estimation of these models challenging. First, the well-known {\it curse of dimensionality} implies that it is exponentially costly to solve for the value function numerically since the dimensionality of the state space increases. Indeed, even specifying state transitions over an infinitely large state space is challenging. Thus, researchers have historically faced a trade-off between including a larger number of state variables (that can help improve the explanatory power of the model and give more realistic counterfactuals) vs. keeping the problem computationally tractable; see $\S$\ref{sec:related_lit} for detailed discussions. Second, even if the researcher has access to a high-dimensional set of variables, there is often no guiding theory on if and how these variables enter the agent's utility function and state transitions. Third, in most settings, the researcher is working with finite or limited samples of data. The combination of finite samples and high-dimensional state variables often implies that there may simply be no observations in many parts of the state space. Thus, the researcher has to reduce the dimensionality of the state space to ensure that there is sufficient data in all parts of the state space for inference.


To address these challenges, we propose a novel approach to model and accommodate high-dimensional state variables in dynamic discrete choice models. We specify the dynamic discrete choice model in terms of two sets of observable state variables -- (1) $\X$ -- a set of low dimensional state variables for which the researcher has some theory, i.e., we know that $\X$ affects the utility function/state transition (and may even know the functional form in which it does, though this is not necessary). (2) $\Q$ -- a set of high-dimensional state variables, which may or may not affect agents' utility function and/or state transition, but we do not know if and how they do so. For example, in the case of renewal decisions for a cloud subscription service like DropBox or AdobeCloud, $\X$ could consist of prices and product features offered, which we naturally expect to affect purchase decisions. And $\Q$ could consist of a large number of potential factors that may or may not affect demand, e.g., macroeconomic variables such as inflation, detailed data on consumers' product usage patterns (e.g., amount of time spent on the service, regularity of usage, diversity of features used), product descriptions on the product page, reviews and ratings from prominent websites, prices and features of related product categories, etc. 

We present an approach to reduce the dimensionality of $\Q$ using a data-driven discretization approach based on recursive partitioning. We assume that there exists a lower dimensional partitioning on $\Q$ ( $\Pii^*: \Q \to \PP$, where $\PP = \{1,\dots, k\}$, i.e., $k$ partitions) such that this partitioning is a sufficient statistic for $\Q$. We define this discretization of $\Q$ as \textit{perfect discretization}, i.e., a discretization where all the observations within a partition have similar decision and transition probabilities (conditional on $\X$ and the unobservable error terms). We then recast the dynamic discrete choice model from the high-dimensional $\Q$-space to the lower-dimensional $\PP$-space, and show the equivalence of these two models. 

The main challenge now is that we need to learn the discretization $\PP$ and estimate the structural parameters of interest, simultaneously. A naive approach here would be to simply use the log-likelihood of the data (as a function of the structural parameters) as the objective function for the recursive partitioning algorithm and generate the splits. However, this approach is practically infeasible for three reasons. First, under this approach, for each {\it potential} split, at each step of the partitioning algorithm, we need to solve the dynamic programming problem and estimate the structural parameters. Then we would compare likelihoods across all possible splits and then choose the split that maximizes the empirical likelihood of the data. This is prohibitively expensive when  $\Q$ is high-dimensional. Second, the likelihood function contains two sets of outcomes: (i) choice probabilities and (ii) state transition probabilities. Therefore, any data-driven approach to discretize the state space must account for both of these outcomes. This makes the splitting process more complex than the standard CART/causal tree, where there is only one set of estimands to be considered. Third, unlike a standard recursive partitioning algorithm, where we split on all the state variables, here we only split on $\Q$ since $\X$s are known (or assumed) to influence outcomes by definition. As such, we need an algorithm that only splits on a subset of state variables ($\Q$), but considers all the state variables  ($\{\X, \Q\}$) to estimate the choice probabilities and state transitions.

We design a recursive partitioning method that addresses all three of these problems. To address the first problem, we separate the discretization problem from the estimation procedure and suggest an objective function that is fully non-parametric. Thus, we do not need to solve the dynamic programming problem at each potential split; as a result, the evaluation of the objective function at each potential split is computationally cheap. To address the second problem, we split our objective function into two sub-objectives, whose relative importance can be learned from the data using model selection/hyperparameter tuning. To address the third problem, we customize the splitting procedure such that it only splits on $\Q$ and not $\X$. Finally, once we have a discretization, we can simply use the learned discretization ($\hat{\PP}$) in addition to $\X$ as the set of state variables in conjunction with any standard estimation method, e.g., Nested Fixed Point (NXFP), Two-step methods. 

Our dimensionality reduction method has several desirable properties that make it convenient to use and applicable to many settings. First, notice that we separate the estimation and discretization tasks and define a general discretization criterion that does not depend on the parametric assumptions of the estimation step. This property makes the algorithm robust to the parametric assumption of the estimation step. Furthermore, the discretization algorithm does not impose any limitation on the estimation method one can use in the second stage, which has advantages. In general, with high-dimensional state spaces, researchers often employ two-step methods that preclude the ability to perform full counterfactuals. However, in this case, researchers can also use full-solution methods such as nested fixed-point that allow for a full range of counterfactuals. Finally, our discretization algorithm has all the desirable properties of standard recursive partitioning-based algorithms, such as linear time complexity in the dimensionality of $\Q$, robustness to the scale of the $\Q$ variables and to the presence of irrelevant variables in $\Q$. Furthermore, the algorithm can be implemented in a parallelized way, making its execution fast and scalable.


We present two sets of simulations to highlight the empirical performance of our approach. In the first simulation study, we consider a high-dimensional version of the standard Rust bus engine problem \citep{Rust_1987}. This setting has been extensively used as a benchmark to compare the performance of newly proposed estimators/algorithms for single-agent dynamic discrete choice models \citep{Hotz_etal_1994, Aguirregabiria_Mira_2002, Arcidiacono_Miller_2011}. In the second simulation study, we use the durable good purchase problem as our setting \citep{Song_Chintagunta_2003, Nair_2007}. Durable goods adoption is a well-studied and important problem in marketing; as such, it is a good setting to explore the practical applicability of our approach. Across both studies, in addition to one low-dimensional state variable (mileage in the bus engine case and price in the durable goods adoption case), we allow for a 10-dimensional set of $\Q$ variables. We consider 12 different scenarios for how $\Q$ affects state transitions and flow utilities and how $\Q$ transitions from one state to another. In both simulations, across all the cases, we show that using our recursive partitioning approach to first partition the state space can help with parameter recovery and bias reduction compared to ignoring the high-dimensional state variables. 
Finally, in the Web Appendix, we also present simulations highlighting the importance of tuning hyper-parameters to use the data efficiently and also compare our approach to existing approaches for naive dimensionality reduction (e.g., PCA and $k$-means clustering).


Finally, we note that our approach is broadly applicable to many modern marketing problems. Today's firms typically have access to a large amount of information on detailed consumer behavior and product attributes (especially in digital environments), as well as the ability to deliver personalized marketing interventions that leverage customer heterogeneity. In many of these settings, consumers often make inter-temporal trade-offs, and there is significant heterogeneity in consumers' utilities as a function of their demographic and behavioral features. However, in many of these cases, we do not have extensive theory on which consumer attributes affect their utility and how, and the size of these features is often too large to rely on trial-and-error methods. Our estimation approach can help in these situations -- it can take a high-dimensional set of state variables as input and identify a lower-dimensional state-space that is relevant for consumer-decision-making, which can then be used to model consumers' utility functions and perform counterfactuals. Some natural examples of high-dimensional data settings where our approach can be helpful include: promotion planning and pricing (e.g., Instacart, Amazon, Walmart), software/services subscription and renewal (DropBox, Adobe, Netflix), and mobile health app notification and consumer outreach (e.g., FitBit, AppleWatch, MyChart). We provide an in-depth discussion of these potential applications in $\S$\ref{ssec:applications}.

\section{Related Literature}
\label{sec:related_lit}
First, our paper relates to the literature on the estimation of dynamic discrete choice models in high-dimensional settings. Many different methods have been proposed to address this problem. \citet{Hotz_Miller_1993} propose a two-step estimator, where the expected value function can be specified as an analytical function of the Conditional Choice Probability (CCP) of certain decisions and state transitions (under certain assumptions). \citet{Arcidiacono_Ellickson_2011} show that estimation can be further simplified in many settings by exploiting the ``finite-dependence property", by specifying the expected value functions as a function of terminal or renewal decisions. \citet{Chernozhukov_etal_2022} build on the two-step approach of \citet{Hotz_Miller_1993} and allow the researcher to estimate the CCPs and flow utilities flexibly using non- or semi-parametric methods and correct for the bias from the first-stage estimation in the second step. While these two-step methods can make estimation feasible in high-dimensional settings, they nevertheless have a few limitations. First, the researcher still needs to estimate non- or semi-parametric CCPs at different combinations of the state space, which is not hard to do accurately in high-dimensional settings with finite data. Second, these models require high-quality non-parametric models of state transitions because they rely on forward-simulation to estimate parameters. This is infeasible in most high-dimensional settings since it is hard to specify and estimate state transition models. In the reinforcement learning literature, models that rely on estimates of state transitions are often called `model-based' and are known to suffer from significant biases in high-dimensional settings unless the researcher has oracle knowledge of state transitions \citep{Zeng_etal_2023}. Third, to conduct counterfactual analyses, researchers usually need to solve the full model at least once using NFXP, which may again not be feasible in very high-dimensional settings. Finally, CCP methods do not provide insight into whether certain variables affect the flow utilities/state transitions and still require the researcher to make assumptions if/how these variables affect outcomes.

Other methods to simplify estimation in high-dimensional settings include state space reduction by relying on inclusive value sufficiency \citep{Gowrisankaran_Rysman_2012}, or approximation of the value function using parametric policy iteration \citep{Benitez_etal_2000} or sieve value function iteration \citep{Arcidiacono_etal_2012}. Typically, these methods work well when there exists a set of basis functions that provide a good approximation of the value function \citep{Rust_2000, Powell_2007}. A more recent body of literature has tried to use the advances in machine learning and methods such as neural networks to solve dynamic discrete choice problems. For example, \citet{Norets_2012} uses an artificial neural network to estimate the value function, taking state variables and parameters of interest as input. \cite{Su_Judd_2012} proposes yet another approach to solve the curse of dimensionality problem in dynamic discrete choice modeling, called MPEC. They formulate the dynamic discrete choice problem as a constrained maximization problem where the likelihood function is the maximization objective, and the bellman equations are the constraints. Nevertheless, the MPEC algorithm is still not applicable in high-dimensional settings without proper discretization of the state space because it requires us to estimate the value function at each point in the state space. Finally, all of these methods require the researcher to specify the functional form of the flow utility function and do not necessarily help with variable selection and/or state-space reduction when we have no a priori theory on which variables affect the problem and how.

Our algorithm adds to this literature by offering a method to break the curse of dimensionality through data-driven discretization of the state space. We show that state-space aggregation/discretization and parameter estimation can be treated as two separate tasks. Researchers can thus use our recursive partitioning algorithm to reduce the dimensionality of a high-dimensional state space $\Q$ to a categorical variable $\PP$ in the first stage and then use $\PP$ in addition to other independent variables $\X$ for parameter estimation in the second stage. In the second step, we can use any of the existing estimation algorithms, e.g., NXFP, two-step methods.



Our paper also contributes to the literature on estimation using recursive partitioning. \cite{Breiman_1984} proposed the original Classification and Regression Trees (CART) algorithm for prediction using recursive partitioning. Ensemble methods, such as random forests, combine several trees to produce better performance than a single tree \citep{Breiman_2001}. While recursive partitioning has traditionally been used for prediction tasks, recently, there has been interest in using it for causal estimation of heterogeneous treatment effects using methods such as causal trees and Generalized Random Forests \citep{Athey_Imbens_2016, Athey_etal_2019}. Our algorithm adds to this literature by proposing a novel use of recursive partitioning to estimate dynamic discrete choice models by formulating a non-parametric objective function that incorporates both choice probabilities and state transitions to reduce the dimensionality of the state-space.

Finally, our paper adds to the nascent but growing literature that combines structural models with machine learning. \citet{Jiang_etal_2023} develop a choice model for high-dimensional settings. \citet{Singh_etal_2023} exploit the permutation invariance property to propose nonparametric estimators to overcome the curse of dimensionality that is inherent in the non-parametric estimation of choice functions. \citet{Wei_Jiang_2022} discuss how neural networks can be used for parameter estimation in structural models. However, all these approaches study static choice models rather than dynamic choice models, which is our focus. 


\section{Problem Definition}
\label{sec:prob_def}
\subsection{Basic Set-up}
\label{ssec:basic_setup}
We consider the discrete choice problem from the perspective of a forward-looking single agent, denoted by $i \in \{1 \dots N\}$.  In every period $t$, the agent chooses between $j = 1\ldots J$. options. $i$'s decision in period $t$ is denoted by $d_{it}$, and $d_{it} = j$ indicates that the agent $i$ has chosen the option $j$ in period $t$. The agent's decision is not only a function of her utility in her current state ($s_{it}$) but also the expectation of her utility in all her future states given the decision $d_{it}$. We assume that the agent's state $s_{it}$ is composed of three sets of variables.
\begin{enumerate}
\itemsep-4px 
    \item A set of observable low-dimensional state variables $x_{it} \in \X$.
    \item A set of observable high-dimensional state variables $q_{it} \in \Q$.
    \item Unobservable state variable $\epsilon_{it}$, which is a $J \times 1$ vector each associated with one of the alternatives observed by the agent, but not by the researcher.
\end{enumerate}
$\X$ represents state variables for which we have some a priori theory, i.e., we know the parametric form in which they enter the utility function. $\Q$ denotes state variables for which we do not have a theory for if and how they influence users' decisions and state transitions. Note that it is possible to simply assume that $\X$ is null and simply put all the state variables in $\Q$ and learn how $\Q$ affects utilities and state transitions. That is, it is not necessary for the researcher to know/assume the functional form of any state variable, and the main variables of interest themselves could be high-dimensional.

In each period $t$, agent $i$ derives an instantaneous flow utility $u(s_{it},d_{it})$, which is a function of her decision $d_{it}$ and her state variables $s_{it}=\{x_{it},q_{it},\epsilon_{it}\}$. The per-period utility is additively separable as follows:
\begin{equation}
    u(s_{it},d_{it}=j) = \bar{u}(x_{it},q_{it},d_{it}=j;\theta_1) + \epsilon_{itj},
\end{equation}
\noindent where $\epsilon_{itj}$ is the error term associated with $j^{th}$ option at time $t$, $\bar{u}(x_{it},q_{it},d_{it}=j;\theta_1)$ is the deterministic part of the utility from making decision $j$ in state $x_{it}, q_{it}$, and $\theta_1$ is the set of structural parameters associated with the deterministic part of utility. The state $s_{it}$ transitions into new, but not necessarily different, values in each period following decision $d_{it}$. We make three standard assumptions on the state transition process -- first-order Markovian, conditional independence, and IID error terms. These assumptions imply that: (i) $\{s_{it},d_{it}\}$ are sufficient statistics for $s_{it+1}$, (ii) error terms are independent over time, and (iii) errors in the current period affect states tomorrow only through today's decisions. Thus, we have:
\begin{equation}  \Pr(x_{it+1},q_{it+1},\epsilon_{it+1}|x_{it},q_{it},\epsilon_{it},d_{it}) = \Pr(\epsilon_{it+1}) \Pr(x_{it+1},q_{it+1}| x_{it},q_{it},d_{it})
\end{equation}
We denote the state transition function $\Pr(x_{it+1},q_{it+1}| x_{it},q_{it},d_{it})$ as $g(x_{it+1},q_{it+1}| x_{it},q_{it},d_{it};\theta_2)$, where $\theta_2$ captures the parameters associated with state transition.\footnote{Both the utility function and state transition can also be estimated non-parametrically. In that case, $\theta_1$ and $\theta_2$ would simply denote the utility and state transition at a given combination of state variables.} 

Each period, the agent takes into account the current period payoff as well as how her decision today will affect the future, with the per-period discount factor given by $\beta$. She then chooses $ d_{it}$ to sequentially maximize the expected discounted sum of payoffs $ \E \left[ \sum_{\tau=t}^{\infty} \beta^{\tau} u(s_{i\tau},d_{i\tau}) \right]$. Our goal is to estimate the set of structural parameters $\theta =\{\theta_1,\theta_2\}$ that rationalizes the observed decisions and the states in the data, which are denoted by  $\{(x_{i1},q_{i1},d_{i1}), \dots, (x_{it},q_{it},d_{it}), \dots, (x_{iT},q_{iT},d_{iT})\}$ for agents $i \in \{1, \ldots, N\}$ for $T$ time periods.

\subsection{Challenges}
\label{subsec:curse_of_dimensionality}

The standard solution is to use a maximum likelihood method and estimate the set of parameters that maximizes the likelihood of observing the data. Given the first-order Markovian and conditional independence assumptions, we can write the likelihood function and estimate the structural parameters as follows:
\begin{align}
\label{eq:likelihood}
    \LL(\theta) = \sum_{i=1}^N \Bigg ( \sum_{t=1}^T \log \hat{p}(d_{it}|x_{it},q_{it};&\theta_1) + \sum_{t=2}^T \log \hat{g}(x_{it},q_{it}|x_{it-1},q_{it-1},d_{it-1};\theta_2)\Bigg )\\
\notag
    \theta^* &= \argmax_{\theta} \LL(\theta)
\end{align}
where $\hat{p}(.)$ is the predicted choice probabilities. 

However, there are three main challenges in estimating a model in this setting:
\squishlist
\item {\bf Theory:} First, the researcher may lack theory on how the high-dimensional variables $q$ enter the agents' utility function. For instance, if we consider the example of high-dimensional usage variables in a software subscription decision, we do not have much theoretical guidance on which product usage variables affect users' utility and how. As such, we cannot make parametric assumptions on the effect of $q$ on decisions and state transitions or hand-pick a subset of these variables to include in our model.

\item {\bf Data:} Second, in a high-dimensional setting, we may not have sufficient data in all areas of the state space to model the flow utility and the transitions to/from a given set of state variables. 

\item {\bf Estimation:} Finally, estimation of a discrete choice dynamic model in a high dimensional setting is often computationally infeasible and/or costly. There are two main difficulties in estimation in high-dimensional settings. First, standard approaches like \citet{Rust_1987}'s nested-fixed-point algorithm require us to calculate the value function at each combination of the state variables at each iteration of the estimation. In large state spaces, value function iteration/backward induction can be extremely slow or infeasible. One way to avoid this is to use two-step estimation methods \citep{Hotz_Miller_1993}, which avoid value function iteration. However, they nevertheless need non-parametric estimates of Conditional Choice Probabilities (CCPs) and state transition probabilities at each combination of state variables. While CCPs can be calculated flexibly when the number of decisions is relatively small (using semi-parametric approaches such as neural networks), it is not possible to specify and estimate state transition models in high-dimensional settings. Indeed, in the reinforcement learning literature, this challenge is often referred to as the lack of knowledge of the ``model of world'' and it is well-known that model-based roll-out approaches like two-step estimators are not feasible (or error-prone) without oracle knowledge of state transitions in high-dimensional settings \citep{Zeng_etal_2023}. 
\squishend
Thus, in a finite data high-dimensional setting where we lack guiding theory, it is not feasible to specify a utility function over $q$ and/or estimate a dynamic discrete choice model using conventional methods. Therefore, we need a data-driven approach that reduces the dimensionality of $\Q$ in an intelligent fashion. 




\subsection{Dimensionality Reduction using Discretization}
\label{ssec:dim_reduc_discret}

Dimensionality reduction is a popular procedure used in machine learning and statistics across multiple settings and applications. A common theme across these approaches is the notion that even in settings where the dimensionality of the problem space is very high, the space of \textit{relevant} dimensions is small. For example, the well-known matrix completion methods used in the recommendation systems literature rely on the {\it low-rank} assumption, i.e., they assume that the information in a high-dimensional user-item matrix can be represented by a smaller set of underlying factors\footnote{For example, in the movie recommendation example, these factors might be user preferences (e.g., liking action or comedy) and movie genres. Matrix completion exploits this low-rank structure: by finding the underlying factors that best explain the observed entries, we can then use these factors to predict the missing entries, effectively completing the matrix.}; see Chapter 10 of \citet{Wainwright_2019}. Similarly, in the reinforcement learning area, a natural approach to state space reduction is known as {\it state space abstraction}, where the goal is to pool together states that have the same or similar reward function (or flow utility) and/or policies \citep{Givan_etal_2003, Li_etal_2006}. This literature largely focuses on how to define different hierarchies of abstractions and bound the loss from the abstraction process from a theoretical perspective rather than on practical algorithms for obtaining abstractions. 

We build on these ideas and suggest a solution that recasts our problem by mapping $\Q$ to a lower-dimensional space through data-driven discretization such that $\Pii : \Q \to \PP$, where $\PP = \{1,\dots, k\}$. The goal is similar in spirit to that of the Classification and Regression Trees (CART) algorithm used for outcome prediction \citep{Breiman_1984} and the Causal Tree algorithm for estimation of conditional average treatment effects \citep{Athey_Imbens_2016}. These algorithms discretize the covariate space to minimize the heterogeneity of the statistic of interest within a partition. For example, the CART algorithm discretizes the covariate space into disjoint partitions such that observations in the same partition have similar outcome values/classes. Similarly, the Causal Tree algorithm discretizes the state space such that observations within the same partition have similar treatment effects. While the exact approaches from the static estimation of CART/causal tree cannot be directly translated to dynamic discrete choice models, we build on the notion of ``similarity within a partition" to adapt the high-level intuition from these models to our setting. Therefore, our first step is to outline the characteristics of a suitable discretization.

In $\S$\ref{sssec:good_discret} we formally define the term \textit{perfect discretization} as a discretization where the observations within a partition behave similarly. Then in, $\S$\ref{sssec:pi_formulation}, we show how the estimation problem can be reformulated for a lower-dimensional space by exploiting this definition.

\subsubsection{Properties of a Good Discretization}
\label{sssec:good_discret}
We now discuss some basic ideas that a good discretization should capture. First, some variables in $\Q$ may be irrelevant to our estimation procedure, i.e., they have no effect on flow utilities or state transitions. A good data-driven discretization should be able to neglect these variables and thus be robust to irrelevant variables. Second, our discretization approach has to be entirely non-parametric since we do not have any theory on how variables in $\Q$ affect agents' utilities and state transitions. Finally, the discretization should be generalizable, i.e., it should be valid outside the training data.
Formally, we define the term \textit{perfect discretization} as follows:


\begin{definition}
A discretization $\Pii^*:\Q \to \PP$ is perfect if all the points in the same partition have the same choice probabilities and incoming and outgoing transition probabilities. That is, for any two points $q,q'$ in a partition $\pi \in \PP$, we have:
\begin{numcases}{\label{eq:perfect_definition} \forall x,x' \in \X, q'' \in \Q, j \in \J:}
      \Pr(j|x,q) = \Pr(j|x,q')\\
      \Pr(x',q''|x,q,j)=\Pr(x',q''|x,q',j) \\
      \Pr(x,q|x',q'',j)=\Pr(x,q'|x',q'',j)
\end{numcases}
\end{definition}
The first equality ensures that the decision probabilities are similar for data points within a given partition $\pi \in \PP$. The second equality asserts the equality of transition probabilities {\it from} any two points $q, q'$ within the same partition $\pi \in \PP$. Finally, the last equality implies that the transition probabilities {\it to} any two points within a partition $\pi$ are equal. Together, these three equalities imply that all the observations within the same $\pi \in \PP$ are similar from both modeling and estimation perspective. Therefore, we do not need to model the heterogeneity within the partition $\pi$. Instead, modeling agents' behavior at the level of $\PP$ is sufficient. 

\subsubsection{Formulation of DDC in a discretized space}
\label{sssec:pi_formulation}
We now use the definition of perfect discretization and translate the DDC estimation from the $\Q$-space to the $\PP$-space. Because observations within each partition in a perfect discretization behave similarly, we can project the problem from the $\Q$-space to the $\PP$-space. That is, $\Pii^*$ is a sufficient statistic for the estimation of state transition and decision probabilities. We can therefore write the probabilities of choices and state transitions in the $\PP$-space as follows:
\begin{numcases}{\label{eq:definition_consequence} \forall x,x' \in \X, q'' \in \Q, j \in \J:}
      \Pr(j|x,q) = \Pr(j|x,\Pii^*(q))\\
      \Pr(x',q''|x,q,j)=\Pr(x',q''|x,\Pii^*(q),j) \\
      \Pr(x,q|x',q'',j) = \frac{\Pr(x,\Pii^*(q)|x',q'',j)}{N(x,\Pii^*(q))}
\end{numcases}
where $N(x,\Pii^*(q))$ is the number of observations in state $\{x,\Pii^*(q)\}$. 

These three equalities are the counterparts of the relationships shown in Equation \eqref{eq:perfect_definition} in the $\PP$-space. According to the first equality in Equation \eqref{eq:definition_consequence}, if the choice probabilities for the observations within a partition $\pi \in \PP$ are the same, then $\{\X,\PP\}$ is a sufficient statistic to capture the choice probabilities. The same argument is true for outgoing state transitions. If the probabilities of transition to other states from all points in a partition are similar, we can use the partition instead of points to specify transition (as shown in the second relationship in Equation \eqref{eq:definition_consequence}). Finally, when the probability of transition into all the points within a partition is similar, the probability of moving to a specific point is equal to the probability of transitioning into the partition divided by the number of observations within that partition. That is:
\begin{align*}
    \Pr(x,\Pii^*(q)|x',q',j) &=     
    \sum_{q'' \in \Pii^*(q)} \Pr(x,q''|x',q',j) \\
    &= N(x,\Pii^*(q)) \Pr(x,q|x',q',j),
\end{align*}
which gives us the third relationship in Equation \eqref{eq:definition_consequence}.

We now use the relationships in Equation \eqref{eq:definition_consequence} to reformulate the log-likelihood in Equation \eqref{eq:likelihood}. Given a perfect partitioning $\Pii^*$, the log-likelihood can be written as:
\begin{align}
\notag
    \LL(\theta,\Pii^*) &= \sum_{i=1}^N \sum_{t=1}^T \log p(d_{it}|x_{it},\Pii^*(q_{it});\theta_1,\Pii^*) \\
    &+ \sum_{i=1}^N \sum_{t=2}^T \log \frac{g(x_{it},\Pii^*(q_{it})|x_{it-1},\Pii^*(q_{it-1}),d_{it-1};\theta_2,\Pii^*)}{N(x_{it},\Pii^*(q_{it});\Pii^*)}   
    \label{eq:likelihood_pi*}
\end{align}
Note that the second term in the log-likelihood is obtained by combining the second and third terms in Equation \eqref{eq:definition_consequence} as: $ \Pr(x',q''|x,q,j) = \Pr(x',q''|x,\Pii^*(q),j) =\frac{\Pr(x',\Pii^*(q'')|x',\Pii^*(q),j)}{N(x,\Pii^*(q''))} $. Next, we can formulate the utility function in the $\Pii^*(q)$-space as follows:
\begin{equation}
        u(s_{it},d_{it}) = \bar{u}(x_{it},\Pii^*(q_{it}),d_{it};\theta_1,\Pii^*) + \epsilon_{itj}
\end{equation}
Finally, we can also write the value function and the choice-specific value function in terms of the discretized state space. Conceptually, once we have a perfect discretization $\Pii^*$, we can treat $\Pii^*(q)$ as a categorical variable in addition to $\X$, and ignore $q$. Note that under this new specification, the discretization $\Pii^*$ itself is a primitive that we need to learn, in addition to $\theta_1$, and $\theta_2$.

Note that once we have a discretization, i.e., a mapping from the $\Q$-space to $\hat{\PP}$-space, all the methods available for the estimation of dynamic discrete choice models (e.g., nested fixed point, two-step estimators) are directly applicable here, with $\{\X, \hat{\PP}\}$ as the state space. Thus, all the consistency and efficiency properties of the estimator used would directly translate to this setting, i.e., are valid conditional on the state variables, $\{\X, \hat{\PP}\}$. We provide a more detailed discussion of estimation and inference in $\S$\ref{ssec:est_summary}.

\subsection{Discussion}
\label{ssec:discussion}
We now provide a conceptual discussion of the model's interpretation when going from the basic set-up discussed in $\S$\ref{ssec:basic_setup} to the formulation in the lower-dimensional space discussed in $\S$\ref{sssec:pi_formulation}.

First, note that the conditions in Equation \eqref{eq:perfect_definition} imply that the agent/decision-maker knows which variables affect their decisions and state transitions and how. Indeed, when we rewrite the agent's problem in terms of the lower-dimensional $\PP$-space, the implicit assumption is that the agent is aware that these lower-dimensional states are all that matters for their decisions. However, the key point is that the researcher is unaware of this lower-dimensional partitioning and needs to learn the mapping from the full state space to the lower-dimensional space from the data. In that sense, the first-stage recursive partitioning is essentially recovering the relevant state space from the perspective of the decision-maker.

Second, in many settings, managers and researchers are interested in using estimation results to evaluate counterfactual scenarios. Therefore, we now discuss the counterfactual validity of the partitioning procedure. Mathematically, the key equations relevant to the validity of the partitions under the counterfactual scenarios are shown in Equation \eqref{eq:perfect_definition}. As long as we expect those to be valid in the counterfactual situation, the counterfactual results should be valid. Of course, there are some counterfactual scenarios where the conditions in Equation \eqref{eq:perfect_definition} either cannot be confirmed and/or are invalidated, as discussed below.
\squishlist 
\item First, as with all structural models, if the counterfactual scenarios are outside the bounds
of the current distribution of states observed in the estimation dataset, then we cannot guarantee that they would be valid without additional assumptions. Specifically, in regions outside the current dataset, the counterfactuals will need to rely on the assumption that users' flow utility and state transitions in those regions are similar to those in the states observed in the data. For example, suppose that our partitioning tells us that a state variable age can be partitioned into two groups such that users behave similarly within an age group -- (1) age between 18 and 35, and (2) age $>$ 35, but we only observe people till age 50 in the estimation data. Now suppose we have even older customers in a counterfactual setting (e.g., age $>$ 50). Then, by extrapolation, we would need to assume that the utility/transitions of these consumers are similar to the over-35 group observed in the estimation data.

\item Second, for the partitions to be valid in counterfactual scenarios, we need to assume that the partitions are policy invariant. This can be violated if the counterfactual scenario introduces changes to consumers' flow utilities or state transitions \textit{within} a partition. As before, consider a setting where consumers are partitioned into two age groups (say 18--35 years and $>$35 years) based on the estimation data. Now, suppose that the firm introduces a new promotion only for consumers aged 18--25 years. In that case, the first partition is no longer valid since consumers' flow utilities will vary depending on whether they are in the 18--25 age group or the 26--35 age group. Notice that in this example, the conditions in Equation \eqref{eq:perfect_definition} are violated under the counterfactual scenario. Thus, in this case, we will need to explicitly repartition the space to be consistent with the new promotional policy. Again, this is similar to a standard structural model, where the researcher will need to re-specify the flow utility in a counterfactual setting if factors that affect the flow utility change during the counterfactual.

\squishend
In summary, to the extent that the counterfactual scenario doesn't systematically change consumers' flow utilities and state transitions within a partition, the partitioning can be treated as a primitive and used in counterfactual analysis. If they directly affect the partitioning, they will need to be explicitly accounted for in the counterfactual. 

\section{Our Discretization Approach}
\label{sec:our_approach}
We now present our discretization algorithm. We first explain the recursive partitioning method for a general objective function in $\S$\ref{ssec:recur_part}. Next, in $\S$\ref{ssec:recur_part_DDC}, we present a recursive partitioning algorithm for the dynamic discrete choice model outlined above and discuss its properties. In $\S$\ref{ssec:cross_val}, we discuss the practical aspects of the algorithm, such as model selection and hyper-parameter tuning. In $\S$\ref{ssec:finite_horizon}, we discuss how to extend the discretization approach to finite horizon settings. Finally, in $\S$\ref{ssec:est_summary}, we summarize the estimation procedure and inference. 

\subsection{Discretization using Recursive Partitioning}
\label{ssec:recur_part}
\begin{figure}[!ht]
    \centering
    \includegraphics[width=16cm]{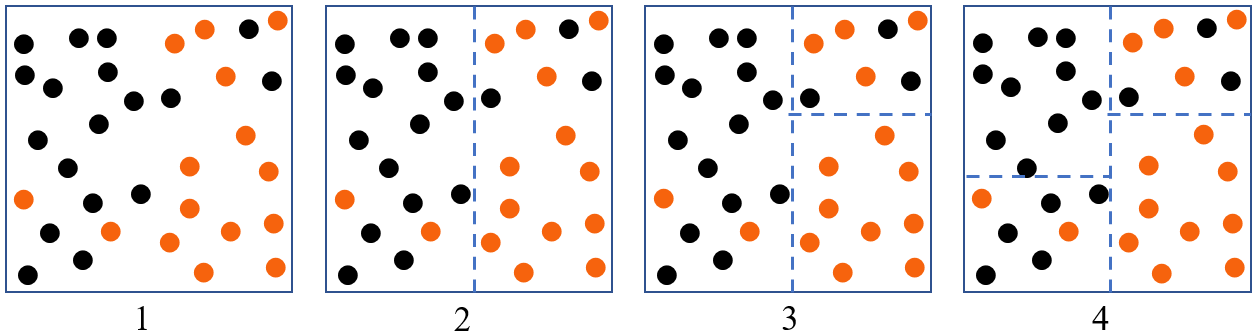}
    \caption{\label{figure:recursive} An example of recursive partitioning for a classification task with two explanatory variables and two outcome classes (denoted by orange and black dots).} 
\end{figure}
Recursive partitioning is a meta-algorithm for partitioning a covariate space into disjoint partitions to maximize an objective function. In each iteration, the algorithm uses an objective function as the criterion for selecting the next split among all the potential candidate splits. A split divides a partition into two, along one of the variables in the covariate space. The following pseudo-code presents a general recursive partitioning algorithm, where the goal is to maximize the objective function $\FF(\Pii)$.

\begin{itemize}
    \item Inputs: Objective function $\FF(\Pii)$ that takes a partitioning $\Pii$ as input and outputs a score.
    \item Initialize $\Pii_0$ as one partition equal to the full covariate space.
    \item Do the following until the stopping criterion is met, or $\FF(\Pii_{r}) = \FF(\Pii_{r-1})$
    \begin{itemize}
        \item For every $\pi \in \Pii_r$, every $q \in \Q$, and every value $v \in$ range($q$) in $\pi$
        \begin{itemize}[label={}]
            \item $\Delta(\pi,q,v) = \FF(\Pii_r + \{\pi, q, v\}) - \FF(\Pii_r)$
        \end{itemize}
        \item $\{\pi',q^*,v^*\} = \argmax_{\{\pi,q,v\}} \Delta(\pi,q,v)$
        \item $\Pii_{r+1} = \Pii_r + \{\pi',q^*,v^*\}$
    \end{itemize}
\end{itemize}

Figure \ref{figure:recursive} shows three iterations of recursive partitioning applied to a classification task. The partitioning $\Pii$ maps the two-dimensional covariate space into four different partitions. 

\subsection{Recursive Partitioning for Dynamic Discrete Choice Models}
\label{ssec:recur_part_DDC}
The goal of our discretization exercise is to estimate $\theta$ by maximizing the likelihood function in Equation \eqref{eq:likelihood_pi*}. To do so, we first need to find a perfect discretization, i.e., a discretization that satisfies Equation \eqref{eq:perfect_definition}. Thus, an intermediate goal is to find the partitioning $\Pii^*$. The key question then becomes what should be the objective function for the recursive partitioning algorithm that helps us achieve the two goals discussed above. A naive approach would be to simply use the log-likelihood shown in Equation \eqref{eq:likelihood_pi*}. However, this is problematic because of three reasons. 
\squishlist
\item First, this likelihood is a function of both $\Pii^*$ and $\theta$. Thus, a naive implementation of the recursive partitioning algorithm would require us to estimate the optimal $\theta$ in every iteration for a given $\Pii_r$. However, estimating $\theta$ is computationally expensive because we need to calculate the discounted future utility (or expected value function) associated with a given choice to estimate the primitives of agents' utility functions fully. Doing this at every {\it potential} split in each iteration of the algorithm is computationally expensive (and infeasible when the $\Q$-space is large). 
\item Second, the likelihood function contains two sets of outcomes: (i) choice probabilities and (ii) state transition probabilities. Any data-driven approach to discretize the state space must account for both of these outcomes. This makes the splitting process more complex than usual recursive partitioning algorithms such as CART/causal tree, where there is only one set of estimands to be considered. 
\item Third, unlike a standard recursive partitioning algorithm, where we split on all the state variables, here we only split on $\Q$ since $\X$s are known (or assumed) to influence outcomes by definition. As such, we need an algorithm that only splits on a subset of state variables ($\Q$), but considers all the state variables  ($\{\X, \Q\}$) to estimate the choice probabilities and state transitions; see Figure \ref{figure:q1x} for an example.
\squishend

\begin{figure}[!ht]
    \centering
    \includegraphics[width=60mm]{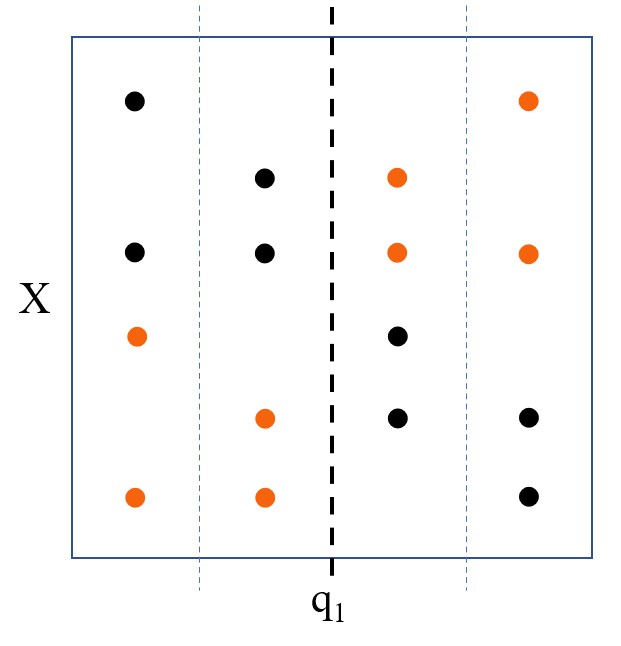}
    \caption{\label{figure:q1x} The dashed lines are candidate splits. Orange and black dots are observations associated with decisions 1 and 2 respectively. Note that agents' choices are a function of both $X$ and $q_1$.} 
\end{figure}

We design a recursive partitioning method that addresses all these three problems. To address the first problem, we separate the discretization problem from estimation and suggest an objective function that is fully non-parametric, i.e., is not dependent on $\theta$. Thus, its calculation is computationally cheap. To address the second problem, we split our objective function into two sub-objectives, whose relative importance can be learned from the data using model selection. Finally, to address the third problem, we customize the splitting procedure such that it only splits on $\Q$ and not $\X$. We discuss these ideas in detail below.

\subsubsection{Nonparametric Objective Function}
\label{sssec:nonpar_objective}
As we discussed in $\S$\ref{sssec:pi_formulation}, we can estimate the primitives of agents' utilities and state transitions by maximizing the log-likelihood shown in Equation \eqref{eq:likelihood_pi*}. This likelihood is a function of both $\theta$ and the discretization $\Pii^*$. Nevertheless, we do not need to solve the maximization problem on both dimensions simultaneously.  Conceptually, the discretization goal is to separate $\Q$ into buckets such that observations within the same bucket behave similarly. An important insight here is that we can achieve this objective without estimating $\theta$, by simply using the non-parametric estimates of choice probabilities and state transitions observed in the data. That is, we can re-formulate the objective function from Equation \eqref{eq:likelihood_pi*} in non-parametric terms. Our proposed objective function is thus a weighted sum of the nonparametric equivalents of the first and second components of Equation \eqref{eq:likelihood_pi*} as shown below.
\begin{equation}
\label{eq:ff_pi*}
    \FF(\Pii) = \FF_{dc}(\Pii) + \lambda \FF_{tr}(\Pii)
\end{equation}
where $\Pii$ is a partitioning function, and $\FF_{dc}(\Pii)$ and $\FF_{tr}(\Pii)$ are:
\begin{align}
\notag
    \FF_{dc}(\Pii) &= \sum_{i=1}^N \sum_{t=1}^T \log \frac{N(x_{it},\Pii(q_{it}),d_{it};\Pii)}{N(x_{it},\Pii(q_{it});\Pii)} \\
    \label{eq:ff_dc_pi*}
    &= \sum_{x \in \X} \sum_{\pi \in \Pii} \sum_{j \in \J} N(x,\pi,j; \Pii) \log \frac{N(x,\pi,j; \Pii)}{N(x,\pi; \Pii)}
\end{align}
\begin{align}
\notag
    \FF_{tr}(\Pii) &= \sum_{i=1}^N \sum_{t=2}^T \log \frac{N(x_{it},\Pii(q_{it}),x_{it-1},\Pii(q_{it-1}),d_{it-1};\Pii)}{N(x_{it-1},\Pii(q_{it-1}),d_{it-1};\Pii)N(x_{it},\Pii(q_{it}))} \\
    \label{eq:ff_tr_pi*}
     & = \sum_{x \in \X} \sum_{\pi \in \Pii}  \sum_{j \in \J} \sum_{x' \in \X} \sum_{\pi' \in \Pii} N(x,\pi,x',\pi',j; \Pii) \log \frac{N(x,\pi,x',\pi',j; \Pii)}{N(x,\pi; \Pii)N(x',\pi',j; \Pii)}
\end{align}
The function $N(.)$ counts the number of observations for a given condition. For example, $N(x,\pi,j)$ is the number of observations where the agent chose decision $j$ in state $\{x,\pi\}$ and $N(x,\pi,x',\pi',j)$ is the number of observations that chose decision $j$ in state $\{x',\pi'\}$, and transitioned to $\{x,\pi\}$. 

The weighting parameter $\lambda$ is a multiplier that specifies the relative importance of the state transition likelihood in comparison to decision likelihood in our algorithm. To make this parameter more intuitive and generalizable across different datasets, we decompose it as follows:
\begin{equation}
\label{eq:lambda_adj}
    \lambda = \lambda_{adj} \times \lambda_{rel}, \;\;\;
    \text{where } \lambda_{adj} = \frac{\FF_{dc}(\Pii_0)}{\FF_{tr}(\Pii_0)}.
\end{equation}
Here, $\Pii_0$ is the full covariate space of $\Q$ as one partition (i.e., the baseline case that ignores $\Q$). $\lambda_{rel}$ is a hyperparameter that captures the relative importance of state transition and decision likelihoods in our objective function and should be learned from the data using cross-validation. For example, $\lambda_{rel}=2$ implies that the recursive partitioning algorithm values one percentage lift in $\FF_{tr}$ twice as much a one percentage lift in $\FF_{dc}$ when selecting the next split. The optimal $\lambda_{rel}$ can vary with the application. In settings where there is more information in the state transition, the optimal $\lambda_{rel}$ will be higher whereas a smaller $\lambda_{rel}$ is better when the choice probabilities are more informative. Therefore, it is important to choose the right value of $\lambda_{rel}$ to prevent over-fitting and learn a good discretization.

A couple of additional points of note regarding our proposed objective function. First, the first lines in both Equations \eqref{eq:ff_dc_pi*} and \eqref{eq:ff_tr_pi*} are aggregating observations over individuals and time whereas the second lines are aggregating over all possible decisions and state transitions weighted by their occurrence. While the two representations are equivalent, we use the latter one in the rest of the paper since it is more convenient. Second, the objective function in Equation \eqref{eq:ff_pi*} is the non-parametric version the log-likelihood in Equation \eqref{eq:likelihood_pi*} (with the hyperparameter $\lambda$ added in for data-driven optimization). The terms $\frac{N(x_{it},\Pii(q_{it}),d_{it};\Pii)}{N(x_{it},\Pii(q_{it});\Pii)}$, and $\frac{N(x_{it},\Pii(q_{it}),x_{it-1},\Pii(q_{it-1}),d_{it-1};\Pii)}{N(x_{it-1},\Pii(q_{it-1}),d_{it-1};\Pii)}$ are the non-parametric counterparts of $\Pr(d_{it}|x_{it},\Pii(q_{it});\theta_1,\Pii)$ and $\Pr(x_{it},\Pii(q_{it})|x_{it-1},\Pii(q_{it-1}),d_{it-1};\theta_2,\Pii)$, respectively. Therefore, maximizing $\FF(\Pii)$ is equivalent to maximizing the original likelihood function. 

\subsubsection{Algorithm Properties}
\label{sssec:alg_prop}
 
We now establish two key properties of the recursive partitioning algorithm proposed here. First, as shown in Web Appendix \ref{subsec:likelihood_increase},  the non-parametric log-likelihood shown in Equation \eqref{eq:likelihood_pi*} is non-decreasing at each iteration of the algorithm. Formally, we have:
\begin{align*}
    \LL(\theta^*_r, \Pii_r) &\leq \LL(\theta^*_{r+1}, \Pii_{r+1})\\
    \text{where} &
    \begin{cases}
      \theta^*_r = \argmax_{\theta} \LL(\theta, \Pii_r)\\
      \theta^*_{r+1} = \argmax_{\theta} \LL(\theta, \Pii_{r+1})
    \end{cases}
\end{align*}
Second, as shown in Web Appendix\ref{subsec:likelihood_converge}, for $\lambda > 0$, the final discretization $\Pii^* = \argmax_{\Pii} \FF(\Pii)$ converges to a perfect discretization (under certain conditions). Together, these two properties ensure that: (a) the algorithm will increase the likelihood at each step and converge, and (b) upon convergence, it will achieve a perfect discretization. Of course, in practice, we may choose a $\lambda$ that does some bias-variance trade-off to avoid over-fitting and stop before we reach the perfect discretization to ensure that the discretization generalizes outside the training data.

Further, our proposed algorithm shares the desirable properties of other recursive partitioning-based algorithms. First, it has linear time complexity with respect to the dimensionality of $\Q$ \citep{Sani_etal_2018} -- if the number of dimensions in $\Q$ doubles, the algorithm's runtime at most doubles. Second, the proposed algorithm is robust to the scale of the state variables and only depends on the ordinality of the state variable. As such, any changes in the scale of variables in $\Q$ do not change the estimated partition. Finally, since the algorithm splits on a variable only if it increases the log-likelihood shown in Equation \eqref{eq:ff_pi*}, it is robust to the presence of irrelevant state variables. Together, these properties allow the researcher to include all potential observable variables that might affect the agents' decisions in $\Q$ without significantly increasing the compute cost. This is valuable in the current data-abundant era, where firms have massive amounts of user-level data but lack theoretical insight into the effect (if any) of these variables on users' decisions. Indeed, one of the advantages of the method is that it allows the researcher to make post-hoc inference or theory-discovery in a data-rich environment. 

Nevertheless, like other recursive partitioning algorithms, our algorithm does not guarantee consistency. This is because recursive partitioning algorithms are inherently unable to capture certain data patterns and are greedy by design; see Assumption 2 in Web Appendix$\S$\ref{app:proof_con} and \citet{Biau_etal_2008} for further details. Thus, it is not feasible for us to provide a formal consistency proof for the algorithm. Nevertheless, as we show in $\S$\ref{sec:simulation}, the algorithm performs well in a wide range of simulations. 



\subsection{Hyperparameter Optimization and Model Selection}
\label{ssec:cross_val}
Like other machine learning approaches, our recursive partitioning algorithm also needs to address the bias-variance trade-off. If we discretize $\Q$ into too many small partitions, the set of partitions (and the corresponding estimates of choice and state transition probabilities) will not generalize beyond the training data. Thus, we need a set of hyperparameters that constrain or penalize model complexity. We can then tune these hyperparameters using a validation procedure.\footnote{See \cite{Hastie_etal_2009} for a detailed discussion of the pros-cons of different validation procedures.}

\subsubsection{Set of Hyperparameters} 
We start with two hyperparameters that are commonly used in other recursive partitioning-based models: \textit{minimum number of observations} and \textit{minimum lift}.\footnote{See the hyperparameters in Random Forest \citep{Breiman_2001}, XGBoost \citep{Chen_Guestrin_2016}, and Generalized Random Forest \citep{Athey_etal_2019}.} The former stops the algorithm from making very small partitions by ruling out splits (at any given iteration) that produce partitions with observations fewer than the \textit{minimum number of observations}. The latter prevents over-fitting by stopping the partitioning process if the next split does not increase the objective function by the \textit{minimum lift}, which is defined as $\frac{\FF(\Pii_{r+1})-\FF(\Pii_{r})}{\FF(\Pii_{r})}$. In addition to these two standard hyperparameters, we also include another one: \textit{maximum number of partitions}, which stops the recursive partitioning after the algorithm has reached the \textit{maximum number of partitions}. This hyper-parameter not only controls over-fitting, but can also help with identification concerns because it allows the researchers to restrict the number of partitions, and hence the number of estimands. Together, these three hyperparameters ensure that the algorithm does not over-fit and produces a generalizable partition that is valid out-of-sample.




In addition to these three hyper-parameters that constrain partitioning, we have another key hyperparameter, $\lambda_{rel}$, that shapes the direction of partitioning. As discussed earlier, $\lambda_{rel}$ captures the relative importance of the two parts of the objective function when selecting the next split. If $\lambda_{rel}$ is set close to zero, the discretization procedure prioritizes splits that explain the choice probabilities. As $\lambda_{rel}$ increases, the recursive partitioning tends to choose splits that explain the variation in the state transition. $\lambda_{rel}$ can be either tuned using a validation procedure or set manually by the researcher based on their intuition or the requirements of the problem at hand. We present some simulations on the choice of $\lambda_{rel}$ in Web Appendix $\S$\ref{app:hyp_expt}. 


\subsubsection{Score function}

Let $\eta$ denote the set of all hyperparameters. To pick the right $\eta$ for a given application setting, we need a measure of performance at a given $\eta$. Then, we can consider different values of $\eta$ and select the one that maximizes out-of-sample performance (on the validation data). The model's performance is measured by calculating our objective function on a validation set using the training set's estimated values. Formally, our score function for a given set of hyperparameters is:
\begin{align}
\label{eq:score}
     score(\eta) &= \frac{1}{1+\lambda_{rel}}\Bigg ( \sum_{x \in \X} \sum_{\pi \in \Pii^*} \sum_{j \in \J} N^{val}(x,\pi,j; \Pii^*) \log \frac{N^{trn}(x,\pi,j; \Pii^*)}{N^{trn}(x,\pi; \Pii^*)}\\     \notag
     + \lambda_{rel}\lambda_{adj}^{val}& \sum_{x \in \X} \sum_{\pi \in \Pii^*}  \sum_{j \in \J} \sum_{x' \in \X} \sum_{\pi' \in \Pii^*} N^{val}(x,\pi,x',\pi',j; \Pii^*) \log \frac{N^{trn}(x,\pi,x',\pi',j; \Pii^*)}{N^{trn}(x,\pi; \Pii^*)N^{trn}(x',\pi',j; \Pii^*)} \Bigg )
\end{align}
where $\Pii^* = \argmax_{\Pii} \FF(\Pii;\eta)$ and $N^{trn}(.)$ and $N^{val}(.)$ are counting functions within the training and validation data, respectively. Notice that the  main differences between Equations \eqref{eq:ff_pi*} and \eqref{eq:score} is that here the model is {\it learned} on the training data, but evaluated on the validation data. As such, the weight terms ($N(\cdot)$) and $\lambda_{adj}$ are calculated on the validation dataset, while the probability terms are based on the training set. In addition, one minor difference is that this score is normalized by $\frac{1}{1+\lambda_{rel}}$. Without this normalization, any hyperparameter optimization procedure will tend to select larger $\lambda_{rel}$ since a larger $\lambda_{rel}$ leads to a higher score. 


In the validation procedure, we select the hyperparameter values that have the highest $score$ on the validation dataset. Note that having a separate validation set is not necessary. We can also use other hyper-parameter optimization techniques such as cross-validation (and this option is available in our accompanying code-base).

\subsubsection{Zero Probability Outcomes in Training Data}
\label{sssec:zero_prob}
A final implementation issue that we need to address is the presence of choices and state transitions in the validation set that have not been seen in the training set.\footnote{These kind of zero-probability occurrences are common in a finite sample setting as the dimensionality of the data increases.} For example, suppose that we have no observations where the agent chooses action $j$ in state $\{x,\pi\}$ in the training data, but there exists such an observation in the validation data. Then, the score in Equation \eqref{eq:score} becomes negative infinity. This problem makes the hyperparameter optimization very sensitive to outliers.


To solve this problem, we turn to the Natural Language Processing (NLP) literature, which also deals with high-dimensional data and has studied this problem extensively. Several smoothing techniques have been proposed to resolve this issue in NLP models \citep{Chen_1999}. One of the simplest smoothing methods that is used in practice and can be applied to our setting is additive smoothing \citep{Johnson_1932}, which assumes that we have seen all possible observations (i.e., choices and state-transitions) at least $\delta$ times, where usually $0 \leq \delta \leq 1$. This smoothing technique removes the possibility of zero probabilities

A typical value for $\delta$ in the NLP literature is 1; however, the optimal value for $\delta$ depends on the data-generating process. A low value for $\delta$ penalizes the model more heavily for potential outliers. It also prevents the recursive partitioning step from creating small partitions, since it increases the probability of having observations in the validation set that are not observed in the training set. On the other hand, a high value for $\delta$ may help with reducing over-fitting by adding more noise to the data. In our simulations, we set $\delta$ to $10^{-5}$, which is a relatively small value. However, the researcher can use a different number depending on their data and application setting.

\subsection{Extension to Finite Horizon Settings}
\label{ssec:finite_horizon}
So far, we focused on infinite-horizon stationary settings, which are used to model regenerative problems. We now discuss how to modify our approach to a finite-horizon setting, where time is also a state variable that impacts decision-making. Even in cases where time does not directly affect the flow utility, agents' decisions and the expected value of different decisions will change as they approach the end of the decision horizon. For instance, in a durable goods adoption problem \citet{Song_Chintagunta_2003, Nair_2007}, agents often wait to secure a better deal or higher quality. However, as time runs out, the utility of waiting diminishes, as failing to buy in later periods may result in no purchase at all. Thus, in finite-horizon settings, our dimensionality reduction algorithm needs to account for the influence of time on both purchase decisions and state transitions.


To that end, we redefine the concept of perfect discretization for finite horizon settings such that time is incorporated into the definition. By tailoring our algorithm to account for time transitions without explicitly treating time as an independent variable, we can incorporate time dynamics inefficiently. This approach increases the complexity of the algorithm linearly, by a factor of $T$, instead of quadratically, making it computationally feasible while still capturing the non-stationary nature of finite horizon decision-making.

\begin{definition}
A discretization $\Pii^*:\Q \to \PP$ is perfect in a finite horizon setting if all the points in the same partition at each period have the same choice probabilities and incoming and outgoing transition probabilities. That is, for any two points $q,q'$ in a partition $\pi \in \PP$ in time period $t \in \T$, we have:
\begin{numcases}{\label{eq:perfect_definition_finite_time} \forall x, x' \in \X, q'' \in \Q, j \in \J:}
      \Pr(j|x,q,t) = \Pr(j|x,q',t) \\
      \Pr(x',q''|x,q,j,t) = \Pr(x',q''|x,q',j,t) \\
      \Pr(x,q|x',q'',j,t) = \Pr(x,q'|x',q'',j,t)
\end{numcases}
\end{definition}

These equations are similar to those in Equation \eqref{eq:perfect_definition}, with the additional conditioning on time. It is easy to show that for estimation in the finite-horizon setting, we need to update Equations \eqref{eq:ff_dc_pi*} and \eqref{eq:ff_tr_pi*} as follows:
\begin{equation}
\label{eq:ff_dc_pi_finite}
    \FF_{dc}(\Pii) = \sum_{x \in \X} \sum_{\pi \in \Pii} \sum_{j \in \J} \sum_{t \in \T} N(x,\pi,j,t; \Pii) \log \frac{N(x,\pi,j,t; \Pii)}{N(x,\pi,t; \Pii)}
\end{equation}
\begin{equation}
\label{eq:ff_tr_pi_finite}
    \FF_{tr}(\Pii) = \sum_{x \in \X} \sum_{\pi \in \Pii} \sum_{j \in \J} \sum_{x' \in \X} \sum_{\pi' \in \Pii} \sum_{t \in \T} N(x,\pi,x',\pi',j,t; \Pii) \log \frac{N(x,\pi,x',\pi',j,t; \Pii)}{N(x,\pi,t; \Pii)N(x',\pi',j,t; \Pii)}
\end{equation}

By replacing the above equations into our scoring function in Equation \eqref{eq:score}, we can apply our algorithm to finite-horizon settings.  

\subsection{Summary of Estimation Approach and Inference}
\label{ssec:est_summary}
In summary, the estimation approach consists of two steps. In the first step, we employ the recursive partitioning algorithm described in $\S$\ref{ssec:recur_part}--$\S$\ref{ssec:cross_val} to reduce the dimensionality of the high-dimensional state variables $\Q$ to lower-dimensional $\PP$-space, and obtain a partition $\hat{\Pii}$. In the second step, we use $\X$ and the estimated $\hat{\Pii}$ as the set of observed state variables and estimate the dynamic discrete choice model. Note that at this stage, we can use any estimation procedure (including NXFP and two-step methods) to estimate the state-transition and flow utility parameters, and all the consistency and efficiency properties of these estimators are valid conditional of $\X$ and $\hat{\Pii}$.

 
However, an important caveat is that this is conditional on the estimated partition in the first step, $\hat{\Pii}$, and not on the true partition $\Pii^*$. To ensure that the estimated partitioning itself is consistent, we need to make additional assumptions on the nature of the patterns in the data, i.e., assume that there exists a true perfect discretization and that there are no data patterns that cannot be captured by recursive partitioning. The latter condition is a standard assumption in the literature on recursive partitioning methods; see \citet{Biau_etal_2008} for details. We formally state this as Assumption \ref{assumption:no_bad_pattern} along with an example of when this assumption can be violated in Web  Appendix $\S$\ref{app:proof_con}.

Further, a drawback of all recursive partitioning algorithms (e.g., CART) is that they are typically estimated using greedy algorithms and do not come with well-defined convergence rates and confidence intervals (or closed-form expressions for consistency and efficiency properties of the estimator).\footnote{The behavior of the recursive partition estimators continues to be an ongoing area of research; see \citet{Zheng_etal_2023}.} Our recursive partitioning algorithm shares the same drawbacks. Therefore, we cannot theoretically incorporate the rates of convergence/uncertainty from the first stage estimation of $\hat{\Pii}$ into the second stage estimation of flow utility or state transition parameters. As a result, it is not possible to provide a closed-form convergence rate or standard error for these parameters. Nevertheless, we provide bootstrap confidence intervals for the main parameters of interest in the Monte Carlo exercises shown in $\S$\ref{sec:simulation}.


\section{Monte Carlo Experiments}
\label{sec:simulation}
We now present two simulation studies to illustrate the performance of our algorithm. In the first simulation study, we use the canonical bus engine replacement problem introduced by \cite{Rust_1987} as the setting for our experiments. Rust's framework has been widely used as a benchmark to compare the performance of newly proposed estimators/algorithms for single-agent dynamic discrete choice models \citep{Hotz_etal_1994, Aguirregabiria_Mira_2002, Arcidiacono_Miller_2011}. In the second simulation study, we use the durable good purchase problem as our setting \citep{Song_Chintagunta_2003, Nair_2007}. Durable goods adoption is a well-studied and important problem in marketing; as such, it is a good setting to explore the practical applicability of our approach. 

\subsection{Infinite horizon case: Engine replacement problem}
\label{ssec:bus_engine}
In Rust's model, a single agent (Harold Zurcher) chooses whether to replace the engine of a bus or continue maintaining it in each period. The maintenance cost is linearly increasing in the mileage of the engine, while replacement constitutes a one-time lump-some cost. The intertemporal trade-off is as follows -- by replacing the bus engine, he pays a high replacement cost today and has lower maintenance costs in the future. If he chooses not to replace the bus engine, he avoids the replacement cost but will continue to pay a higher maintenance cost in the future. This is a stationary regenerative problem, i.e., time does not enter the state variables. At any point in time, the replacement decision resets the state space to a known and fixed value. 

\subsubsection{Data Generating Process}
We start with the basic version of the model without the high-dimensional state variables. Here, the per-period utility function of two choices is given by:
\begin{equation}
\label{eq:sim_utility_function}
\begin{aligned}
    u(x_{it},d_{it}=0) &= -c_m x_{it} + \epsilon_{i0t}\\
    u(x_{it},d_{it}=1) &= -c_r + \epsilon_{i1t},
\end{aligned}
\end{equation}
where $x_{it}$ is the mileage of bus $i$ at time $t$, $c_m$ is the per-mile maintenance cost, $c_r$ is the cost of replacing the engine, and $\{\epsilon_{i0t}, \epsilon_{i1t}\}$ are the error terms associated with the two choices. Next, we assume that the mileage increases by one unit in each period \footnote{This choice is just for the sake of increasing simplicity. However, our framework can easily handle stochastic state transitions as well.}. Formally: \begin{equation}
\label{eq:sim_state_transition}
\begin{aligned}
    &\text{if} \;\; d_{it} = 0,\;\;  \textrm{then} \;\;  x_{it+1} = x_{it} + 1 \;\;\textrm{if} \;\; x_{it} < 20, \;\;\textrm{else} \;\; x_{it+1} = x_{it} \\
    &\text{if} \;\; d_{it} = 1, \;\; \textrm{then} \;\;  x_{it+1} = 1 
\end{aligned}
\end{equation}
The maximum mileage is capped at 20, i.e., after the mileage hits 19, it continues to stay there.

We now extend the problem to incorporate a set of high-dimensional state variables $\Q$ that can affect the utility function and state transition. $\Q$ can include all the potential variables that can affect utilities and state transitions. For example, the mileage accrued may vary depending on the bus route, weather of the day, etc. Similarly, the replacement costs may vary by bus brands and/or economic conditions. A priori, it can be hard to identify which of these state variables and their combinations matter. Our method allows us to include all potential variables in the utility and state transition, and identify the partitions that matter.

In our simulations, we expand the utility function and state transition to include $\Q$ as follows:
\begin{equation}
\begin{aligned}
    u(x_{it},\Pii^*(q_{it}),d_{it}=0) &= c_m x_{it} + \epsilon_{i0t}\\
    u(x_{it},\Pii^*(q_{it}),d_{it}=1) &= f_{dc}(\Pii^*(q_{it})) +  \epsilon_{i1t} \\
\end{aligned}
\label{eq:extended_utility}
\end{equation}
\begin{equation}
\label{eq:extended_mileage_transition}
\begin{aligned}
    &\text{if} \;\; d_{it} = 0,\;\;  \textrm{then} \;\;  x_{it+1} = x_{it} + f_{tr}(\Pii^*(q_{it}))  \;\;\textrm{if} \;\; x_{it} < 20, \;\; \textrm{else} \;\; x_{it+1} = x_{it} \\
    &\text{if} \;\; d_{it} = 1 , \;\; \textrm{then} \;\;  x_{it+1} = f_{tr}(\Pii^*(q_{it}))
\end{aligned}
\end{equation}
where $q_{it}$ is the high-dimensional state variable for bus $i$ at time $t$, and $f_{dc}(\Pii^*(q_{it}))$ and $f_{tr}(\Pii^*(q_{it}))$ are functions that specify the effect of $q_{it}$ on the replacement cost and state transition, respectively. Note that we use $\Pii^*(q_{it})$ instead of $q_{it}$ since $\Pii^*(q_{it})$ is a perfect discretization (and conveys the same information) as $q_{it}$. Finally, we set $c_m = -0.2$ in both our simulation studies.\footnote{Note that the specific functional form is chosen for convenience, and the algorithm works even with fully non-parametric utilities within a partition and non-parametric state transitions across partitions.}

Next, we assume that $\Q$ consists 10 variable: $q^1, \dots, q^{10}$, where $q^i \in \{0,1,\dots,9\}$. However, only the first two variables affect the data-generating process, and the rest of them are irrelevant. In principle, our algorithm is robust to the inclusion of irrelevant state variables. Therefore, the extra eight variables in $\Q$ allow us to examine if this is indeed the case in practice. In our simulations, $q^1$ and $q^2$ partition $\Q$ into four regions $\{\pi_1,\pi_2,\pi_3,\pi_4\}$ such that all the observations within a partition have the same choice and state transition probabilities. The partitions are given by:
\begin{align}
\label{eq:pi_structure}
    \Pii^*(q) =     
    \begin{cases}
      \pi_1, & \text{if}\ q^1<5 \:\text{and}\: q^2<5 \\
      \pi_2, & \text{if}\ q^1<5 \:\text{and}\: q^2 \geq 5 \\
      \pi_3, & \text{if}\ q^1 \geq 5 \:\text{and}\: q^2<5 \\
      \pi_4, & \text{if}\ q^1 \geq 5 \:\text{and}\: q^2 \geq 5
    \end{cases}
\end{align}
The mileage transitions are as described in Equation \eqref{eq:extended_mileage_transition}. We also need to define the state transition in the $\Q$ space. Note that only the state transitions between $\pi$s matter, i.e., conditional on the partition $\pi$, the exact $q$ is not informative of utilities or state transitions random. We consider three different state transition models in our simulations:
\squishlist
    \item No transition: The agent remains within the same partition in each period: $\Pii^*(q_{it})=\Pii^*(q_{it+1})$.
      
    \item Random transition: Agents' transitions in the $\Q$-space are completely random. In each period, an agent randomly transitions from one partition to another such that: $\Pii^*(q_{it}) \perp \Pii^*(q_{it+1})$.
    
    \item Sparse transition: After each period, the agent remains in the same partition with probability $0.5$ and moves to the next partition with probability $0.5$. We choose the order of partitions as follows: $\pi_2$ is after $\pi_1$, $\pi_3$ is after $\pi_2$, $\pi_4$ is after $\pi_3$, and $\pi_1$ is after $\pi_4$.  
\squishend

Next, we consider two different options for $f_{tr}$ and $f_{dc}$ in Equations \eqref{eq:extended_utility} and \eqref{eq:extended_mileage_transition} each, as follows:
\begin{equation}
f_{tr}([\pi_1, \pi_2, \pi_3, \pi_4]) = 
    \begin{cases}
      [0,1,2,3] \text{ in the dissimilar mileage transition case}\\
      [1,1,1,1] \text{ in the similar mileage transition case}
    \end{cases}
\end{equation}
\begin{equation}
f_{dc}([\pi_1, \pi_2, \pi_3, \pi_4]) = 
    \begin{cases}
      [-7,-6,-5,-4] \text{ in the dissimilar replacement costs case}\\
      [-5,-5,-5,-5] \text{ in the  similar replacement costs case}
    \end{cases}
\end{equation}
Together, this gives us $3 \times 2 \times 2 = 12$ possible scenarios for the data-generating process. For example, one possible data-generating process could be $\{$No transition, Similar mileage transition, Dissimilar replacement cost$\}$, where $f_{tr}([\pi_1, \pi_2, \pi_3, \pi_4]) = [1,1,1,1]$ and $f_{dc}([\pi_1, \pi_2, \pi_3, \pi_4]) = [-7,-6,-5,-4]$. In this case, $\Q$ does not affect mileage transitions but only the flow utilities. On the other hand, if the data-generating process is $\{$Random transition, Dissimilar mileage transition, Similar replacement cost$\}$, then $\Q$ affects state transitions but not flow utilities. By considering all such possible scenarios, we are able to explore how the algorithm performs as the data-generating process changes.

We now simulate data and recover the structural parameters for each of these cases using the approach outlined in $\S$\ref{ssec:est_summary}. In the first step, we use our recursive partitioning procedure to generate the partitioning (as described in $\S$\ref{sec:our_approach}). While doing so, we set the hyperparameters as follows: {\it minimum lift} $= 10^{-10}$, {\it minimum number of observations} $= 1$, and $\lambda_{rel} = 1$. Further, for each case, we run the algorithm (and then perform the estimation) for four different values of {\it maximum partitions}: $1, 2, 4,$ or $6$. The single partition case is equivalent to ignoring the high-dimensional state variable $\Q$. Setting {\it maximum partitions} to 2 and 6 allows us to examine how our algorithm performs when we allow for under-discretization and over-discretization, respectively. Then, once we have a partition, we use the Nested Fixed Point algorithm by \citet{Rust_1987} to estimate the structural parameters for each case.   


\subsubsection{Results} 
For each of the 12 data generation processes, we run 100 simulations. In each simulation, we generate data for 400 buses for 100 time periods. For each simulated dataset, we first use our recursive partitioning algorithm to discretize the high-dimensional state space $\Q$ and obtain $\hat{\PP}$. While doing so, we consider four different versions of the recursive partitioning algorithm, where each version allows for a different value of {\it the maximum number of partitions} -- $\{1, 2, 4, 6\}$. Then, as discussed in $\S$\ref{ssec:est_summary}, we simply treat the set of estimated partitions in $\hat{\Pii}^*(q)$, i.e., $\hat{\PP}$ as an extra categorical variable in addition to $\X$ at the estimation stage, and estimate the structural utility parameters. The parameter estimates of $c_m$ from these simulations are presented in Table \ref{table:sim1_alt_result}. Next, in Table \ref{table:simple_simulation_res}, we focus only on two versions of the recursive partitioning model -- (1) where we allow for four partitions of $\Q$, and (2) one where we neglect $\Q$, which is similar to treating all the data as being in one partition, and present the replacement cost estimates. 

\begin{sidewaystable}[!htbp]
\centering
\footnotesize
\begin{tabular}{c|ccc|cccc}
\toprule
\textbf{Case} & \textbf{Effect on}  & \textbf{Effect on} & \textbf{Transition} & \multicolumn{4}{c}{\multirow{2}{*}{\textbf{Number of Allowed Partitions}}}\\
\textbf{No.} & \textbf{Replacement}  & \textbf{Mileage} &    \textbf{in $\Pii^*(\Q)$}  &  & \\
& \textbf{Cost}   & \textbf{Transition}   & \textbf{Space} & \textbf{1}& \textbf{2} & \textbf{4} & \textbf{6}\\
\midrule
Case 1 & Dissimilar & Dissimilar & No transition & -0.135 & -0.174 & -0.194 & -0.193 \\
 &  &  &  & (-0.144, -0.128) & (-0.18, -0.167) & (-0.205, -0.187) & (-0.205, -0.187) \\
Case 2 & Dissimilar & Dissimilar & Random transition & -0.149 & -0.186 & -0.200 & -0.199 \\
 &  &  &  & (-0.156, -0.144) & (-0.192, -0.18) & (-0.206, -0.193) & (-0.205, -0.193) \\
Case 3 & Dissimilar & Dissimilar & Sparse transition & -0.123 & -0.185 & -0.200 & -0.199 \\
 &  &  &  & (-0.128, -0.118) & (-0.196, -0.169) & (-0.209, -0.192) & (-0.208, -0.192) \\
Case 4 & Dissimilar & Similar & No transition & -0.164 & -0.191 & -0.199 & -0.198 \\
 &  &  &  & (-0.172, -0.157) & (-0.198, -0.184) & (-0.207, -0.19) & (-0.207, -0.189) \\
Case 5 & Dissimilar & Similar & Random transition & -0.172 & -0.193 & -0.200 & -0.200 \\
 &  &  &  & (-0.18, -0.166) & (-0.203, -0.182) & (-0.213, -0.19) & (-0.211, -0.189) \\
Case 6 & Dissimilar & Similar & Sparse transition & -0.170 & -0.099 & -0.200 & -0.200 \\
 &  &  &  & (-0.179, -0.163) & (-0.13, -0.067) & (-0.216, -0.191) & (-0.215, -0.191) \\
Case 7 & Similar & Dissimilar & No transition & -0.178 & -0.193 & -0.200 & -0.199 \\
 &  &  &  & (-0.186, -0.169) & (-0.2, -0.185) & (-0.206, -0.192) & (-0.206, -0.192) \\
Case 8 & Similar & Dissimilar & Random transition & -0.185 & -0.196 & -0.200 & -0.199 \\
 &  &  &  & (-0.191, -0.179) & (-0.204, -0.189) & (-0.207, -0.193) & (-0.207, -0.192) \\
Case 9 & Similar & Dissimilar & Sparse transition & -0.174 & -0.236 & -0.200 & -0.200 \\
 &  &  &  & (-0.184, -0.166) & (-0.252, -0.225) & (-0.213, -0.192) & (-0.213, -0.192) \\
Case 10 & Similar & Similar & No transition & -0.200 & -0.199 & -0.199 & -0.198 \\
 &  &  &  & (-0.207, -0.193) & (-0.207, -0.192) & (-0.207, -0.192) & (-0.206, -0.191) \\
Case 11 & Similar & Similar & Random transition & -0.200 & -0.200 & -0.199 & -0.198 \\
 &  &  &  & (-0.207, -0.193) & (-0.207, -0.192) & (-0.206, -0.19) & (-0.205, -0.19) \\
Case 12 & Similar & Similar & Sparse transition & -0.200 & -0.199 & -0.199 & -0.198 \\
 &  &  &  & (-0.206, -0.191) & (-0.206, -0.191) & (-0.205, -0.191) & (-0.205, -0.19) \\
\bottomrule
\end{tabular}

\caption{\label{table:sim1_alt_result} The estimated mileage maintenance cost, $c_m$, and the 96\% bootstrap confidence interval around the mean in each of the 12 Monte Carlo simulations. Each cell is a result of 100 rounds of simulation. Note that the case where the number of partitions is set to 1 is equivalent to completely neglecting $\Q$. }
\end{sidewaystable}
\begin{sidewaystable}[!htbp]
\centering
\footnotesize
\begin{tabular}{c|ccc|cccc|c}
\toprule
\textbf{Case} & \textbf{Effect on}  & \textbf{Effect on} & \textbf{Transition} & \multicolumn{4}{c}{\multirow{2}{*}{Incorporating discretized $\Q$}} & \multicolumn{1}{c}{\multirow{2}{*}{Neglecting $\Q$}}\\
\textbf{No.} & \textbf{Replacement}  & \textbf{Mileage} &    \textbf{in $\Pii^*(\Q)$}  &  & \\
 & \textbf{Cost}   & \textbf{Transition}   & \textbf{Sapce}         &                  $f_{dc}(\pi_1)$ &             $f_{dc}(\pi_2)$ &             $f_{dc}(\pi_3)$ &             $f_{dc}(\pi_4)$ &           {$c_r$}  \\
\midrule
Case 1 & Dissimilar & Dissimilar & No transition & -6.792 & -5.846 & -4.866 & -4.448 & -5.395 \\
 &  &  &  & (-7.15, -6.521) & (-6.14, -5.639) & (-5.165, -4.659) & (-4.684, -3.96) & (-5.675, -5.147) \\
Case 2 & Dissimilar & Dissimilar & Random transition & -7.011 & -6.002 & -4.997 & -4.001 & -4.965 \\
 &  &  &  & (-7.221, -6.814) & (-6.161, -5.834) & (-5.177, -4.844) & (-4.157, -3.836) & (-5.108, -4.829) \\
Case 3 & Dissimilar & Dissimilar & Sparse transition & -7.016 & -6.014 & -5.000 & -4.000 & -4.894 \\
 &  &  &  & (-7.251, -6.747) & (-6.189, -5.848) & (-5.153, -4.87) & (-4.212, -3.834) & (-5.065, -4.771) \\
Case 4 & Dissimilar & Similar & No transition & -6.986 & -5.987 & -4.987 & -3.989 & -4.687 \\
 &  &  &  & (-7.243, -6.717) & (-6.233, -5.774) & (-5.173, -4.832) & (-4.148, -3.863) & (-4.86, -4.557) \\
Case 5 & Dissimilar & Similar & Random transition & -7.033 & -6.009 & -5.018 & -4.012 & -4.670 \\
 &  &  &  & (-7.243, -6.8) & (-6.256, -5.766) & (-5.24, -4.825) & (-4.194, -3.833) & (-4.815, -4.545) \\
Case 6 & Dissimilar & Similar & Sparse transition & -7.043 & -6.009 & -5.004 & -4.001 & -4.725 \\
 &  &  &  & (-7.366, -6.817) & (-6.294, -5.803) & (-5.249, -4.811) & (-4.262, -3.843) & (-4.886, -4.592) \\
Case 7 & Similar & Dissimilar & No transition & -5.058 & -5.012 & -4.971 & -4.929 & -5.318 \\
 &  &  &  & (-5.223, -4.921) & (-5.115, -4.893) & (-5.088, -4.846) & (-5.059, -4.807) & (-5.511, -5.14) \\
Case 8 & Similar & Dissimilar & Random transition & -5.042 & -5.013 & -4.986 & -4.960 & -5.043 \\
 &  &  &  & (-5.206, -4.907) & (-5.188, -4.87) & (-5.136, -4.856) & (-5.104, -4.839) & (-5.181, -4.882) \\
Case 9 & Similar & Dissimilar & Sparse transition & -5.062 & -5.016 & -4.986 & -4.948 & -5.038 \\
 &  &  &  & (-5.272, -4.891) & (-5.205, -4.855) & (-5.179, -4.841) & (-5.114, -4.776) & (-5.226, -4.87) \\
Case 10 & Similar & Similar & No transition & -5.045 & -4.998 & -4.959 & -4.918 & -4.995 \\
 &  &  &  & (-5.206, -4.888) & (-5.135, -4.855) & (-5.11, -4.813) & (-5.062, -4.773) & (-5.123, -4.852) \\
Case 11 & Similar & Similar & Random transition & -5.036 & -4.998 & -4.974 & -4.934 & -5.000 \\
 &  &  &  & (-5.232, -4.892) & (-5.151, -4.878) & (-5.103, -4.843) & (-5.078, -4.767) & (-5.119, -4.869) \\
Case 12 & Similar & Similar & Sparse transition & -5.011 & -4.992 & -4.971 & -4.948 & -4.995 \\
 &  &  &  & (-5.142, -4.888) & (-5.116, -4.883) & (-5.102, -4.848) & (-5.085, -4.82) & (-5.118, -4.875) \\
\bottomrule
\end{tabular}
\caption{\label{table:simple_simulation_res} The estimated replacement cost and their 96\% bootstrap confidence interval around the mean in each of the 12 Monte Carlo simulations. Each row is a result of 100 rounds of simulation. }
\end{sidewaystable}

As we can see from both tables, neglecting $\Q$ often leads to biased estimates; see column 5 in Table \ref{table:sim1_alt_result} and the last column in Table \ref{table:simple_simulation_res}. This is true even when $q_{it}$ does not directly affect the flow utility and there is no serial correlation in $q$ over time, e.g., Case 8 in Table \ref{table:sim1_alt_result}. In this case, the bias arises because $\Q$ affects the state transitions through $f_{tr}$ (which in turn affects the expected future value function). Thus, when we neglect $\Q$ in the estimation procedure, it is captured in the error term, which in turn violates the conditional independence assumption that $\epsilon_{it+1}$ and $x_{it+1}$ are independent of $\epsilon_{it}$. Further, the bias seems to be worse when $\Q$ affects both flow utility and state transition, especially when $\Q$ itself has state dependence in how it transitions into the next period (e.g., case 3). Also, as expected, in cases where $\Q$ does not affect the utility function or the state transition (cases 10, 11, and 12 in both tables), neglecting $\Q$ does not bias the parameter estimates. This is because $\Q$ is an irrelevant state variable in these cases. 

Finally, columns 6 and 8 in Table \ref{table:sim1_alt_result} highlight the impact of under- and over-discretization. As expected, over-discretization does not introduce any bias in the estimation procedure since it does not violate any estimation assumption. However, it does lead to marginally higher variance (compared to column 7) since we estimate more parameters when we over-discretize. However, under-discretization does lead to bias; though, by and large, the bias seems to be lower than completely ignoring $\Q$. In sum, these simulations show that our recursive partitioning approach can help reduce the dimensionality of the dynamic discrete choice problem and help with correcting for bias and recovering true primitive parameters.

\subsection{Non-stationary Setting: Durable Goods Adoption}
\label{ssec:finite_horizon_simulations}

In this section, we apply the non-stationary version of our recursive partitioning algorithm (RePaD), from $\S$\ref{ssec:finite_horizon} to the durable goods adoption problem. Durable goods adoption is an important and well-studied problem in the marketing literature; see \cite{Song_Chintagunta_2003} and \cite{Nair_2007}, who apply the dynamic discrete choice framework to this problem in the case of digital cameras and video-game consoles, respectively. In non-stationary settings, the Markov process is not regenerative. This could happen because the time horizon is finite and/or because time, as a state variable, directly affects the flow utility or evolution of certain state variables. For example, in the durable goods adoption case, prices usually decrease over the lifetime of the product, and/or the effective quality increases. Alternatively, the product may have a well-defined life-cycle after which consumers no longer purchase the product.




\subsubsection{High-dimensional version of durable goods adoption problem}

To simulate the finite horizon durable goods adoption problem, we adopt a simplified version of the model from \cite{Song_Chintagunta_2003}. In this model, consumers decide when to make a one-and-done purchase of a durable product over a finite time horizon $\T$. Each period, they evaluate the utility of purchasing immediately versus waiting for future price drops. The inter-temporal trade-off is as follows -- purchasing today allows the consumer to start enjoying the product right away (higher consumption utility), but the purchase itself is costlier now than in the future (since prices go down in expectation). On the other hand, waiting yields zero consumption utility in the current period but allows you to buy the product at a lower price in the future.

For our experiments, we consider a high-dimensional version of this problem, where there are two sets of state variables -- (1) a low-dimensional state variable that we know affects utilities, price, $p_{it}$, and (2) a set of high-dimensional state variables, $Q_{it}$ that capture a variety of information about the consumer, market, and product that can potentially affect consumers' decisions. For example, this could include consumers' demographics (e.g., age, income, geographic location, gender, job description), product features (e.g., memory, product size, ease of use, camera resolution), and market and macro-economic variables (e.g., availability of competing products, their features, and prices, and other metrics on economic conditions). In settings where the consumer already has a base/older version of the product (e.g., say an older version of an iPhone), $Q_{it}$ could also capture a number of usage/behavioral features (e.g., how much the user uses the product, number of apps installed, regularity of usage, payments made in different apps). As such, depending on the setting and data availability, $Q_{it}$ can be very high-dimensional. Further, in many settings, it is unclear exactly which of these state variables affect consumers' utilities and state transitions and how. 

One option is to ignore these high-dimensional state variables and/or include a small subset of them and assume a specific parametric form for how they transition and enter consumers' utilities. However, doing so would not allow us to capture the full extent of consumer-level heterogeneity or the full scope of market features, which in turn can lead to imperfect substantive insights and/or incorrect counterfactual analysis. For example, \citep{Ryan_Tucker_2012} study the adoption of a video-calling technology in a firm, where the researchers have access to a large set of user features (with a state space larger than $10^{602}$). They find that user features have significant ability to explain users' adoption decisions and that ignoring them can lead to substantially different insights. Therefore, it is important to be able to accommodate the high-dimensional state variables and flexibly model consumer-level heterogeneity, which our approach allows us to do.

\subsubsection{Data generating process} 
We now provide the details of the data-generating process used in the simulations. The utility of purchasing at time $t$ is given by:
\begin{equation}
    u(p_{it},q_{it},d_{it}=1) = \frac{f_{dc}(q_{it})}{1 - \beta} - \alpha p_{it} + \epsilon_{it},
\end{equation}
where $f_{dc}(q_{it})$ is a function representing the effect of the high-dimensional state-variables ($q_{it}$) on utility, $\beta$ is the discount factor, $p_{it}$ is the price of the product at time $t$ for consumer $i$, and $\epsilon_{it}$ is an i.i.d. Type I Extreme Value error term. Further, the price dynamics are modeled as follows:
\begin{equation}
    p_{it+1} = p_{it} - f_{tr}(q_{it}),
\end{equation}
where $f_{tr}(q_{it})$ is a function of the high-dimensional state variables $q_{it}$, determining the rate of price decay for consumer $i$ at time $t$. Prices are discretized in increments of 1 for simplicity.

The structure of $q_{it}$ is similar to the first simulation study, consisting of a 10-dimensional set of state variables, which are discretized into four regions $\{\pi_1, \pi_2, \pi_3, \pi_4 \}$ based on the first two variables, as described in Equation \eqref{eq:pi_structure}. Further, the $q_{it}s$ transition in the same way as described in the first simulation, i.e., we consider three transition models (no transition, random, and sparse transition).

Next, we consider two scenarios each for the functions $f_{tr}$ and $f_{dc}$:
\begin{equation}
f_{tr}([\pi_1, \pi_2, \pi_3, \pi_4]) = 
    \begin{cases}
      [3,2,1,0] \text{ in the dissimilar price transition case}\\
      [1,1,1,1] \text{ in the similar price transition case}
    \end{cases}
\end{equation}
\begin{equation}
f_{dc}([\pi_1, \pi_2, \pi_3, \pi_4]) = 
    \begin{cases}
      [19.4,19.6,19.8,20] \text{ in the dissimilar purchase utility case}\\
      [20,20,20,20] \text{ in the  similar purchase utility case}
    \end{cases}
\end{equation}
Overall, as before, this gives us 12 scenarios for the simulations. Next, we set the starting price at $t=1$ as \$400, the price coefficient in the utility function is set to $0.5$, and the number of periods is $\T = 30$. An agent who makes the purchase decision exits the market. All the other details of the data-generating process and discretization are similar to the first simulation.

Finally, because we now have a finite horizon problem, we use backward induction as the solution concept for obtaining value functions at each step instead of value function iteration. Backward induction leverages the finite time horizon to solve the Bellman equation iteratively, starting from the last period $t = \T$ and working backward to the first period $t = 1$, for a given value of parameter estimates.

\begin{sidewaystable}[!htbp]
\centering
\footnotesize
\begin{tabular}{c|ccc|cccc}
\toprule
\textbf{Case} & \textbf{Effect on}  & \textbf{Effect on} & \textbf{Transition} & \multicolumn{4}{c}{\multirow{2}{*}{\textbf{Number of Allowed Partitions}}}\\
\textbf{No.} & \textbf{Purchase}  & \textbf{Price} &    \textbf{in $\Pii^*(\Q)$}  &  & \\
& \textbf{Utility}   & \textbf{Transition}   & \textbf{Space} & \textbf{1}& \textbf{2} & \textbf{4} & \textbf{6}\\
\midrule
Case 1 & Dissimilar & Dissimilar & No transition & -0.146 & -0.372 & -0.499 & -0.501 \\
 &  &  &  & (-0.151, -0.139) & (-0.383, -0.361) & (-0.513, -0.484) & (-0.515, -0.487) \\
Case 2 & Dissimilar & Dissimilar & Random transition & -0.447 & -0.451 & -0.498 & -0.504 \\
 &  &  &  & (-0.464, -0.43) & (-0.465, -0.437) & (-0.514, -0.482) & (-0.521, -0.485) \\
Case 3 & Dissimilar & Dissimilar & Sparse transition & -0.358 & -0.408 & -0.499 & -0.504 \\
 &  &  &  & (-0.368, -0.347) & (-0.418, -0.398) & (-0.513, -0.484) & (-0.528, -0.484) \\
Case 4 & Dissimilar & Similar & No transition & -0.295 & -0.443 & -0.499 & -0.502 \\
 &  &  &  & (-0.306, -0.284) & (-0.455, -0.429) & (-0.513, -0.484) & (-0.52, -0.484) \\
Case 5 & Dissimilar & Similar & Random transition & -0.383 & -0.440 & -0.499 & -0.509 \\
 &  &  &  & (-0.397, -0.369) & (-0.455, -0.424) & (-0.521, -0.479) & (-0.531, -0.491) \\
Case 6 & Dissimilar & Similar & Sparse transition & -0.380 & -0.480 & -0.500 & -0.504 \\
 &  &  &  & (-0.395, -0.368) & (-0.5, -0.462) & (-0.522, -0.48) & (-0.529, -0.482) \\
Case 7 & Similar & Dissimilar & No transition & -0.320 & -0.484 & -0.500 & -0.502 \\
 &  &  &  & (-0.338, -0.305) & (-0.502, -0.467) & (-0.516, -0.482) & (-0.519, -0.486) \\
Case 8 & Similar & Dissimilar & Random transition & -0.538 & -0.506 & -0.497 & -0.497 \\
 &  &  &  & (-0.56, -0.519) & (-0.522, -0.489) & (-0.512, -0.481) & (-0.519, -0.48) \\
Case 9 & Similar & Dissimilar & Sparse transition & -0.578 & -0.490 & -0.497 & -0.499 \\
 &  &  &  & (-0.597, -0.554) & (-0.504, -0.475) & (-0.514, -0.483) & (-0.516, -0.482) \\
Case 10 & Similar & Similar & No transition & -0.500 & -0.500 & -0.499 & -0.501 \\
 &  &  &  & (-0.527, -0.48) & (-0.526, -0.479) & (-0.523, -0.477) & (-0.519, -0.478) \\
Case 11 & Similar & Similar & Random transition & -0.500 & -0.499 & -0.498 & -0.500 \\
 &  &  &  & (-0.518, -0.48) & (-0.517, -0.48) & (-0.517, -0.477) & (-0.517, -0.476) \\
Case 12 & Similar & Similar & Sparse transition & -0.498 & -0.498 & -0.497 & -0.499 \\
 &  &  &  & (-0.517, -0.476) & (-0.516, -0.475) & (-0.515, -0.475) & (-0.519, -0.482) \\
\bottomrule
\end{tabular}

\caption{\label{table:sim1_alt_finite_result} The estimated price coefficient, and the 96\% bootstrap confidence interval around the mean in each of the 12 Monte Carlo simulations. Each cell is a result of 100 rounds of simulation. Note that the case where the number of partitions is set to 1 is equivalent to completely neglecting $\Q$. }
\end{sidewaystable}
\begin{sidewaystable}[!htbp]
\centering
\footnotesize
\begin{tabular}{c|ccc|cccc|c}
\toprule
\textbf{Case} & \textbf{Effect on}  & \textbf{Effect on} & \textbf{Transition} & \multicolumn{4}{c}{\multirow{2}{*}{Incorporating discretized $\Q$}} & \multicolumn{1}{c}{\multirow{2}{*}{Neglecting $\Q$}}\\
\textbf{No.} & \textbf{Purchase}  & \textbf{Price} &    \textbf{in $\Pii^*(\Q)$}  &  & \\
 & \textbf{Utiliy}   & \textbf{Transition}   & \textbf{Sapce}         &                  $f_{dc}(\pi_1)$ &             $f_{dc}(\pi_2)$ &             $f_{dc}(\pi_3)$ &             $f_{dc}(\pi_4)$ &           {$constant$}\\
\midrule
Case 1 & Dissimilar & Dissimilar & No transition & 19.36 & 19.56 & 19.76 & 19.96 & 5.68 \\
 &  &  &  & (18.79, 19.9) & (18.98, 20.12) & (19.17, 20.32) & (19.37, 20.53) & (5.42, 5.91) \\
Case 2 & Dissimilar & Dissimilar & Random transition & 19.33 & 19.54 & 19.74 & 19.94 & 17.82 \\
 &  &  &  & (18.72, 19.93) & (18.89, 20.14) & (19.09, 20.34) & (19.29, 20.54) & (17.11, 18.5) \\
Case 3 & Dissimilar & Dissimilar & Sparse transition & 19.36 & 19.56 & 19.76 & 19.96 & 14.19 \\
 &  &  &  & (18.78, 19.87) & (18.98, 20.11) & (19.18, 20.31) & (19.37, 20.51) & (13.76, 14.6) \\
Case 4 & Dissimilar & Similar & No transition & 19.36 & 19.56 & 19.75 & 19.95 & 11.53 \\
 &  &  &  & (18.76, 19.93) & (18.96, 20.14) & (19.16, 20.33) & (19.35, 20.54) & (11.1, 11.97) \\
Case 5 & Dissimilar & Similar & Random transition & 19.34 & 19.55 & 19.75 & 19.95 & 15.15 \\
 &  &  &  & (18.55, 20.2) & (18.74, 20.42) & (18.94, 20.62) & (19.13, 20.82) & (14.6, 15.71) \\
Case 6 & Dissimilar & Similar & Sparse transition & 19.39 & 19.59 & 19.79 & 19.99 & 15.02 \\
 &  &  &  & (18.6, 20.26) & (18.79, 20.47) & (18.98, 20.67) & (19.18, 20.88) & (14.57, 15.63) \\
Case 7 & Similar & Dissimilar & No transition & 19.98 & 19.99 & 19.99 & 20.00 & 12.89 \\
 &  &  &  & (19.28, 20.63) & (19.29, 20.64) & (19.29, 20.65) & (19.3, 20.66) & (12.31, 13.65) \\
Case 8 & Similar & Dissimilar & Random transition & 19.89 & 19.89 & 19.90 & 19.90 & 21.58 \\
 &  &  &  & (19.22, 20.48) & (19.22, 20.49) & (19.22, 20.49) & (19.22, 20.49) & (20.79, 22.46) \\
Case 9 & Similar & Dissimilar & Sparse transition & 19.89 & 19.89 & 19.90 & 19.90 & 23.21 \\
 &  &  &  & (19.29, 20.56) & (19.29, 20.56) & (19.3, 20.56) & (19.3, 20.57) & (22.22, 23.97) \\
Case 10 & Similar & Similar & No transition & 19.94 & 19.94 & 19.95 & 19.95 & 20.02 \\
 &  &  &  & (19.06, 20.93) & (19.06, 20.93) & (19.06, 20.94) & (19.07, 20.94) & (19.19, 21.09) \\
Case 11 & Similar & Similar & Random transition & 19.93 & 19.93 & 19.93 & 19.93 & 20.00 \\
 &  &  &  & (19.07, 20.67) & (19.07, 20.67) & (19.07, 20.67) & (19.07, 20.67) & (19.2, 20.72) \\
Case 12 & Similar & Similar & Sparse transition & 19.87 & 19.87 & 19.87 & 19.87 & 19.94 \\
 &  &  &  & (18.99, 20.61) & (18.99, 20.62) & (18.99, 20.62) & (18.99, 20.62) & (19.03, 20.67) \\
\bottomrule
\end{tabular}
\caption{\label{table:simple_simulation_finite_res} The estimated purchase utility and their 96\% bootstrap confidence interval around the mean in each of the 12 Monte Carlo simulations. Each row is a result of 100 rounds of simulation. }
\end{sidewaystable}

\subsubsection{Results}
As before, we simulate data and recover the utility parameters for each of these cases. Specifically, in the first step, we use our recursive partitioning procedure to generate the partitioning. The hyper-parameters are fixed to the same values as the first simulation study. Once we have the partitions, we use the backward induction Nested Fixed Point algorithm by \citet{Rust_1987} in combination with backward induction to estimate the structural parameters for each case. Further, as before, we have 12 data generation scenarios, and we run 100 simulations for each scenario. Each simulation consists of data from 10000 consumers. Table  \ref{table:sim1_alt_finite_result} shows the estimates for the price coefficient for the 12 scenarios for four different partitioning scenarios. Table  \ref{table:simple_simulation_finite_res} shows the estimates of $f_{dc}$ in all the 12 scenarios when we set the number of partitions to 4.

We see that our approach can recover the true parameter estimates without significant bias, even when we allow for over-partitioning. Further, neglecting $\Q$ leads to biased estimation, even in cases where the high-dimensional state variables only affect the state transition and not the purchase utility. This bias occurs across all scenarios except the last three, where $q_{it}$ does not affect state transitions or flow utilities. As before, the extent of bias varies depending on the degree of interaction between $\Q$ and the functions $f_{tr}$ and $f_{dc}$. In scenarios with dissimilar price transitions and purchase utilities, the omission of $\Q$ results in greater bias compared to scenarios with similar transitions or utilities. In summary, our simulations suggest that in high-dimensional settings where consumer heterogeneity is of importance, our method can help accurately capture and model this heterogeneity.

\subsection{Other Simulations and Checks}
Finally, we consider a series of tests and simulations to compare the performance of our approach with other approaches. We briefly discuss them below and refer readers to the Web Appendix for details.
\squishlist
\item In the simulations presented in the main text, we fix the hyper-parameters to default values. In Web Appendix $\S$\ref{app:hyp_expt}, we present additional results on hyper-parameter optimization and show how the choice of hyper-parameters can affect the performance of the approach. In particular, we show extensive results on how the relative information in decisions vs. state transitions affects the choice of $\lambda_{rel}$.

\item In Web Appendix $\S$\ref{subsec:RePaD_vs_CCP}, we present some comparisons against other approaches that can be adapted to high dimensional settings. First, we consider two-step estimators, where the first stage CCPs and state transitions are estimated non-parametrically or semi-parameterically \citep{Hotz_etal_1994, Norets_2012}. As discussed in $\S$\ref{subsec:curse_of_dimensionality}, defining and estimating state transitions is often infeasible in extremely high-dimensional settings. However, it is feasible to estimate CCPs flexibly and compare the accuracy of the CCPs obtained from using the raw set of high-dimensional state variables (i.e., using $x_{it}$, $q_{it}$) with those obtained from after partitioning based on our approach (i.e., using $x_{it}$, $\Pii(q_{it})$). We consider both flexible logits and neural networks for these first-stage estimators and show that across both types of estimators, the CCP estimates based on our partitioning approach consistently outperform (i.e., are more accurate than) those obtained by using the full set of state variables.

\item Finally, we consider two naive approaches to reduce the dimensionality of the high-dimensional state variables ($q_{it}$) -- $k$-means clustering and PCA. A key drawback of these approaches is that they reduce the dimensionality to capture the maximum variation in $q_{it}$. However, not all the variations in these variables may be relevant for the dynamic programming problem. That is, they are not able to address the fundamental challenge in our setting, which is to reduce the dimensionality of the state space while preserving decision-relevant information. Nevertheless, in Web Appendix $\S$\ref{subsec:RePaD_DimReduction}, we examine the ability of these two approaches to recover the true partitions in the data and show that they are unable to do so.
\squishend

\section{Applications, Extensions, Limitations, and Conclusion}
\label{sec:conclusion}
Finally, we present a discussion of other marketing settings where our approach can be applied, some potential extensions, and conclude with a brief summary of limitations.

\subsection{Applications}
\label{ssec:applications}
In the paper, we presented two sets of simulated experiments -- engine replacement problem and durable goods adoption -- to demonstrate the performance and applicability of our approach to high-dimensional dynamic discrete choice models. Beyond the technology and durable goods adoption problems discussed in $\S$\ref{ssec:finite_horizon}, there are many other natural settings where our approach is applicable. We now discuss a broader set of modern marketing applications that are likely to be suitable candidates for our method:
\squishlist
\item Promotion planning and pricing: In many online retail settings (e.g., Amazon, Instacart, Walmart), firms are interested in optimizing promotion/coupon delivery to customers. A large stream of marketing and economics literature has shown that customers tend to behave in a forward-looking fashion in these settings, i.e., they may choose to stockpile storable goods and wait for coupons/promotions to purchase \citep{Erdem_etal_2003, Hendel_Nevo_2006, VanHeerde_Neslin_2017}. In modern digital retail settings, firms have extensive information on product characteristics, consumers' demographics, and behavioral data on consumers' browsing and purchase behaviors, e.g., how often the consumer shops in the category, the set of prices/products they were exposed to during earlier shopping sessions, what products they subscribe to and the frequency of those subscriptions, what they searched for, and so on. Ideally, firms would like to leverage this information to optimize and personalize the cadence and depth of promotions given to consumers. However, it is unclear which (if any) of these different customer features affect utility and decision-making and how. Thus, in these settings, firms need DDC estimators that can incorporate high-dimensional consumer-level heterogeneity, and learn a lower-dimensional representation of customer features that best captures consumers' flow utilities and decisions. 

\item Software/services subscription and renewal: Many modern software and entertainment firms (e.g., Netflix, Spotify, DropBox, Adobe) rely on a Software as a Service (SaaS) model, where consumers pay a fixed monthly/yearly fee for using the service instead of pay-per-use pricing. In these settings, consumers often make forward-looking decisions, i.e., make the decision to commit or pay a fixed price now for a continued stream of future consumption. Firms have extensive information on consumers' prior and current product usage which can serve as a set of high-dimensional state variables that capture consumer-level heterogeneity, e.g., account demographics, frequency of consumption, type of content used/consumed, how many users within a household use the product, variety of content consumed, and so on. The firm may seek to optimize what products and features to offer to consumers and how to price different tiers of subscriptions and offerings. Again, in these settings, an in-depth understanding of how different consumer attributes and product features affect consumers' flow utility and inter-temporal decision-making can help firms optimize their marketing decisions. 


\item Mobile health apps: Mobile health apps and portals have taken off in the last few years and are now extensively used by lay consumers, patients, and healthcare providers (FitBit, MyChart, etc.). These apps and health-care providers are often interested in optimizing the delivery of reminders and recommendations for regular screenings, health appointments, and healthy behaviors \citep{mdotcenter, Liao_etal_2020}. Prior research has shown that consumers tend to trade off the immediate cost of engaging in healthy behaviors or getting medical screenings with the long-term benefits from these behaviors/screening \citep{Fang_Wang_2015}. Given the detailed data available on consumers' behavioral and health patterns, it is possible for these apps to leverage the full scale of these high-dimensional features to deliver personalized and appropriate communications/interventions (e.g., in-app notifications, reminders) to promote healthy behaviors. However, to do so, they will first need to estimate a dynamic discrete choice model of consumers' utility functions as a function of both high-dimensional consumer features/behaviors and physician/app-level communications and interventions. Our approach can help with specifying and estimating DDC models in this high-dimensional setting.


\squishend
Beyond the above applications, there are numerous other settings where modern firms have access to a large amount of features on consumer behavior and product attributes, as well as the ability to deliver real-time personalized marketing interventions. In many of these settings, consumers often make inter-temporal trade-offs, and there is significant heterogeneity in consumers' utilities as a function of their demographic and behavioral features. However, in many of these cases, we do not have extensive theory on which consumer attributes affect their utility and how, and the size of these features is often too large to rely on trial-and-error methods. Our estimation approach can help in these situations -- it can take a high-dimensional set of state variables as input and identify a lower-dimensional state-space that is relevant for consumer-decision-making, which can then be used to model consumers' utility functions and perform counterfactuals. 


\subsection{Extensions}
We now discuss two natural ways to improve the performance of the method and extend it to other cases.

\textbf{Cross-fitting:} The idea of cross-fitting has been used in Double Machine Learning approaches to avoid overfitting bias \citep{Chernozhukov_etal_2018, Chernozhukov_etal_2022}. Essentially, cross-fitting requires the researcher to fit the first-stage models on one dataset and then perform the estimation on a different dataset. In our setting, this can be done efficiently by splitting the data into two parts and using split-1 to learn the partitioning and performing the estimation on split-2 (taking the partition from split-1 as given) and vice versa. This separates the process of partitioning and estimation further, thereby reducing potential bias from overfitting/regularization during the first-stage partitioning. 


\textbf{Dynamic Games:} The method can be easily extended to dynamic games, e.g., entry models, oligopolistic models of competition and investment \citep{Bajari_etal_2007, Aguirregabiria_etal_2021}. In this case, in the recursive-partitioning algorithm, the state transition probabilities of an entity/agent would be conditioned on both their own actions as well as the actions of the other agents in the system. Then, once we have the partitioning, we can use a standard two-step estimator with the estimated partitions as another state variable, as we did in the case of the single-agent model (with the appropriate assumptions on the uniqueness of the equilibrium observed in the data). 

\subsection{Limitations and Conclusion}
Finally, we note that there are a series of limitations with our method, many of which are shared by other tree-based approaches. First, if $\Q$ enters the state transitions and decisions in a continuous fashion (instead of as partitions/categories), then no empirical partitioning will ever be perfect. This would be similar to using CART to capture a regression model where the explanatory variables enter regression in a continuous fashion, i.e., we can keep partitioning the data further and further without reaching a perfect discretization. In practice, this is not a big issue since the regularization procedure will stop the partitioning when the gains from partitioning are not sufficiently high to continue. However, in theory, there is likely to be some bias from the fact that we have not correctly accounted for the true partitions and have under-discretized $\Q$. Second, even though CART and other recursive partitioning methods are extremely popular for prediction and classification tasks and have many good properties, they lack some standard consistency properties. In particular, it is well-known that some data patterns cannot be captured by recursive partitioning even if the number of observations goes to infinity; see Web Appendix$\S$\ref{app:proof_con} and \citet{Biau_etal_2008} for details. Thus, there are no theoretical guarantees that this procedure will reach the true underlying partition for certain types of data-generating processes. Finally, from a practical perspective, we might not have enough observations to fully recover the accurate discretization when the dimensionality of $\Q$ is high, especially when we have a limited number of observations.



In practice, we can take a few steps to ensure that the partitioning algorithm is producing stable and reasonable results and is not unduly affected by these limitations. First, it is important to use hyper-parameter optimization to ensure that the discretization generated by our algorithm is generalizable to the data-generating process, not just the training sample. We can check the performance of estimated parameters in both training and validation sets and use the extent of the difference as a measure of the quality of our partitioning. Another suggestion is to check the sensitivity of the discretization and estimated parameters to the number of observations. For example, we can discretize $\Q$ and estimate the model with $80\%$ of the data, and check how close the discretization and estimated parameters are to the case where we use all the data. This could be a simple heuristic to test the robustness of the results.


In sum, our proposed algorithm allows researchers to reduce the estimation bias associated with ignoring relevant state variables, improve the fit of their models, and draw post-hoc inferences in a high-dimensional setting without theoretical guidance, all at relatively low compute costs. Further, because the recursive partitioning approach is modular, it does not require the user to develop or learn new estimation procedures. Indeed, it can be easily combined with any estimation procedure (e.g., nested fixed point, two-step) as a first-step approach. As such, we expect it to be easily applicable to a wide variety of settings.

\section*{Competing Interests Declaration}
Author(s) have no competing interests to declare.

\bibliographystyle{abbrvnat}
\bibliography{ref}

\newpage
\begin{appendices}

\setcounter{table}{0}
\setcounter{figure}{0}
\setcounter{equation}{0}
\setcounter{page}{0}
\renewcommand{\thetable}{A\arabic{table}}
\renewcommand{\thefigure}{A\arabic{figure}}
\renewcommand{\theequation}{A\arabic{equation}}
\renewcommand{\thepage}{\roman{page}}
\pagenumbering{roman}

\section{Proof of Convergence}
\label{app:proof_con}
In this section, we provide proof that our proposed algorithm converges to a perfect discretization. We start with a brief explanation of the structure of the proof.

The convergence proof consists of two steps. First, in \S\ref{subsec:likelihood_increase} we prove that the likelihood of observing data given partitioning $\Pii_r$ (at iteration $r$) and the optimal parameter value $\theta^*_r = \argmax_{\theta} \LL(\theta, \Pii_r)$ is increasing at each iteration of the algorithm. Therefore, each additional iteration does not reduce the likelihood of observing the data, i.e., $\LL(\theta^*_r, \Pii_r) \leq \LL(\theta^*_{r+1}, \Pii_{r+1})$. Then in \S\ref{subsec:likelihood_converge}, we prove that $\LL(\theta^*_r, \Pii_r) = \LL(\theta^*_{r+1}, \Pii_{r+1})$ if and only if $\Pii_r$ is a perfect discretization, proving that the proposed algorithm stops if and only if it reaches a perfect discretization. Before proceeding, we provide some definitions, assumptions, and lemmas that are used in the proof. 

\begin{definition}
We split the likelihood function into two parts: the decision part and the state transition part, denoted by $\LL_{dc}(\theta,\Pii)$ and $\LL_{st}(\Pii)$ respectively, such that $\LL(\theta,\Pii) = \LL_{dc}(\theta,\Pii) + \lambda \LL_{st}(\Pii)$. Note that the decision part of the likelihood is a function of the utility function parameters, $\theta$.
\end{definition}

\begin{definition}
Discretization $\Pii$ is a \textbf{parent} of discretization $\Pii'$ if for every $\pi' \in \Pii'$, there is a partition $\pi \in \Pii$ such that $\pi'$ is completely within $\pi$. In other words, discretization $\Pii$ is a \textit{parent} of discretization $\Pii'$ if one can generate discretization $\Pii'$ by further splitting discretization $\Pii$. 
\end{definition}

\begin{definition}
The expected value function and expected choice-specific value function are defined as follows:
\begin{align}
    \label{eq:value-function}
    \bar{V}(x,\pi;\theta, \Pii) &= \log \sum_{j \in \J} \exp{ v(x,\pi,j;\theta, \Pii) }
    \\
    \label{eq:choice-specific}
    v(x,\pi,j;\theta, \Pii) = \bar{u}(x,\pi,j;\theta, \Pii) &+ \beta \sum_{x' \in \X} \sum_{\pi' \in \Pii} \bar{V}(x',\pi';\theta, \Pii) g(x',\pi'|x,\pi,j;\theta,\Pii).
\end{align}
where $\beta$ is the discount factor usually set by the researcher. Assuming that error terms are drawn from Type 1 Extreme Value distribution, the predicted choice probabilities can be calculated as:
\begin{equation}
    \hat{p}(j|x,\pi;\theta, \Pii) = \frac{\exp{ v(x,\pi,j;\theta, \Pii)}}{\exp{ \bar{V}(x,\pi;\theta, \Pii)}}.
\end{equation}
\end{definition}

\begin{assumption}
\label{assumption:non_parametric_form}
We assume a fully non-parametric form for the flow utility and state transition function to avoid dependence on parametric functions for the proof and keep the analysis general, i.e., $\bar{u}(x,\pi,d;\theta, \Pii)$ and $g(x,\pi|x',\pi',j;\Pii)$ are constant for each combination of states, $\{x,\pi\}$, and decision $d$. 
\end{assumption}
\begin{assumption}
\label{assumption:no_bad_pattern}
There is no data pattern that is not discoverable with a single split.
\end{assumption}
Assumption \ref{assumption:no_bad_pattern} is common to all algorithms that are based on recursive partitioning. It has been shown that some patterns cannot be captured by recursive partitioning, even when the number of observations goes to infinity \citep{Biau_etal_2008}. A simple example of such patterns is presented in Figure \ref{figure:bad_pattern}. Assume observations are uniformly distributed throughout the covariate space. Also, assume that the flow utility in the crosshatched regions is $l$ and it is $h$ in the non-crosshatched regions. Any split would result in two sub-partitions with a similar number of observations in the crosshatched and non-crosshatched regions. The average utility would be equal to $\frac{h+l}{2}$ in the resulting two sub-partitions. Since the split does not increase $\LL_{dc}$, the algorithm does not add it to its discretization. Generally, recursive partitioning cannot capture variational patterns that are symmetric in a way that any splits would lead to two sub-partition with a similar average statistic. 

To overcome this shortcoming of recursive partitioning algorithms, we assume no data pattern exists that a single split in a discretization cannot partially capture. Then because of the assumption we can draw the following corollary.
\begin{corollary}
\label{col:assumption}
A discretization $\Pii$ is either perfect, or there exists a split such that the decision probabilities, average incoming transition probabilities, or outgoing transition probabilities on two sub-partitions are not equal. Formally, if discretization $\Pii$ is not perfect, there exists a split that partitions $\pi_p \in \Pii$ into $\pi_l$ and $\pi_r$ such that for a $\{x, x'\} \in \X$, $\pi' \in \Pii$ and $j \in \J$ at least one of the following inequalities holds: 
\begin{align*}
    \Pr(j|x,\pi_l) &\neq \Pr(j|x,\pi_r)\\
    \Pr(x',\pi'|x,\pi_l,j) &\neq\Pr(x',\pi'|x,\pi_r,j)\\
    \frac{\Pr(x,\pi_l|x',\pi',j)}{N(x,\pi_l)} &\neq \frac{\Pr(x,\pi_r|x',\pi',j)}{N(x,\pi_r)}\\
\end{align*}
\end{corollary}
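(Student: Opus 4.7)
The plan is to derive the corollary directly from the definition of perfect discretization combined with Assumption~\ref{assumption:no_bad_pattern}. I would argue by contraposition: suppose that for every candidate split of every $\pi_p \in \Pii$ into $\{\pi_l, \pi_r\}$, all three of the listed equalities hold; then I would show $\Pii$ must already be perfect. This framing makes the corollary essentially a restatement of Assumption~\ref{assumption:no_bad_pattern} in the particular language of the three probability objects that appear in Definition~1, and it lets me avoid reasoning about unions of splits.

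First, I would unpack what it means for $\Pii$ to fail to be perfect. By Definition~1, failure means there is some partition $\pi_p \in \Pii$ containing two points $q, q' \in \pi_p$ such that at least one of the three equalities in Equation~\eqref{eq:perfect_definition} is violated for some choice of $x, x' \in \X$, $q'' \in \Q$, and $j \in \J$. So at the pointwise level, $\pi_p$ exhibits heterogeneity in at least one of: choice probabilities, outgoing transition probabilities, or incoming transition probabilities. The task is to show that this pointwise heterogeneity can be reflected in the \emph{averaged} statistics across some binary split of $\pi_p$.

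Next, I would invoke Assumption~\ref{assumption:no_bad_pattern}. The assumption rules out the pathological symmetric configurations (like the crosshatched example in Figure~\ref{figure:bad_pattern}) where pointwise heterogeneity exists but every axis-aligned single split produces two sub-partitions whose \emph{averages} of the relevant statistic coincide. Applying the assumption separately to each of the three types of heterogeneity that might be present, I obtain that if pointwise heterogeneity exists in choice probability, outgoing transition, or incoming transition within $\pi_p$, then there is at least one single split of $\pi_p$ into $\{\pi_l, \pi_r\}$ whose resulting sub-partitions differ in the corresponding averaged statistic. Writing out these averages explicitly, $\Pr(j \mid x, \pi_l)$ is a weighted average of $\Pr(j \mid x, q)$ over $q \in \pi_l$ (with weights proportional to the number of observations at each $q$), and analogously for $\pi_r$ and for the other two statistics; the per-observation normalization in the third inequality matches the normalization of the incoming transition quantities in Definition~1.

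The main obstacle I anticipate is purely bookkeeping: making sure that the three cases of heterogeneity line up one-to-one with the three inequalities stated in the corollary, and that Assumption~\ref{assumption:no_bad_pattern} is strong enough to cover each case separately rather than only collectively. In particular, one must be careful that a split chosen to reveal heterogeneity in, say, outgoing transition probabilities is allowed to be different from the split that would reveal heterogeneity in choice probabilities; the corollary only requires the existence of \emph{some} split revealing \emph{some} one of the three inequalities, so a case-split over which equality fails is sufficient. Once this is cleanly stated, the contrapositive is immediate: if no such split exists for any partition, then within every $\pi_p$ all three equalities hold pointwise (by Assumption~\ref{assumption:no_bad_pattern} applied in the other direction), which is exactly the definition of a perfect discretization.
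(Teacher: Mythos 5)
Your proposal is correct and follows essentially the same route as the paper: the paper gives no separate formal proof of this corollary but presents it as an immediate consequence of Assumption~\ref{assumption:no_bad_pattern}, exactly as you do, with the assumption supplying the bridge from pointwise heterogeneity within a partition (failure of Definition~1) to the existence of a single split whose two sub-partitions differ in the corresponding averaged statistic. Your added bookkeeping --- the explicit contrapositive framing, the case split over which of the three equalities fails, and the observation that the $1/N(x,\pi)$ normalization in the third inequality matches the pointwise incoming-transition condition --- is consistent with, and if anything slightly more careful than, the paper's own treatment.
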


\begin{figure}[!ht]
    \centering
    \includegraphics[width=70mm]{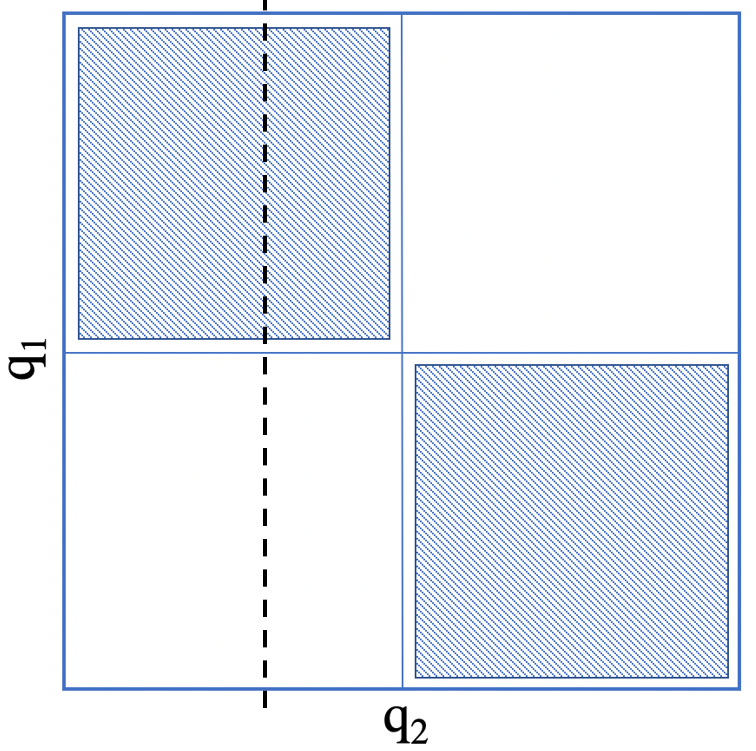}
    \caption{\label{figure:bad_pattern} An example of a pattern that cannot be captured by recursive partitioning. Observations in the white and crosshatched region have different statistics. Any split, such as the black dashed line, result in two sub-partitions that have similar average statistics.} 
\end{figure}

Finally, we need the following lemma for proving that the decision likelihood is increasing in each iteration in \S\ref{subsec:likelihood_increase}.
\begin{lemma}
\label{lem:bestprob}
For any given vector $a: \sum_i a_i = 1$, the solution to the following maximization problem is equal to $a$.
\begin{align*}
    \max_{b}& \quad f(b) = \sum_i a_i\ln{b_i} \\
    \textrm{s.t.}&\quad \sum_i b_i = 1
\end{align*}
\end{lemma}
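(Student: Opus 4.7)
The plan is to treat this as a standard constrained optimization on the simplex, for which two equivalent routes are available: a direct Lagrangian calculation, or an appeal to Gibbs' inequality (non-negativity of the Kullback--Leibler divergence). I would lead with the Lagrangian derivation because it exhibits the optimum constructively, and then confirm optimality (as opposed to merely stationarity) via concavity or via the Gibbs step.

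For the Lagrangian route I would form
\[
L(b,\mu) \;=\; \sum_i a_i \ln b_i \;-\; \mu\Bigl(\sum_i b_i - 1\Bigr),
\]
differentiate with respect to each $b_i$ to obtain the first-order condition $a_i / b_i - \mu = 0$, i.e.\ $b_i = a_i/\mu$, and then substitute into the constraint $\sum_i b_i = 1$ to pin down $\mu = 1$. This yields $b_i = a_i$ for all $i$. To promote this stationary point to a global maximum, I would note that $f(b) = \sum_i a_i \ln b_i$ is concave in $b$ on the positive orthant (its Hessian is diagonal with entries $-a_i/b_i^2 \le 0$), and the feasible set (the probability simplex) is convex, so any interior critical point is a global maximum.

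As a cleaner alternative I would invoke Gibbs' inequality directly: for any two probability vectors $a$ and $b$,
\[
\sum_i a_i \ln \frac{a_i}{b_i} \;\ge\; 0,
\]
with equality if and only if $a = b$. Rearranging gives $\sum_i a_i \ln b_i \le \sum_i a_i \ln a_i$, so $f(b) \le f(a)$ with equality iff $b = a$, which is exactly the claim. Gibbs' inequality itself follows from applying Jensen's inequality to the concave $\ln$ function, or equivalently from the elementary bound $\ln x \le x - 1$.

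The only mildly delicate point, rather than a real obstacle, is the treatment of indices with $a_i = 0$: by the standard convention $0 \cdot \ln 0 = 0$, such indices contribute nothing to the objective and can take $b_i = 0$ at the optimum without loss, so the statement and both proofs go through unchanged after restricting attention to the support of $a$.
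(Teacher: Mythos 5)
Your primary (Lagrangian) derivation is exactly the route the paper takes, arriving at $a_i = \lambda b_i$, $\lambda = 1$, hence $b = a$. Your additions — the concavity check that upgrades stationarity to global optimality, the Gibbs'-inequality alternative, and the remark on indices with $a_i = 0$ — are correct and in fact supply rigor the paper's own proof omits, but the core argument is the same.
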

\begin{proof}
It is a constrained optimization problem that can be solved by maximizing the Lagrangian function.
\begin{align*}
    \Lagr(a,b,\lambda) & =  \sum_i a_i\ln{b_i} - \lambda (\sum_i b_i - 1) \\
    \nabla \Lagr(b,\lambda) & =  0 \\
    \frac{\partial \Lagr}{\partial b_i} & = \frac{a_i}{b_i} - \lambda = 0 \Rightarrow a_i = \lambda b_i \Rightarrow \sum a_i = \lambda \sum b_i\\ 
    \frac{\partial \Lagr}{\partial \lambda} & = \sum_i b_i - 1 = 0 \Rightarrow \sum_i b_i = 1 \\
     &  \Rightarrow \lambda = 1 \Rightarrow b_i = a_i
\end{align*}
\end{proof}

\subsection{Proof of Likelihood Increase}
\label{subsec:likelihood_increase}
In this section, we prove that the likelihood function $\LL(\theta^*_r, \Pii_r)$ increases at each iteration of our recursive partitioning algorithm. Formally, we prove
\begin{equation}
        \LL(\theta^*_r, \Pii_r) \leq \LL(\theta^*_{r+1}, \Pii_{r+1})
\end{equation}
where $\Pii_r$ and $\Pii_{r+1}$ are the discretizations generated by the proposed recursive partitioning in iterations $r$ and $r+1$ respectively, $\theta^*_r = \argmax_{\theta} \LL(\theta, \Pii_r)$, and $\theta^*_{r+1} = \argmax_{\theta} \LL(\theta, \Pii_{r+1})$.

\begin{theorem}
\label{theorem:increasing}
For any candidate additional split, noted by $\{k,q,z\}$, to a discretization $\Pii$, where $k$ is a partition in $\Pii$, $q$ is a feature in $Q$, and $z$ is a value within the range of possible values for $q$ in $k$, the following inequalities hold
\begin{eqnarray}
\label{eq:decision_increase}
\exists \;\; \theta' : \LL_{dc}(\theta^*, \Pii) &\leq \LL_{dc}(\theta', \Pii') \\
\LL_{st}(\Pii) &\leq \LL_{st}(\Pii') \label{eq:transition_increase} 
\end{eqnarray}
where $\Pii' = \Pii + \{k,q,z\}$ is the discretization after adding the candidate split, and $\theta^* = \argmax_{\theta} \LL(\theta,\Pii)$ is the optimal parameters given discretization $\Pii$.
\end{theorem}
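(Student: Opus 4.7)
The plan is to establish each inequality separately by constructing, under the refined partition $\Pii' = \Pii + \{k,q,z\}$, an explicit witness parameter vector that preserves the value of the corresponding likelihood term; since maximizing over the strictly richer $\Pii'$-admissible parameter space can only weakly dominate this witness, both inequalities follow. Let $\pi_p := k$ denote the partition being split and $\pi_l, \pi_r \in \Pii'$ the two resulting sub-partitions.

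For \eqref{eq:transition_increase}, $\LL_{st}$ depends only on $\Pii$ together with a fully flexible $g$ by Assumption \ref{assumption:non_parametric_form}. I would use the mass-preserving refinement of $g$: for every source $(x',\pi',j)$ with $\pi' \neq \pi_p$ and every destination $x \in \X$, set
\[
g(x,\pi_l \mid x',\pi',j;\Pii') \;=\; g(x,\pi_p \mid x',\pi',j;\Pii)\,\frac{N(x,\pi_l)}{N(x,\pi_p)},
\]
with the analogous definition for $\pi_r$; when the source partition is $\pi_l$ or $\pi_r$, copy $g(\cdot \mid x',\pi_p,j;\Pii)$ and apply the same proportional split whenever the destination is $\pi_p$; leave all remaining entries unchanged. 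This preserves row-sums (so $g$ remains a valid transition density on $\Pii'$) and yields $g(\cdot\mid\cdot;\Pii')/N(\cdot;\Pii') = g(\cdot\mid\cdot;\Pii)/N(\cdot;\Pii)$ for every observed transition, so each summand of $\LL_{st}$ is unchanged and the sup-likelihood under $\Pii'$ weakly exceeds that under $\Pii$.

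For \eqref{eq:decision_increase}, I would construct $\theta'$ by copying $\bar u$ from $\theta^*$ on every partition except $\pi_p$, setting $\bar u(x,\pi_l,j;\theta',\Pii') = \bar u(x,\pi_r,j;\theta',\Pii') = \bar u(x,\pi_p,j;\theta^*,\Pii)$ for all $(x,j)$, and pairing these with the mass-preserving $g$ above. The symmetry of the primitives at $\pi_l$ and $\pi_r$, together with direct substitution into \eqref{eq:value-function}--\eqref{eq:choice-specific}, shows that the ansatz $\bar V(x,\pi_l;\theta',\Pii') = \bar V(x,\pi_r;\theta',\Pii') = \bar V(x,\pi_p;\theta^*,\Pii)$ --- with value functions at all other partitions left unchanged --- is a Bellman fixed point under $\Pii'$. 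Uniqueness of the fixed point then identifies this ansatz with the actual value function, so the induced choice probabilities coincide pointwise with those under $(\theta^*,\Pii)$, yielding $\LL_{dc}(\theta',\Pii') = \LL_{dc}(\theta^*,\Pii)$ and hence the claim.

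The main obstacle will be verifying the Bellman fixed-point step cleanly. The continuation-value sum in \eqref{eq:choice-specific} is now taken over a finer partition set, so the $N(\cdot)$ normalization in the transition step must be tracked carefully to ensure that the per-observation rather than partition-aggregated transition probabilities remain invariant across the refinement. The key identity $g(x,\pi_l\mid\cdot;\Pii') + g(x,\pi_r\mid\cdot;\Pii') = g(x,\pi_p\mid\cdot;\Pii)$ makes the Bellman sum collapse back to the one under $\Pii$, which --- combined with the symmetric primitives at $\pi_l,\pi_r$ --- is what keeps the fixed-point ansatz self-consistent and closes the argument.
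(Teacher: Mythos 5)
Your proof is correct for the theorem as stated, but it takes a genuinely different route from the paper's. You build ``replication'' witnesses: a mass-preserving refinement of $g$ and utilities copied from $\pi_p$ onto $\pi_l,\pi_r$, so that the Bellman fixed point, the choice probabilities, and every per-observation transition term are literally unchanged, giving $\LL_{dc}(\theta',\Pii')=\LL_{dc}(\theta^*,\Pii)$ and $\LL_{st}(\Pii')\geq\LL_{st}(\Pii)$ by nesting of models. The paper instead constructs a $\theta'$ whose induced choice probabilities equal the \emph{refined} empirical frequencies $\Pr(j\mid x,\pi_l)$, $\Pr(j\mid x,\pi_r)$ (its Lemma \ref{lem:exp_form}, which also shows the value function is unchanged), so that $\LL_{dc}(\theta',\Pii')-\LL_{dc}(\theta^*,\Pii)=\sum_{x,\pi,j}N(x,\pi,j)\log\bigl(\Pr(j\mid x,\pi)/\Pr(j\mid x,\pi_p)\bigr)\geq 0$ by the Gibbs-type inequality of Lemma \ref{lem:bestprob}; the transition inequality is then obtained by two applications of the log-sum inequality to the empirical counts. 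Your argument is shorter, avoids the fixed-point bookkeeping of Lemma \ref{lem:exp_form}, and is internally cleaner in that the same $\theta_2$ serves both inequalities, so it immediately yields $\LL(\theta',\Pii')=\LL(\theta^*,\Pii)$ and hence the monotonicity of the profiled likelihood. What it gives up is the quantitative content: because your witness only ever achieves equality in \eqref{eq:decision_increase}, it cannot by itself identify \emph{when} the increase is strict, which is exactly what the paper's KL-divergence-shaped gain terms are reused for in Appendix \ref{subsec:likelihood_converge} to prove that the algorithm halts only at a perfect discretization. So your proof suffices for Theorem \ref{theorem:increasing} but would need to be supplemented (essentially by the paper's construction or the log-sum characterization of the equality case) to support Theorem \ref{theorem:convergence}.
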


We prove Inequalities \ref{eq:decision_increase} and \ref{eq:transition_increase} separately. First, in Lemma \ref{lem:exp_form}, we create a new parameter set $\theta'$ for the discretization $\Pii'$ from $\theta^*$ that satisfies a certain inequality. Next, we show that Inequality \ref{eq:decision_increase} holds for the new parameter set $\theta'$ and $\Pii'$. Finally, we prove that for any two partitions $\Pii$ and $\Pii'$ such that $\Pii$ is the parent of $\Pii'$ the inequality \ref{eq:transition_increase} holds. Please note that discretization $\Pii$ is the parent of any discretization that is created by adding additional splits to $\Pii$.

\begin{lemma}
\label{lem:exp_form}
There is a flow utility coefficient set $\theta'$ such that for any $x \in \X$, $q \in \Q$ and $j \in \J$, the following equation holds
\begin{align}
    \notag
    \exp{v(x,\Pii'(q),j;\theta',\Pii')} &= \exp{v(x,\Pii(q),j;\theta^*,\Pii)} 
    \\ &+  \Big[\Pr(j|x,\Pii'(q)) - \Pr(j|x,\Pii(q))\Big]\exp{ \bar{V}(x,\Pii(q);\theta^*,\Pii)} 
    \label{eq:lemma_exp_form}
\end{align}
\end{lemma}
\begin{proof}
Using the equality $\log(a+b) = \log(a)+\log(1+b/a)$, we write Equation \eqref{eq:lemma_exp_form} as follows:
\begin{align}
\notag
    v(x,\Pii'(q),j;\theta',\Pii') &= v(x,\Pii(q),j;\theta^*,\Pii) + \log \Big ( 1 + \frac{\Pr(j|x,\Pii'(q)) - \Pr(j|x,\Pii(q))}{\frac{\exp{v(x,\Pi(q),j;\theta^*,\Pii)}}{\exp{\bar{V}(x,\Pi(q);\theta^*,\Pii) }}} \Big ) \\
    \label{eq:choice-specific-diff}
    & = v(x,\Pii(q),j;\theta^*,\Pii) + \log \Big ( 1 + \frac{\Pr(j|x,\Pii'(q)) - \Pr(j|x,\Pii(q))}{\hat{p}(j|x,\Pii(q);\theta^*,\Pii)} \Big )
\end{align}

Using the formulation of the choice-specific value function (Equation \eqref{eq:choice-specific}), we can write Equation \eqref{eq:choice-specific-diff} in terms of the flow utility and expected value functions as follows:
\begin{align}
\notag
    u(x,\Pii'(q),j;\theta',\Pii') &= u(x,\Pii(q),j;\theta^*,\Pii) \\ 
    \notag
    & + \beta \sum_{x' \in X} \sum_{\pi \in \Pii} \bar{V}(x',\pi;\theta^*,\Pii) g(x',\pi|x,\Pii(q),j)  \\
    \notag
    & - \beta \sum_{x' \in X} \sum_{\pi \in \Pii'} \bar{V}(x',\pi;\theta',\Pii')  g(x',\pi|x,\Pii'(q),j) \\
    \label{eq:new-utility-equation-tmp}
    & + \log \Big ( 1 + \frac{\Pr(j|x,\Pii'(q)) - \Pr(j|x,\Pii(q))}{\hat{p}(j|x,\Pii(q);\theta^*,\Pii)} \Big )
\end{align}
We assume a non-parametric functional form for the utility function; therefore, we can calculate a set of new utility values for each observable state $\{x,\pi\}$ based on the above equation that satisfies Equation \ref{eq:lemma_exp_form}. However, the right-hand side of Equation \eqref{eq:new-utility-equation-tmp} uses $\theta'$, which we are trying to calculate. If we want to use Equation \eqref{eq:new-utility-equation-tmp}, we need to prove that this equation has a unique answer. Alternatively, we prove that if Equation \ref{eq:lemma_exp_form} holds, the value functions in the new partitioning and old partitioning are equal.
\begin{align}
\notag
    \bar{V}&(x,\Pii'(q);\theta',\Pii') \\
    \notag
    & = \log \sum_{j \in \J} \exp{v(x,\Pii'(q),j;\theta',\Pii')} \\
    \notag
     &= \log \sum_{j \in \J} \Big ( \exp{v(x,\Pii(q),j;\theta^*,\Pii)} +  \big[\Pr(j|x,\Pii'(q)) - \Pr(j|x,\Pii(q))\big]\bar{V}(x,\Pii(q);\theta^*,\Pii)\Big )\\
     \notag
     & = \log \Big ( \sum_{j \in \J}  \exp{v(x,\Pii(q),j;\theta^*,\Pii)} + \bar{V}(x,\Pii(q);\theta^*,\Pii) \sum_{j \in \J}  \big[\Pr(j|x,\Pii'(q)) - \Pr(j|x,\Pii(q))\big]\Big )\\
     \notag
     & = \log \sum_{j \in \J}  \exp{v(x,\Pii(q),j;\theta^*,\Pii)} \\
     \label{eq:value-function-equal}
     & = \bar{V}(x,\Pii(q);\theta^*,\Pii) 
\end{align}
where the equality from line $3^{rd}$ to line $4^{th}$ comes from $\sum_{j \in \J}  \big[\Pr(j|x,\Pii'(q)) - \Pr(j|x,\Pii(q))\big] = 0$.

We can now write Equation \eqref{eq:new-utility-equation-tmp} as:
\begin{align}
\notag
    u(x,\Pii'(q),j;\theta',\Pii') &= u(x,\Pii(q),j;\theta^*,\Pii) \\ 
    \notag
    & + \beta \sum_{x' \in X} \sum_{\pi \in \Pii} \bar{V}(x',\pi;\theta^*,\Pii) g(x',\pi|x,\Pii(q),j)  \\
    \notag
    & - \beta \sum_{x' \in X} \sum_{\pi \in \Pii'} \bar{V}(x',\Pii(\pi);\theta^*,\Pii)  g(x',\pi|x,\Pii'(q),j) \\
    \label{eq:new-utility-equation}
    & + \log \Big ( 1 + \frac{\Pr(j|x,\Pii'(q)) - \Pr(j|x,\Pii(q))}{\hat{p}(j|x,\Pii(q);\theta^*,\Pii)} \Big )
\end{align}
where $\Pii(\pi)$ is the partition in $\Pii$ that includes all $\pi \in \Pii'$. Equation \ref{eq:new-utility-equation} is not dependent on $\theta'$, and can be used to calculate a set of utility function values ($\theta'$) for partitioning $\Pii'$. We can show that if one substitute $u(x,\Pii'(q),j;\theta',\Pii')$ from Equation \ref{eq:new-utility-equation} into the left hand side of Equation \ref{eq:lemma_exp_form}, then we should see that the Equation holds. Essentially, we are deriving a $\theta'$ from the current partition $\Pii$ and $\theta^*$ such that Equation \ref{eq:lemma_exp_form} holds.
\end{proof}

\begin{proof}[Proof of Inequality \eqref{eq:decision_increase}]
We show in Lemma \ref{lem:exp_form} that there is a $\theta'$ such that Equality \ref{eq:lemma_exp_form} holds, and for any $x \in X$ and $q \in \Q$ we have $\bar{V}(x,\Pii'(q);\theta',\Pii')=\bar{V}(x,\Pii(q);\theta^*,\Pii)$. 
Since $\bar{V}(.;\theta^*,\Pii)$ is a fixed point solution to the standard contraction mapping problem for the Bellman equation, $\bar{V}(.;\theta',\Pii')$ generated in Lemma \ref{lem:exp_form} is a solution to the contraction mapping in Bellman equation as well. Additionally, we have

\begin{align}\notag
    \LL_{dc}(\theta',\Pii') &= \sum_{i=1}^N \sum_{t=1}^T \log \hat{p}(d_{it}|x_{it},\Pii'(q_{it});\theta',\Pii') \\
\notag
    & = \sum_{i=1}^N \sum_{t=1}^T\log \frac{\exp{v(x_{it},\Pii(q_{it}),d_{it};\theta',\Pii')}}{\exp{\bar{V}(x_{it},\Pii(q_{it});\theta',\Pii') }} \\
\notag
    & = \sum_{i=1}^N \sum_{t=1}^T\log \frac{\exp{v(x_{it},\Pii(q_{it}),d_{it};\theta',\Pii')}}{\exp{\bar{V}(x_{it},\Pii(q_{it});\theta^*,\Pii) }} \\
    \label{eq:llproofstep}
    & = \sum_{i=1}^N \sum_{t=1}^T\log \Big ( \frac{\exp{v(x_{it},\Pii(q_{it}),d_{it};\theta^*,\Pii)}}{\exp{\bar{V}(x_{it},\Pii(q_{it});\theta^*,\Pii) }} + \Pr(d_{it}|x_{it},\Pii'(q_{it})) - \Pr(d_{it}|x_{it},\Pii(q_{it})) \Big )
\end{align}

Please note the equality between line two and line three is derived from Equation \eqref{eq:value-function-equal}. Also, we use Equation \eqref{eq:lemma_exp_form} to go from three and line four. Using $\log(a+b) = \log(a) + \log(1+b/a)$, we have the following
\begin{align}
\notag
    & \LL_{dc}(\theta',\Pii')  = \sum_{i=1}^N \sum_{t=1}^T\log \Big ( \frac{\exp{v(x_{it},\Pii(q_{it}),d_{it};\theta^*,\Pii)}}{\exp{\bar{V}(x_{it},\Pii(q_{it});\theta^*,\Pii) }} + \Pr(d_{it}|x_{it},\Pii'(q_{it})) - \Pr(d_{it}|x_{it},\Pii(q_{it})) \Big ) \\
    \notag
    & = \sum_{i=1}^N \sum_{t=1}^T \log \Big ( \frac{\exp{v(x_{it},\Pii(q_{it}),d_{it};\theta^*,\Pii)}}{\exp{\bar{V}(x_{it},\Pii(q_{it});\theta^*,\Pii) }}  \Big ) + \sum_{i=1}^N \sum_{t=1}^T \log \Big ( 1 + \frac{\Pr(d_{it}|x_{it},\Pii'(q_{it})) - \Pr(d_{it}|x_{it},\Pii(q_{it}))}{\frac{\exp{v(x_{it},\Pii(q_{it}),d_{it};\theta^*,\Pii)}}{\exp{\bar{V}(x_{it},\Pii(q_{it});\theta^*,\Pii) }}} \Big ) \\
    \notag
    & =  \LL_{dc}(\theta^*,\Pii) + \sum_{i=1}^N \sum_{t=1}^T \log \Big ( 1 + \frac{\Pr(d_{it}|x_{it},\Pii'(q_{it})) - \Pr(d_{it}|x_{it},\Pii(q_{it}))}{\hat{p}(d_{it}|x_{it},\Pii(q_{it});\theta^*,\Pii)} \Big ) \\
    &    \label{eq:likelihooddiff} \Rightarrow
    \LL_{dc}(\theta',\Pii') - \LL_{dc}(\theta^*,\Pii) = \sum_{i=1}^N \sum_{t=1}^T \log \Big ( 1 + \frac{\Pr(d_{it}|x_{it},\Pii'(q_{it})) - \Pr(d_{it}|x_{it},\Pii(q_{it}))}{\hat{p}(d_{it}|x_{it},\Pii(q_{it});\theta^*,\Pii)} \Big )
\end{align}
We will show that the right-hand side of Equation \ref{eq:likelihooddiff} is positive; thus, proving that the likelihood is increasing. Assuming $NT \to \infty$, and our predicted choice probabilities are consistent, concludes $\hat{p}(d_{it}|x_{it},\Pii(q_{it});\theta^*,\Pii)=\Pr(d_{it}|x_{it},\Pii(q_{it}))$\footnote{We can also conclude this equality since we assume a fully non-parametric form for utility function in Assumption \ref{assumption:non_parametric_form}.}. Replacing the predicted choice probability with its counter-part conditional choice probability in Equation \ref{eq:likelihooddiff} yields
\begin{align}
\notag
      &\sum_{i=1}^N \sum_{t=1}^T \log \Big ( 1  + \frac{\Pr(d_{it}|x_{it},\Pii'(q_{it})) - \Pr(d_{it}|x_{it},\Pii(q_{it}))}{\hat{p}(d_{it}|x_{it},\Pii(q_{it});\theta^*,\Pii)} \Big ) \\
\notag
      & = \sum_{i=1}^N \sum_{t=1}^T \log \Big ( 1 + \frac{\Pr(d_{it}|x_{it},\Pii'(q_{it})) - \Pr(d_{it}|x_{it},\Pii(q_{it}))}{\Pr(d_{it}|x_{it},\Pii(q_{it}))} \Big) 
    \\ 
    \label{eq:probdiv}
    & = \sum_{i=1}^N \sum_{t=1}^T \log \frac{\Pr(d_{it}|x_{it},\Pii'(q_{it}))}{\Pr(d_{it}|x_{it},\Pii(q_{it}))}
\end{align}

The value inside the summation in Equation \eqref{eq:probdiv} is equal to zero for all the observations, except for those where $q_{it}$ lands in the newly split partition. Let us call the partition before the split $\pi_p \in \Pii$, and the two resulting partitions $\{\pi_l, \pi_r\} \in \Pii'$. Also let $N(x,\pi)$ denote the number of observations where $x_{it} = x$ and $\Pii'(q_{it}) = \pi$, and $N(x,\pi,j)$ denote the number where in addition to the aforementioned conditions $d_{it} = j$. We can write Equation \eqref{eq:probdiv} as follows:
\begin{align*}
    \sum_{i=1}^N \sum_{t=1}^T &\log \frac{\Pr(d_{it}|x_{it},\Pii'(q_{it}))}{\Pr(d_{it}|x_{it},\Pii(q_{it}))}  = \sum_{x \in X} \sum_{\pi \in \{\pi_l,\pi_r\}} \sum_{j \in \J} N(x,\pi,j) \log \big ( \Pr(j|x,\pi) - \Pr(j|x,\pi_p) \big)
    \\ & =  \sum_{x \in X} \sum_{\pi \in \{\pi_l,\pi_r\}} N(x,\pi) \sum_{j \in \J} \big ( \Pr(j|x,\pi) \log \Pr(j|x,\pi) - \Pr(j|x,\pi) \log \Pr(j|x,\pi_p) \big )
\end{align*}
According to Lemma \ref{lem:bestprob}, $\sum_{j \in \J} \Pr(j|x,\pi) \log \Pr(j|x,\pi) \geq \sum_{j \in \J} \Pr(j|x,\pi) \log \Pr(j|x,\pi_p)$. This inequality is strict if there is a $j \in \J$ such that $\Pr(j|x,\pi)\neq \Pr(j|x,\pi_p)$ for any $x \in X$ and $\pi \in \{\pi_l, \pi_r\}$. Thus, Equation \eqref{eq:probdiv} is greater than or equal to zero, which proves the Inequality \eqref{eq:decision_increase} in Theorem \ref{theorem:increasing}.
\end{proof}


\begin{proof}[Proof of Inequality \eqref{eq:transition_increase}] First, note that discretization $\Pii$ is a parent of discretization $\Pii'$. Based on the definition of parent, for every $\pi \in \Pii$, there is a set of partitions $\{\pi_i\} \in \Pii'$ such that $\bigcup\limits_{i} \pi_i = \pi$. Let us call $\{\pi_i\}$ the child set of $\pi$ in $\Pii'$ and denote it by $\Pii'(\pi)$. First we use the log sum inequality \citep{Cover_Thomas_1991} to prove that for any $\{\pi,\pi'\} \in \Pii$, $\{x,x'\} \in \X$ , and $j \in \J$ the following inequality holds:
\begin{align}
\label{ineq:logsumtotal}
\sum_{\pi_i \in \Pii'(\pi)} \sum_{\pi'_i \in \Pii'(\pi')}  N(x,\pi_i,x',\pi'_i,j) \log \frac{N(x,\pi_i,x',\pi'_i,j)}{N(x,\pi_i)N(x',\pi'_i,j)} \geq N(x,\pi,x',\pi',j) \log \frac{N(x,\pi,x',\pi',j)}{N(x,\pi)N(x',\pi',j)}
\end{align}

We prove this inequality by applying the log sum inequality twice. First for a given $\pi_i \in \Pii'(\pi)$ the following holds according to log sum inequality\footnote{We have $\sum_{\pi'_i \in \Pii'(\pi')}  N(x,\pi_i,x',\pi'_i,j) = N(x,\pi_i,x',\pi',j)$, and $\sum_{\pi'_i \in \Pii'(\pi')}  N(x',\pi'_i,j) = N(x',\pi',j)$, }.
\begin{align}
\label{eq:logsum_inequality_1}
\sum_{\pi'_i \in \Pii'(\pi')}  N(x,\pi_i,x',\pi'_i,j) \log \frac{N(x,\pi_i,x',\pi'_i,j)}{N(x',\pi'_i,j)} \geq N(x,\pi_i,x',\pi',j) \log \frac{N(x,\pi_i,x',\pi',j)}{N(x',\pi',j)}
\end{align}
which by subtracting $N(x,\pi_i,x',\pi',j) \log N(x,\pi_i)$ from both sides changes to
\begin{align}
\label{ineq:logsum1}
\sum_{\pi'_i \in \Pii'(\pi')}  N(x,\pi_i,x',\pi'_i,j) \log \frac{N(x,\pi_i,x',\pi'_i,j)}{N(x',\pi'_i,j) N(x,\pi_i)} \geq N(x,\pi_i,x',\pi',j) \log \frac{N(x,\pi_i,x',\pi',j)}{N(x',\pi',j)N(x,\pi_i)}.
\end{align}

Similarly, according to log sum inequality we have 
\begin{align}
\label{eq:logsum_inequality_2}
\sum_{\pi_i \in \Pii'(\pi)}  N(x,\pi_i,x',\pi',j) \log \frac{N(x,\pi_i,x',\pi',j)}{N(x,\pi_i)} \geq N(x,\pi,x',\pi',j) \log \frac{N(x,\pi,x',\pi',j)}{N(x,\pi)}
\end{align}
which by subtracting $N(x,\pi,x',\pi',j) \log N(x',\pi',j)$ from both sides changes to 
\begin{align}
\label{ineq:logsum2}
\sum_{\pi_i \in \Pii'(\pi)}  N(x,\pi_i,x',\pi',j) \log \frac{N(x,\pi_i,x',\pi',j)}{N(x,\pi_i)N(x',\pi',j)} \geq N(x,\pi,x',\pi',j) \log \frac{N(x,\pi,x',\pi',j)}{N(x,\pi)N(x',\pi',j)}.
\end{align}

By merging inequalities \ref{ineq:logsum1} and \ref{ineq:logsum2} we have
\begin{align*}
\sum_{\pi_i \in \Pii'(\pi)} \sum_{\pi'_i \in \Pii'(\pi')}  & N(x,\pi_i,x',\pi'_i,j) \log \frac{N(x,\pi_i,x',\pi'_i,j)}{N(x,\pi_i)N(x',\pi'_i,j)} \\ &\geq \sum_{\pi_i \in \Pii'(\pi)}  N(x,\pi_i,x',\pi',j) \log \frac{N(x,\pi_i,x',\pi',j)}{N(x,\pi_i)N(x',\pi',j)} \\ &\geq N(x,\pi,x',\pi',j) \log \frac{N(x,\pi,x',\pi',j)}{N(x,\pi)N(x',\pi',j)}
\end{align*}
which concludes inequality \ref{ineq:logsumtotal}. We can prove the lemma by summing this inequality over all $\{\pi,\pi'\} \in \Pii$, $\{x,x'\} \in \X$ and $j \in \J$.
\end{proof}
We proved both inequalities in the theorem \ref{theorem:increasing}. Next we prove that the equality happens if and only if our algorithm reaches a perfect discretization.
\subsection{The proof of convergence to perfect discretization}
\label{subsec:likelihood_converge}
In this section we prove that our proposed algorithm stops once it reaches a perfect discretization.
\begin{theorem}
\label{theorem:convergence}
The recursive partitioning algorithm discussed in $\S$\ref{sec:our_approach} stops once it find a perfect discretization, i.e., $\FF(\Pii_{r})=\FF(\Pii_{r+1})$ if and only if $\Pii_{r}$ is a perfect discretization.
\end{theorem}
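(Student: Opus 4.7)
The plan is to prove the two directions separately, leveraging the machinery already developed for Theorem~\ref{theorem:increasing}.

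For the ``if'' direction (perfect $\Rightarrow$ equality), suppose $\Pii_r$ is a perfect discretization and consider an arbitrary candidate split $\{k,q,z\}$ producing sub-partitions $\pi_l,\pi_r$ of some $\pi_p \in \Pii_r$. By the definition of perfect discretization, for every $x,x' \in \X$, every $\pi' \in \Pii_r$, and every $j \in \J$ we have $\Pr(j|x,\pi_l)=\Pr(j|x,\pi_r)=\Pr(j|x,\pi_p)$ and the analogous equalities for the outgoing and incoming transition probabilities. Plugging these into the closed-form increment for $\FF_{dc}$ derived in the proof of inequality~\eqref{eq:decision_increase} collapses every term $\Pr(j|x,\pi)\log\bigl(\Pr(j|x,\pi)/\Pr(j|x,\pi_p)\bigr)$ to zero, and plugging them into the two log-sum steps~\eqref{eq:logsum_inequality_1} and~\eqref{eq:logsum_inequality_2} turns each inequality into an equality (the log-sum inequality is tight precisely when the summand ratios agree). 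Hence $\FF(\Pii_{r+1})-\FF(\Pii_r)=0$ for every candidate split, so the greedy step yields $\FF(\Pii_r)=\FF(\Pii_{r+1})$.

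For the ``only if'' direction I will argue the contrapositive: if $\Pii_r$ is not a perfect discretization, then by Assumption~\ref{assumption:no_bad_pattern} and Corollary~\ref{col:assumption} there exists a single split $\{k,q,z\}$ whose children $\pi_l,\pi_r$ violate at least one of the three equalities in Definition~1. I then show that any such violation strictly increases $\FF$. If the violation is in decision probabilities, Lemma~\ref{lem:bestprob} gives a \emph{strict} inequality, so $\FF_{dc}$ strictly rises while $\FF_{tr}$ cannot fall (by inequality~\eqref{eq:transition_increase}). If the violation is in outgoing transition probabilities from $\pi_p$, the log-sum inequality~\eqref{eq:logsum_inequality_1} becomes strict for the corresponding $(x,x',j)$ triple; if it is in incoming transition probabilities to $\pi_p$, then~\eqref{eq:logsum_inequality_2} becomes strict. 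In either case $\FF_{tr}$ strictly rises while $\FF_{dc}$ cannot fall, so with $\lambda>0$ we get $\FF(\Pii_r+\{k,q,z\})>\FF(\Pii_r)$. Since the greedy algorithm always picks the best split, $\FF(\Pii_{r+1})\geq \FF(\Pii_r+\{k,q,z\})>\FF(\Pii_r)$, contradicting the hypothesized equality.

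The main obstacle I anticipate is the bookkeeping in the reverse direction, namely verifying that \emph{each} of the three possible violations in Corollary~\ref{col:assumption} is separately sufficient to produce strict improvement. The decision-probability case is immediate from the strict version of Lemma~\ref{lem:bestprob}. The outgoing-transition case is routine because~\eqref{eq:logsum_inequality_1} is strict whenever the ratios $N(x,\pi_i,x',\pi'_i,j)/N(x',\pi'_i,j)$ are not constant across the children $\pi'_i$. The incoming-transition case is slightly trickier because it runs through~\eqref{eq:logsum_inequality_2}, which demands care with the normalization by $N(x,\pi_i)$; I will translate the inequality $\Pr(x,\pi_l|x',\pi',j)/N(x,\pi_l)\neq \Pr(x,\pi_r|x',\pi',j)/N(x,\pi_r)$ into non-constancy of the relevant ratio of counts and invoke the strict log-sum inequality. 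Once these three sub-cases are handled, the theorem follows from Theorem~\ref{theorem:increasing} combined with the greedy nature of the algorithm.
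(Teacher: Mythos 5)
Your proposal is correct and follows essentially the same route as the paper: it decomposes $\FF$ into its decision and transition parts, uses Lemma~\ref{lem:bestprob} and the (strict) log-sum inequality to characterize exactly when each part's increment vanishes, and invokes Assumption~\ref{assumption:no_bad_pattern} via Corollary~\ref{col:assumption} to guarantee an improving split whenever $\Pii_r$ is not perfect. The paper merely packages your case analysis into the two if-and-only-if statements of Lemmas~\ref{lem:ff_decision_increasing} and~\ref{lem:parent_ff_transition_increasing} before combining them, so the content is the same.
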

We prove this theorem by proving separate lemmas for decision and state transition probabilities. Let us denote the decision and transition part of $\FF(\Pii)$ by $\FF_{dc}(\Pii)$ and $\FF_{tr}(\Pii)$ respectively. 
\begin{lemma}
\label{lem:ff_decision_increasing}
For any candidate additional split $\{k,q,z\}$ to discretization $\Pii$, that splits $\pi_p \in \Pii$ into $\{\pi_l,\pi_r\} \in \Pii'$, we have $\FF_{dc}(\Pii)=\FF_{dc}(\Pii')$ if and only if for all $x \in \X$ and $j \in \J$ the decision probabilities in $\pi_l$ and $\pi_r$ are similar, i.e., $\Pr(j|x,\pi_l) = \Pr(j|x,\pi_r)$.
\end{lemma}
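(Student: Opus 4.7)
The plan is to observe that the split only touches the parent partition $\pi_p$, so the difference $\FF_{dc}(\Pii') - \FF_{dc}(\Pii)$ collapses to a sum of local terms, and then to apply the log-sum inequality with its equality condition to get both directions at once.

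First I would write
\begin{align*}
    \FF_{dc}(\Pii') - \FF_{dc}(\Pii) = \sum_{x \in \X} \sum_{j \in \J} &\left[ N(x,\pi_l,j) \log \frac{N(x,\pi_l,j)}{N(x,\pi_l)} + N(x,\pi_r,j) \log \frac{N(x,\pi_r,j)}{N(x,\pi_r)} \right. \\
    &\left. - N(x,\pi_p,j) \log \frac{N(x,\pi_p,j)}{N(x,\pi_p)} \right],
\end{align*}
using the fact that every term in $\FF_{dc}$ not involving $\pi_p$, $\pi_l$, or $\pi_r$ is unchanged by the split. The conservation identities $N(x,\pi_p,j) = N(x,\pi_l,j) + N(x,\pi_r,j)$ and $N(x,\pi_p) = N(x,\pi_l) + N(x,\pi_r)$ then let me interpret each inner bracket as a two-term log-sum expression.

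Next I would apply the log-sum inequality (the same tool invoked for the transition part in Theorem~\ref{theorem:increasing}) to each fixed $(x,j)$. Specifically, setting $a_l = N(x,\pi_l,j)$, $b_l = N(x,\pi_l)$, $a_r = N(x,\pi_r,j)$, $b_r = N(x,\pi_r)$, the log-sum inequality gives
\begin{equation*}
    a_l \log \frac{a_l}{b_l} + a_r \log \frac{a_r}{b_r} \geq (a_l+a_r) \log \frac{a_l+a_r}{b_l+b_r},
\end{equation*}
with equality if and only if $a_l/b_l = a_r/b_r$, i.e.\ $\Pr(j|x,\pi_l) = \Pr(j|x,\pi_r)$. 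Summing over $x$ and $j$ then gives $\FF_{dc}(\Pii') \geq \FF_{dc}(\Pii)$, and because each summand is non-negative, the global equality $\FF_{dc}(\Pii') = \FF_{dc}(\Pii)$ holds if and only if every term vanishes, which is exactly the claim.

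I do not expect a hard obstacle here; the only subtlety is being careful that the equality characterization in the log-sum inequality must hold \emph{for every} $(x,j)$ simultaneously, so that the local equalities aggregate into the stated conditional-probability equality across all $x \in \X$ and $j \in \J$. Invoking Corollary~\ref{col:assumption} is not needed for this lemma itself; it will only become relevant when combining this result with its transition counterpart to conclude Theorem~\ref{theorem:convergence}.
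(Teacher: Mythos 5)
Your proposal is correct and follows essentially the same route as the paper's own proof: decompose $\FF_{dc}(\Pii')-\FF_{dc}(\Pii)$ into the local terms at $\pi_p$, $\pi_l$, $\pi_r$ and apply the log-sum inequality with its equality condition. If anything, your write-up is slightly more careful than the paper's, which omits the explicit summation over $x$ and $j$ and the observation that non-negativity of each summand is what lets the global equality force every local equality.
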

\begin{proof}
We have
\begin{align*}
    \FF_{dc}(\Pii') - \FF_{dc}(\Pii) &= N(x,\pi_l,j; \Pii) \log \frac{N(x,\pi_l,j; \Pii)}{N(x,\pi_l; \Pii)} + N(x,\pi_r,j; \Pii) \log \frac{N(x,\pi_r,j; \Pii)}{N(x,\pi_r; \Pii)} 
    \\ &- N(x,\pi_p,j; \Pii) \log \frac{N(x,\pi_p,j; \Pii)}{N(x,\pi_p; \Pii)}
\end{align*}
According to log sum inequality the right-hand side of this equality is equal to zero if and only if $\frac{N(x,\pi_r,j; \Pii)}{N(x,\pi_r; \Pii)} = \frac{N(x,\pi_l,j; \Pii)}{N(x,\pi_l; \Pii)}$, which concludes $\Pr(j|x,\pi_l) = \Pr(j|x,\pi_r)$.
\end{proof}

\begin{lemma}
\label{lem:parent_ff_transition_increasing}
For any candidate additional split $\{k,q,z\}$ to discretization $\Pii$, that splits $\pi_p \in \Pii$ into $\{\pi_l,\pi_r\} \in \Pii'$, we have $\FF_{tr}(\Pii)=\FF_{tr}(\Pii')$ if and only if for all  $\{x,x'\} \in \X$, $\pi' \in \Pii'$, and $j \in \J$ the following equations hold
\begin{align*}
    \Pr(x',\pi'|x,\pi_l,j)&=\Pr(x',\pi'|x,\pi_r,j)\\
    \frac{\Pr(x,\pi_l|x',\pi',j)}{N(x,\pi_l)}&=\frac{\Pr(x,\pi_r|x',\pi',j)}{N(x,\pi_r)}
\end{align*}
\end{lemma}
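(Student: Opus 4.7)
The plan is to exploit the decomposition of $\FF_{tr}(\Pii') - \FF_{tr}(\Pii)$ into a sum of non-negative terms that was already developed in the proof of inequality \eqref{eq:transition_increase}, and then track precisely when each term is tight. Recall that inequality was obtained by two successive applications of the log-sum inequality (equations \eqref{eq:logsum_inequality_1} and \eqref{eq:logsum_inequality_2}). The log-sum inequality $\sum_i a_i \log(a_i/b_i) \geq (\sum_i a_i) \log(\sum_i a_i / \sum_i b_i)$ is saturated if and only if the ratios $a_i/b_i$ are all equal, so for the forward direction I only need to read off those equality conditions; for the reverse direction I note that the same conditions force every log-sum step to be an equality, hence $\FF_{tr}(\Pii) = \FF_{tr}(\Pii')$.

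First I would isolate which terms actually change. Since $\Pii'$ differs from $\Pii$ only in splitting $\pi_p$ into $\{\pi_l,\pi_r\}$, for every $\pi \in \Pii \setminus \{\pi_p\}$ we have $\Pii'(\pi) = \{\pi\}$, so the corresponding log-sum inequalities are trivial equalities. The only non-trivial applications involve $\pi_p$ as the source partition, as the destination partition, or both. In each such case, $\Pii'(\pi_p) = \{\pi_l,\pi_r\}$ and the log-sum step reduces to comparing two summands against their aggregate.

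Next I would translate the equality conditions into the statement of the lemma. The application of \eqref{eq:logsum_inequality_1} with $\pi' = \pi_p$ is tight precisely when $\frac{N(x,\pi_i,x',\pi_l,j)}{N(x',\pi_l,j)} = \frac{N(x,\pi_i,x',\pi_r,j)}{N(x',\pi_r,j)}$ for all $x,x',\pi_i,j$; dividing numerator and denominator by the relevant totals, this is exactly $\Pr(x,\pi_i \mid x',\pi_l,j) = \Pr(x,\pi_i \mid x',\pi_r,j)$, the first condition (after renaming $\pi_i \leftrightarrow \pi'$ and swapping the roles of the primed and unprimed coordinates, because transitions are bidirectional in the sum). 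Similarly, \eqref{eq:logsum_inequality_2} with $\pi = \pi_p$ is tight precisely when $\frac{N(x,\pi_l,x',\pi',j)}{N(x,\pi_l)} = \frac{N(x,\pi_r,x',\pi',j)}{N(x,\pi_r)}$, which, using $N(x,\pi,x',\pi',j) = N(x',\pi',j)\Pr(x,\pi \mid x',\pi',j)$, is exactly the second condition. Since all the non-trivial contributions to $\FF_{tr}(\Pii') - \FF_{tr}(\Pii)$ are non-negative and come from one of these two types of log-sum applications, their total equals zero iff each individual application is tight, iff both conditions hold.

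The main obstacle I anticipate is purely bookkeeping: the case $\pi = \pi' = \pi_p$ produces four new terms $(\pi_l,\pi_l), (\pi_l,\pi_r), (\pi_r,\pi_l), (\pi_r,\pi_r)$ that combine via \emph{both} log-sum inequalities chained together, so I must verify that the combined equality condition is implied by (and implies) the two stated conditions without introducing any hidden extra requirement. This follows because the chained inequality is tight iff each link is tight, and the links fall into the two families already characterized above. With that checked, the if-and-only-if statement follows directly.
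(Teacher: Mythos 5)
Your proposal is correct and follows essentially the same route as the paper: both decompose the change in $\FF_{tr}$ induced by the single split into non-negative contributions via the log-sum inequality (the same two applications used to prove inequality \eqref{eq:transition_increase}) and read off the equality conditions, which are tight precisely when the outgoing and normalized incoming transition probabilities agree across $\pi_l$ and $\pi_r$. If anything, your explicit handling of the $\pi_p\to\pi_p$ self-transition terms is more careful than the paper's presentation, which treats the ``transition-to'' and ``transition-from'' groupings separately without spelling out that case.
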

\begin{proof}

\begin{figure}[!ht]
    \centering
    \subfloat[Transition from the new sub-partitions ]{{\includegraphics[width=7.5cm]{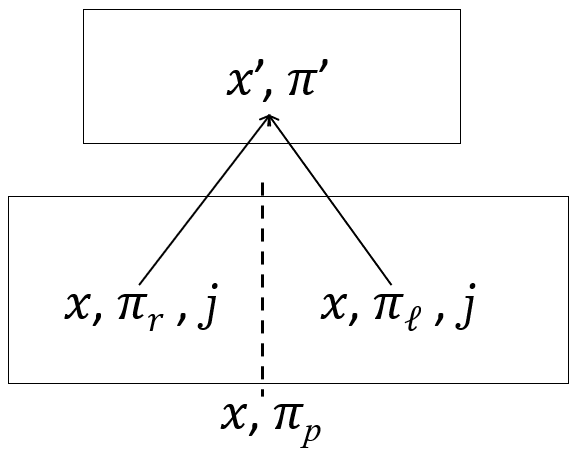} }}
    \qquad
    \subfloat[Transition to the new sub-partitions]{{\includegraphics[width=7.5cm]{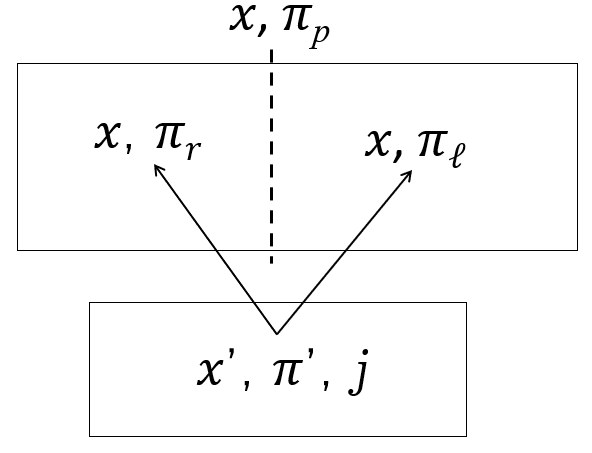} }}
    \caption{\label{figure:transition_proof} The change in likelihood from transition-to and transition-from perspective.} 
\end{figure}

We proved in Web Appendix\ref{subsec:likelihood_increase} that likelihood, $\LL(\theta_t)$, increases with any additional split. We assumed a completely non-parametric form for the state transition part of the likelihood function. Therefore the state transition of the likelihood function and recursive partitioning objective function are the same, i.e., $\LL_{tr}(\Pii)=\FF_{tr}(\Pii)$. Here we prove that the increment in likelihood, and consequently in $\FF(\theta_t)$, is equal to zero if and only if Lemma's equations hold. The split changes the likelihood by changing the transition-to and average transition-from probabilities presented in figure \ref{figure:transition_proof}. We can calculate the changes in likelihood with respect to each of these changes separately. First, according to (a) in figure \ref{figure:transition_proof} we have
\begin{align*}
    \FF(\Pii')-\FF(\Pii) = \sum_{x,x' \in \X} \sum_{\pi' \in \Pii'} \sum_{j \in \J} N(x',\pi') \Bigg ( & \frac{N(x',\pi',x,\pi_r,j)}{N(x',\pi')} \log \frac{N(x',\pi',x,\pi_r,j)}{N(x',\pi')N(x,\pi_r,j)}
    \\  + &\frac{N(x',\pi',x,\pi_l,j)}{N(x',\pi')} \log \frac{N(x',\pi',x,\pi_l,j)}{N(x',\pi')N(x,\pi_l,j)} 
    \\- &\frac{N(x',\pi',x,\pi_p,j)}{N(x',\pi')} \log \frac{N(x',\pi',x,\pi_p,j)}{N(x',\pi')N(x,\pi_p,j)} \Bigg )
\end{align*}

According to log sum inequality the term in the parentheses is greater or equal to zero. The left-hand side is equal to zero if and only if $\frac{N(x',\pi',x,\pi_l,j)}{N(x,\pi_l,j)} = \frac{N(x',\pi',x,\pi_r,j)}{N(x,\pi_r,j)} = \frac{N(x',\pi',x,\pi_p,j)}{N(x,\pi_p,j)}$, for every  $\{x,x'\} \in \X$, $j \in \J$ and $\pi' \in \Pii'$. This concludes the first equation of the lemma.

We can conclude the second equation similarly with (b) in figure \ref{figure:transition_proof} as the following
\begin{align*}
    \FF(\Pii')-\FF(\Pii) = \sum_{x,x' \in \X} \sum_{\pi' \in \Pii'} \sum_{j \in \J}  N(x',\pi', j) \Bigg ( & \frac{N(x,\pi_r,x',\pi',j)}{N(x',\pi',j)} \log \frac{N(x,\pi_r,x',\pi',j)}{N(x,\pi_r)N(x',\pi',j)}
    \\  + &\frac{N(x,\pi_l,x',\pi',j)}{N(x',\pi',j)} \log \frac{N(x,\pi_l,x',\pi',j)}{N(x,\pi_l)N(x',\pi',j)} 
    \\- &\frac{N(x,\pi_p,x',\pi',j)}{N(x',\pi',j)} \log \frac{N(x,\pi_p,x',\pi',j)}{N(x,\pi_p)N(x',\pi',j)}\Bigg )
\end{align*}

Again, according to log sum inequality the term in the parentheses is greater or equal to zero. The left-hand side is equal to zero if and only if $\frac{N(x',\pi',x,\pi_l,j)}{N(x,\pi_l)N(x',\pi',j)} = \frac{N(x',\pi',x,\pi_r,j)}{N(x,\pi_r)N(x',\pi',j)} = \frac{N(x',\pi',x,\pi_p,j)}{N(x,\pi_p)N(x',\pi',j)}$, for every  $\{x,x'\} \in \X$, $j \in \J$ and $\pi' \in \Pii'$. This concludes the second equation of the lemma.
\end{proof}
\begin{proof}[Proof for theorem \ref{theorem:convergence}]
Now we prove theorem \ref{theorem:convergence} using lemmas \ref{lem:parent_ff_transition_increasing} and \ref{lem:ff_decision_increasing}. Assume that our algorithm stops at iteration $r$. Given that the algorithm stops once no split increase the $\FF$ function, for any additional split that generates candidate partition $\Pii'$, we have $\FF(\Pii') = \FF(\Pii_r)$. Therefore, according to lemmas \ref{lem:parent_ff_transition_increasing} and \ref{lem:ff_decision_increasing} for any additional split to $\Pii_r$ the following equalities hold for all $\{x,x'\} \in \X$, $\pi' \in \Pii_r$ and $j \in \J$.
\begin{align*}
    \Pr(d|x,\pi_l) &= \Pr(d|x,\pi_r)\\
    \Pr(x',\pi'|x,\pi_l,j) &=\Pr(x',\pi'|x,\pi_r,j)\\
    \frac{\Pr(x,\pi_l|x',\pi',j)}{N(x,\pi_l)} &= \frac{\Pr(x,\pi_r|x',\pi',j)}{N(x,\pi_r)}\\
\end{align*}
which concludes that $\Pii_r$ is a perfect discretization according to corollary \ref{col:assumption}.
\end{proof}

\section{Hyper-parameter Optimization Experiments}
\label{app:hyp_expt}
\subsection{Experiment 1: Scoring function comparison}
We now conduct a series of numerical experiments to examine the importance of hyperparameter tuning. Specifically, we focus on the choice of $\lambda_{rel}$, which is unique to our dynamic setting and captures the relative importance of the choice probabilities and state transitions in designing the partition. Recall that when $\lambda_{rel}$ is large, the recursive partitioning algorithm puts a higher weight on state transitions (compared to choice probabilities) and vice versa. Tuning this hyperparameter should, therefore, allow us to weigh the state transition data less when state transitions are noisy and/or less informative compared to choice probabilities (since higher $\lambda_{rel}$ in such cases would lead the algorithm to pick up noise as a signal in its discretization). 

In general, we expect the choice of $\lambda_{rel}$ to be important when: (1) the dimensionality of $\Q$ is high since it is much easier for the algorithm to find noisy patterns in such cases (because there are a lot more variables for a potential split), and (2) when the number of observations is relatively small, i.e., finite samples, because there may not be enough data in each possible partition to pick up patterns accurately. In the rest of this section, we therefore explore the sensitivity of the partitioning procedure to the choice of $\lambda_{rel}$ for different data-generating processes and data sizes. 

\subsubsection{Data generating process} 
For these experiments, we use the Rust bus engine setting. Therefore, the data-generating process in these simulations is similar to the first study, with small differences, as described below.

\begin{enumerate}
\item First, the dimensionality of $\Q$ is 30, but only the first 10 variables affect the data-generating process, and the rest are irrelevant. We randomly discretize $\Q$ into 15 partitions using the process explained in Web Appendix$\S$\ref{ssec:random_partition_generator}. We use this process to generate 100 different discretizations, i.e., 100 cases, each with a different discretization of $\Q$. 

\item We then allow the replacement cost in each partition in each case to be a function of the partition in $\Pii^*$ as shown in Web Appendix$\S$\ref{ssec:random_partition_generator}. 
\end{enumerate}

Next, for each of the 100 discretizations, we vary the mileage transition model, the state transition model in $\Q$, and the amount of data available in two ways each, which gives us eight different data-generating scenarios for each of the discretizations. These are discussed below. 
\begin{enumerate}[resume]
\item We consider two cases for $f_{tr}$: (i) $f_{tr}(\pi) = 1$ for all 15 partitions (Similar mileage transition case), and (ii) $f_{tr}(\pi)$ is a random number from the set $\{0,1,2,3\}$ (Dissimilar mileage transition case). 
\item We consider two types of state transition models for $\Q$:
 \squishlist
    \item Random transition: Agents' transitions in the $\Q$-space are completely random. In each period, an agent randomly transitions from one partition to another such that: $\Pii^*(q_{it}) \perp \Pii^*(q_{it+1})$.

    \item Sparse transition: After each period, the agent remains in the same partition with probability $1/3$ and moves to one of the next two partitions with the same probability, and the ordinality/ordering of partitions is randomly chosen in each simulation.

\squishend
Variation across these two dimensions allows us to vary the amount of information available in the state transition. For example, there is less information in the state transition in the case where the transitions across partitions in $\Pii^*$ random and $f_{tr}(\pi) = 1$.

\item We also vary the amount of data available for the analysis by considering two scenarios for the number of observations: (i) 100 buses in 100 periods and (ii) 100 buses in 400 periods. 
\end{enumerate}

For all these scenarios, we examine the impact of different values of $\lambda_{rel}$ by learning the partition from the training data and evaluating it on a validation dataset (that is separately generated and is the same size as the training data) using the score function (as shown in Equation \eqref{eq:score}). Specifically, we consider the following values of $\lambda_{rel} \in \{0,0.2,0.5,1,2,5,100\}$. During this analysis, we fix all the other hyper-parameters as follows: {\it minimum number of observations} = 1, {\it minimum lift} = $10^{-10}$, and {\it total number of partitions} = 15 (which is the number of true partitions in the data). Fixing all the other hyper-parameters allows us to focus on the role of $\lambda_{rel}$. 


\begin{table}[t]
\centering
\footnotesize
\begin{tabular}{c|ccc|ccccccc}
\toprule
Case & Number & Transition & $\Pii^*(Q)$  & \multicolumn{7}{c}{\multirow{2}{*}{Value of $\lambda_{rel}$}}\\
No. & of & in $\Pii^*(Q)$ & affects mileage \\
& Periods  &  space & transition &    0 &    0.2 & 0.5 &  1 &  2 &    5 &  100 \\
\midrule
Case 1 & \multirow{4}{*}{100} & Sparse & Yes & -3778 & -3530 & -3328 & -3151 & -3016 & -2906 & \textbf{-2811} \\
Case 2 &  & Sparse & No & -3772 & -3559 & -3429 & -3343 & -3267 & \textbf{-3224} & -3233 \\
Case 3 &  & Random & Yes & \textbf{-3783} & -3803 & -3786 & -3821 & -3875 & -3972 & -4125 \\
Case 4 &  & Random & No & \textbf{-3715} & -3794 & -3880 & -3973 & -4078 & -4208 & -4389 \\
    \midrule
Case 5 & \multirow{4}{*}{400} & Sparse & Yes & -13478 & -13179 & -12796 & -12375 & -11994 & -11664 & \textbf{-11393} \\
Case 6 &  & Sparse & No & -13839 & -13696 & -13516 & -13323 & -13104 & \textbf{-12994} & -13042 \\
Case 7 &  & Random & Yes & \textbf{-13634} & -13754 & -13854 & -13928 & -13993 & -14158 & -14355 \\
Case 8 &  & Random & No & \textbf{-13673} & -13987 & -14303 & -14619 & -14936 & -15264 & -15643 \\
\bottomrule
\end{tabular}
\caption{\label{table:simulation_2_res_score} The calculated score for different values of $\lambda_{rel}$ in different data-generating processes. A bigger $\lambda_{rel}$ is better when there is more information in the state transition data. Scores are averaged over the 100 different partitioning schemes and shown for the validation data.}
\end{table}


\subsubsection{Results} 
In total, we run $100$ (partitions) $\times 8$ (scenarios)$\times 7$ (values for $\lambda_{rel}$) $= 5600$ simulations and report the score on the validation data in Table  \ref{table:simulation_2_res_score}. The bold numbers in each row denote the best average score (on the validation data) across all the 100 partitions for a given scenario. We find that the optimal value of $\lambda_{rel}$ varies across data-generating processes. As expected, the optimal value for $\lambda_{rel}$ is the largest when the state transition is most informative (i.e. cases 1 and 5, where the transition in $\Pii^*(\Q)$  is Sparse (as opposed to Random), and the partition affects the mileage transition). Similarly, the optimal value $\lambda_{rel}$ is small (or zero) in cases 4 and 8, where the transition in $\Pii^*(\Q)$ is Random and $\Pii^*(\Q)$ does not affect mileage transition (thus not informative for finding the discretization).

In summary, these simulations demonstrate that using the state transition information in the recursive partitioning algorithm helps improve the quality of the partitions, though the extent to which this data helps varies with the data-generating process (and depends on the level of signal available in the state transitions). Thus, it is important to tune $\lambda_{rel}$ in a data-driven fashion to learn how much to weigh state transition data (relative to choice data).

\subsubsection{Random discretization generator algorithm}
\label{ssec:random_partition_generator}
We now present the algorithm used for data generation. The intuition for this algorithm is very similar to recursive partitioning: in each step, we randomly select one of the partitions and split it into two partitions along one of the first 10 variables in $\Q$. Please note that we only used the first 10 variables in $\Q$ for partitioning, and the following 20 variables are added as irrelevant variables to show the robustness of the algorithm to irrelevant variables. In addition, to prevent very small or very large partitions, the random partition selection is weighted by the size of the partition: big partitions are more likely to be selected than smaller partitions. Formally, the random partitioning algorithm is as follows.

\begin{itemize}
    \item Initialize $\Pii_0$ as one partition equal to the full covariate space.
    \item Do the following for 15 rounds.
    \begin{itemize}
        \item Step 1: Randomly select a partition from $\Pii_r$ weighted by $(\frac{1}{\text{partition's total splits}})^2$
        \item Step 2: Randomly select a variable from the first 10 variables
        \item Step 3: Split the selected partition into two from the midpoint along the selected variable. If there is no exact midpoint to split, then pick the smaller of the two integers closest to the midpoint as the splitting point. Go back to Step 1 and repeat this round if the split is not possible\footnote{Recall that the variables in $\Q$ only take integer values from 0 to 10. Therefore, if the selected variable in Step 2, say $q_1$, starts at 9 and ends at 10 in the chosen partition, then further splits along $q_1$ within this partition are not possible.}.
    \end{itemize}
\end{itemize}

The total number of splits for each partition is the total of times their parents have selected to be split. If the total split for a partition is 5, it means that it takes five splits from $\Pii_0$ to get to this partition. Note that the total number of splits of a partition is negatively correlated with the size of the partition.

We also vary the replacement cost for each partition to add a decision variation to the model that is not caused by state transition. The replacement cost of a partition is calculated based on the range of its first 10 variables as the following.
\begin{equation}
    f_{dc}(\pi) = 5 - \frac{\sum_{i=1}^{10} (v_i^{min}(\pi) + v_i^{max}(\pi))}{10}
\end{equation}
where $v_i^{min}(\pi)$ and $v_i^{max}(\pi)$ are the minimum and maximum range of $i^{th}$ variable in partition $\pi$. We choose this formulation for the replacement cost to create a good balance between decision variation caused by state transition or replacement cost. Figure \ref{figure:partitionings_stat} depicts the distribution of total split and replacement costs in the 1500 generated partitions across all the 100 generated discretizations in the second simulation study.
\begin{figure}[!ht]
    \centering
    \includegraphics[width=16cm]{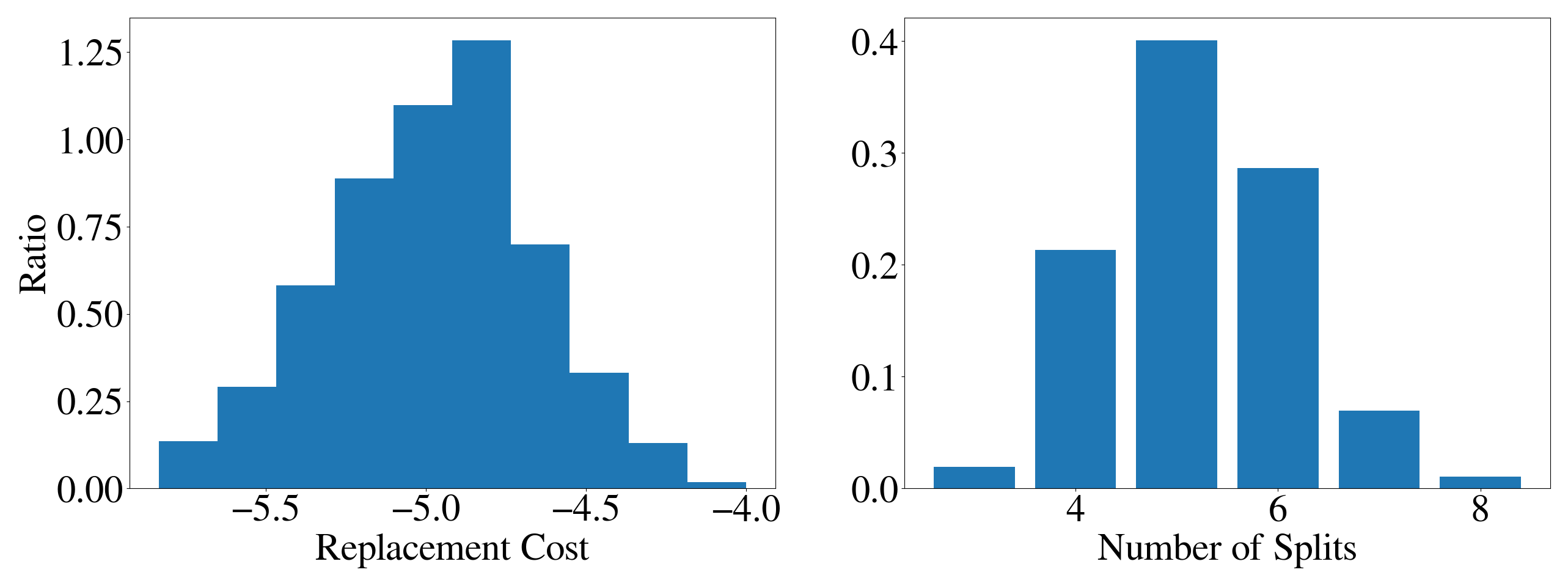}
    \caption{\label{figure:partitionings_stat} The histogram of generated partitions' calculated replacement cost, and number of splits in the 100 rounds of partitioning generation.}
\end{figure}.

\label{sec:random_partition_generator}

\subsection{Experiment 2: Comparison of Parameter Estimates}

We now consider a series of experiments to highlight how the choice of hyper-parameters can affect the quality of the parameter estimates. In particular, we focus on the role of the regularization parameter $\lambda_{rel}$, which captures the relative importance of decision-relevant and state-transition components of the objective function. 

\subsubsection{Data Generating Process}
For this exercise, we consider the second simulation study from $\S$\ref{ssec:finite_horizon_simulations} simulation focus on two contrasting cases—Case 5 and Case 7—that differ in how the high-dimensional state variables influence utility and state transitions. In Case 5, the high-dimensional state variables ($q_{it}$) affect the flow utility from purchase but do not influence the price transition. Also, the state transition in high-dimensional variables is fully random. As a result, the transition component of the objective function contains no useful variation for recovering the partitioning. In contrast, in Case 7, the high-dimensional state variables influence the transition dynamics (specifically, the price evolution) but not the flow utility. Moreover, the state transition in Case 7 is not entirely random: agents move to a random location within the same partition. This structure embeds useful information about the underlying partitioning in the state transitions, which our algorithm can potentially exploit when $\lambda_{rel}$ is large.

To assess the importance of hyper-parameter optimization in different data regimes, we consider four scenarios, where the number of consumers is set to 300, 1000, 3000, and 10,000, respectively. Varying the number of consumers (or observations) allows us to evaluate how estimation quality changes as sample size increases and whether correctly tuning $\lambda_{\text{rel}}$ becomes more critical in data-scarce or data-rich environments.

Finally, for each scenario, we consider three different values of $\lambda_{\text{rel}}$: 0, 1, and 10,000. Each simulation scenario is repeated 10 times, and we report the average results across these replications.

\subsubsection{Results}
The results from this exercise are shown in Table \ref{table:lambda_cases}. For each scenario, we show the results on two key outcomes: the number of irrelevant variables mistakenly selected for discretization and the estimated price coefficient (the true value is 0.5). 

\begin{table}[ht]
\centering

\begin{minipage}{0.48\textwidth}
\centering
\caption*{Case 5}
\begin{tabular}{cccc}
\toprule
Consumers & $\lambda_{\text{rel}}$ & $\hat{\alpha}$ & Wrong Var. \\
\midrule
\multirow{3}{*}{300}   & 0     & -0.44 & 0.80 \\
                       & 1     & -0.44 & 1.00 \\
                       & 10000 & -0.32 & 2.80 \\
\cmidrule(lr){1-4}
\multirow{3}{*}{1000}  & 0     & -0.48 & 0.90 \\
                       & 1     & -0.48 & 0.90 \\
                       & 10000 & -0.33 & 2.50 \\
\cmidrule(lr){1-4}
\multirow{3}{*}{3000}  & 0     & -0.50 & 0.90 \\
                       & 1     & -0.50 & 0.90 \\
                       & 10000 & -0.35 & 2.00 \\
\cmidrule(lr){1-4}
\multirow{3}{*}{10000} & 0     & -0.50 & 0.70 \\
                       & 1     & -0.50 & 0.70 \\
                       & 10000 & -0.36 & 1.80 \\
\bottomrule
\end{tabular}
\end{minipage}
\hfill
\begin{minipage}{0.48\textwidth}
\centering
\caption*{Case 7}
\begin{tabular}{cccc}
\toprule
Consumers & $\lambda_{\text{rel}}$ & $\hat{\alpha}$ & Wrong Var. \\
\midrule
\multirow{3}{*}{300}   & 0     & -0.39 & 3.00 \\
                       & 1     & -0.47 & 1.90 \\
                       & 10000 & -0.49 & 0.80 \\
\cmidrule(lr){1-4}
\multirow{3}{*}{1000}  & 0     & -0.42 & 2.50 \\
                       & 1     & -0.50 & 1.00 \\
                       & 10000 & -0.50 & 1.00 \\
\cmidrule(lr){1-4}
\multirow{3}{*}{3000}  & 0     & -0.41 & 2.80 \\
                       & 1     & -0.49 & 1.00 \\
                       & 10000 & -0.50 & 0.60 \\
\cmidrule(lr){1-4}
\multirow{3}{*}{10000} & 0     & -0.42 & 2.60 \\
                       & 1     & -0.50 & 0.90 \\
                       & 10000 & -0.50 & 0.00 \\
\bottomrule
\end{tabular}
\end{minipage}
\caption{\label{table:lambda_cases} 
Estimation results across different values of $\lambda_{\text{rel}}$ and consumer sample sizes for Case 5 (left) and Case 7 (right). Each cell reports the average of 10 simulation runs. $\hat{\alpha}$ denotes the estimated price coefficient, with the true value set to 0.5. ``Wrong Var.'' counts how many irrelevant variables were incorrectly selected for discretization. }
\end{table}

First, across both cases, we see that the quality of estimates suffers when the number of observations is small (e.g., 300 consumers). However, as the number of observations increases, the bias in the parameter estimates reduces (across all hyper-parameter values). This suggests that hyper-parameter optimization is particularly important when the amount of data (relative to the dimensionality of the setting) is small. 

Next, we discuss the results for each case in detail. In Case 5 (left panel), where the high-dimensional state space influences only the utility function (and hence the decision) and state transitions are fully random, the algorithm does well on both partition recovery and price coefficient estimation when $\lambda_{\text{rel}}$ is 0 or 1. This suggests that the decision-based variation has meaningful information for the discretization task. However, when $\lambda_{\text{rel}} = 10{,}000$, performance is quite poor, even in relatively data-rich settings (e.g., 10,000 customers). This is because the algorithm relies almost entirely on the state transition component of the objective function for discretization, which in this case contains no information about the true partitioning.

In contrast, in Case 7 (right panel), state transitions encode the latent partitioning structure, but the flow utility is not a function of the high-dimensional state.  In this case, the algorithm performs best when $\lambda_{\text{rel}}$ is 10{,}000. When $\lambda_{\text{rel}} = 0$, the algorithm focuses solely on decision-based variation and it struggles to recover the true partitioning and estimate the price coefficient, even with large sample sizes. Note that users' decisions always contain some information on state transition (through the expected future value functions). So, even when $\lambda_{\text{rel}} = 0$, there is some small amount of information on the underlying partition. But this information is relatively weak compared to directly using the state transition information in Case 7.

In summary, we find that the relative weight of the decision and transition components in the objective function should reflect the structure of the environment. As such, it is important to carefully tune $\lambda_{\text{rel}}$ to ensure the robust performance of the algorithm, especially in limited data settings. 



\section{Comparison with Other Approaches}
\label{appsec:comparisons}


We now present some simulations and comparisons to three other natural approaches. First, in $\S$\ref{subsec:RePaD_vs_CCP}, we explore two-step methods based on flexible/semi-parametric CCPs. Next, in $\S$\ref{subsec:RePaD_DimReduction}, we consider unsupervised state-space reduction methods -- $k$-means clustering and PCA. 


\subsection{Flexible Estimation of CCPs}
\label{subsec:RePaD_vs_CCP}

One potential approach to accommodating the high-dimensional state variables is to use the two-step estimation method \citep{Hotz_Miller_1993}. In this method, we first flexibly estimate CCPs and state transition using either flexible parametric or non-parametric approaches. Then, we use an analytical relationship between CCPs, state transitions, and value functions to derive value functions instead of using value function iteration. However, the two-step approach requires that the researcher has access to consistent and high-quality non-parametric estimates of CCPs and state-transitions at each combination of state variables. While CCPs can be flexibly estimated when the number of decisions is relatively small (using semi-parametric approaches such as neural networks), it is not possible to specify and estimate state transition models in high-dimensional settings. Indeed, in the reinforcement learning literature, this challenge is often referred to as the lack of knowledge of the ``model of world'' and it is well-known that model-based roll-out approaches like two-step estimators are not feasible (or error-prone) without oracle knowledge of state transitions in high-dimensional settings \citep{Zeng_etal_2023}. Since it is not possible to specify fully non-parametric state-transitions in the 10-dimensional $q$-space, we do not consider full parameter estimation with CCPs. 
Nevertheless, we present comparisons of CCPs based on the full high-dimensional set of high-variables vs. those based on the discretization. 

We consider the same set of 12 scenarios from our first simulation study in $\S$\ref{ssec:bus_engine} and consider two estimators: (1) a flexible logit with all the state variables and their first-order interactions and (2) a two-layer deep neural network. First, for both estimators, we consider the full set of state variables to predict replacement, i.e., both mileage ($x_{it}$) and the full set of high-dimensional state variables ($q_{it}$). Next, we consider the state space after discretization, i.e., mileage ($x_{it}$) and the lower-dimensional partitions ($\Pii(q_{it})$).

\begin{table}[htp!]
\begin{tabular}{c|ccc|cccc}
\toprule
\textbf{Case} & \textbf{Effect on}  & \textbf{Effect on} & \textbf{Transition} & \multicolumn{4}{c}{\multirow{1}{*}{Model and high-dimensional data type}}\\
\textbf{No.} & \textbf{Purchase}  & \textbf{Price} &    \textbf{in $\Pii^*(\Q)$}  &  Logistic & Neural net &  Logistic & Neural net\\
& \textbf{Utility}   & \textbf{Transition}   & \textbf{Space} & \multicolumn{2}{c}{Without discretizing $\Q$} & \multicolumn{2}{c}{After discretizing $\Q$}\\
\midrule
1 & Dissimilar & Dissimilar & No transition & \makecell{0.2843\\0.0021} & \makecell{0.2844\\0.0021} & \makecell{0.2796\\0.0007} & \makecell{0.2781\\0.0005} \\
2 & Dissimilar & Dissimilar & Random transition & \makecell{0.2925\\0.0096} & \makecell{0.2848\\0.0068} & \makecell{0.2682\\0.0012} & \makecell{0.2640\\0.0003} \\
3 & Dissimilar & Dissimilar & Sparse transition & \makecell{0.2204\\0.0113} & \makecell{0.2125\\0.0083} & \makecell{0.1932\\0.0017} & \makecell{0.1860\\0.0001} \\
4 & Dissimilar & Similar & No transition & \makecell{0.3417\\0.0019} & \makecell{0.3426\\0.0023} & \makecell{0.3364\\0.0005} & \makecell{0.3348\\0.0002} \\
5 & Dissimilar & Similar & Random transition & \makecell{0.3176\\0.0056} & \makecell{0.3158\\0.0049} & \makecell{0.3014\\0.0004} & \makecell{0.2999\\0.0001} \\
6 & Dissimilar & Similar & Sparse transition & \makecell{0.3145\\0.0052} & \makecell{0.3111\\0.0039} & \makecell{0.2999\\0.0007} & \makecell{0.2975\\0.0001} \\
7 & Similar & Dissimilar & No transition & \makecell{0.3158\\0.0013} & \makecell{0.3144\\0.0009} & \makecell{0.3139\\0.0006} & \makecell{0.3116\\0.0002} \\
8 & Similar & Dissimilar & Random transition & \makecell{0.3469\\0.0024} & \makecell{0.3462\\0.0023} & \makecell{0.3435\\0.0010} & \makecell{0.3397\\0.0001} \\
9 & Similar & Dissimilar & Sparse transition & \makecell{0.3054\\0.0039} & \makecell{0.3022\\0.0031} & \makecell{0.3005\\0.0016} & \makecell{0.2947\\0.0003} \\
10 & Similar & Similar & No transition & \makecell{0.3514\\0.0007} & \makecell{0.3503\\0.0004} & \makecell{0.3506\\0.0004} & \makecell{0.3491\\0.0002} \\
11 & Similar & Similar & Random transition & \makecell{0.3501\\0.0006} & \makecell{0.3530\\0.0014} & \makecell{0.3494\\0.0004} & \makecell{0.3478\\0.0001} \\
12 & Similar & Similar & Sparse transition & \makecell{0.3487\\0.0007} & \makecell{0.3475\\0.0003} & \makecell{0.3479\\0.0004} & \makecell{0.3462\\0.0000} \\
\bottomrule
\end{tabular}
\caption{\label{table:ccp_loss} Predictive performance of logistic regression and neural network models using the full set of high-dimensional ($\Q$) and using the discretized $\Q$. Each model is evaluated on a dataset with 40000 observations using both cross-entropy loss (top line) and mean squared error (MSE, bottom line). Cross-entropy loss is computed as $-\frac{1}{n} \sum_{i=1}^{n} \left[ p_i \log \hat{p}_i + (1 - p_i) \log (1 - \hat{p}_i) \right]$, and MSE is $\frac{1}{n} \sum_{i=1}^{n} (p_i - \hat{p}_i)^2$.}
\end{table}

The results from this exercise are presented in Table \ref{table:ccp_loss}. For each scenario, we show how the model-predicted choice probability compares to the real choice probability in the data. We consider two metrics of goodness of fit -- (1) cross-entropy loss and (2) MSE. We make a few observations based on the findings from this table. Firstly, we see that neural networks usually outperform logistic regression, but the difference is marginal. Secondly, and more importantly, we find that across all scenarios, predicted CCPs {\it after} discretization are closer to the true CCPs observed in the data (irrespective of the goodness-of-fit metric used). Partitioning is particularly helpful in cases where the high-dimensional state variables meaningfully affect state transitions and utilities (e.g., scenario 3). In summary, these findings suggest that even flexible estimators struggle to estimate choice probabilities in high-dimensional settings, in spite of the fact that the setting considered is relatively data-rich. Thus, partitioning before estimating CCPs can improve predictive accuracy. As such, we should view RePaD as a complementary tool for enhancing CCP-based methods in high-dimensional settings.

\subsection{Unsupervised Dimensionality Reduction Approaches}
\label{subsec:RePaD_DimReduction}

A key challenge in DDC estimation is the need to reduce the dimensionality of the state space while preserving decision-relevant information. We again consider the data from the first simulation and focus on scenario 2. The findings are similar for other scenarios as well, so we focus on just one scenario for expositional ease. 

We assess how well two naive dimensionality reduction techniques perform in this context: $k$-means clustering and Principal Component Analysis (PCA). 

\subsubsection{\texorpdfstring{$k$}-means clustering}
We first apply $k$-means clustering on the data and partition the state space into four clusters based on $q_{it}$. Note that allowing for four clusters reflects the structure of the data in the first simulation study, where the state space was divided into four meaningful regions. 

However, we find that the clustering results are poor and fail to meaningfully differentiate observations in a way that aligns with decision-making patterns. To quantify the quality of the clustering, we computed two commonly used clustering evaluation metrics: Adjusted Rand Index (ARI) = 0.1654 and Normalized Mutual Information (NMI) = 0.2491. Both values indicate poor clustering performance, suggesting that the $k$-means algorithm failed to recover the underlying decision-relevant partitions. 

\subsubsection{PCA}
Next, we apply PCA on the high-dimensional state variables $q_{it}$ and reduce the dimensionality of this set of state variables from 10 to 2. In Figure \ref{figure:pca}, we visualize how well the two principal components captured the true partitions in our data. As expected, the PCA-transformed dimensions are unable to differentiate between the true partitions. This demonstrates that variance-based low-dimensional projections alone are insufficient for preserving the structure necessary to capture the relevant state-space for the agents' dynamic decisions. 

\begin{figure}[htp!]
    \centering
    \includegraphics[width=150mm]{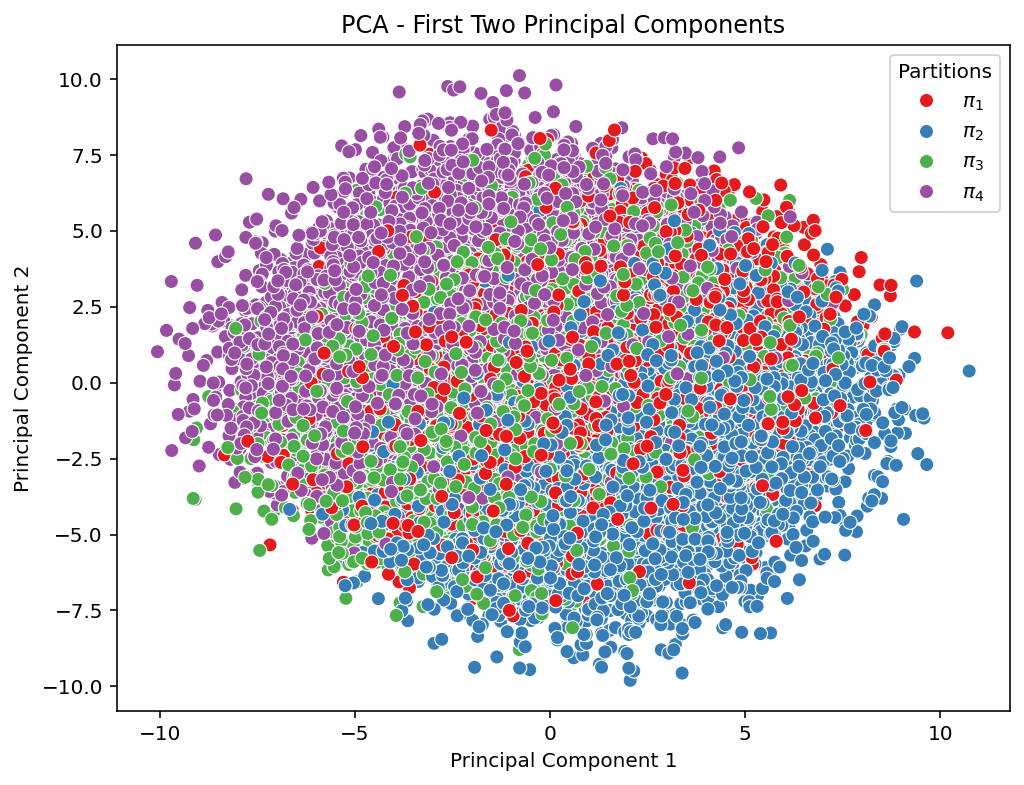}
    \caption{\label{figure:pca} The plot shows the projection of the high-dimensional state space onto the first two principal components obtained via PCA. The original partitions ($\pi_1, \pi_2, \pi_3, \pi_4$) are widely dispersed across the transformed space, indicating that PCA fails to recover the underlying decision and state transition-relevant structure. } 
\end{figure}


Intuitively, since both PCA and $k$-means clustering are unsupervised dimensionality reduction techniques, they fail to accurately capture the variation in state variables that affect state transitions and decisions. Instead, they simply capture the correlations in $q_{it}$, and in many cases, this correlation may be irrelevant to the agents' decisions and state transitions.

\end{appendices}
\end{document}